\newcommand{\functionDot}{\,\cdot\,}
\newcommand{\setn}[1]{[#1]}
\newcommand{\oddsetn}[1]{\setn{#1}^{\mathit{odd}}}
\newcommand{\naturals}{\mathbb{N}}
\newcommand{\setnocond}[1]{\{#1\}}
\newcommand{\setcond}[2]{\{\, #1 \mid #2 \,\}}
\newcommand{\enumerationcond}[3][\stateOrder]{\setcond{#2}{#3}_{#1}}
\newcommand{\enumerationnocond}[2][\stateOrder]{\setnocond{#2}_{#1}}
\newcommand{\wordletter}[2]{{#1}[#2]}
\newcommand{\odd}[1]{\textit{odd}(#1)}
\newcommand{\even}[1]{\textit{even}(#1)}
\newcommand{\vertex}[2]{\langle {#1}, {#2} \rangle}
\newcommand{\edges}[1][E]{#1}
\newcommand{\vertices}[1][V]{#1}
\newcommand{\nvertices}[1][V]{\vertices[#1]_{N}}
\newcommand{\dvertices}[1][V]{\vertices[#1]_{D}}
\newcommand{\reducededges}[1][E]{#1^{e}}
\newcommand{\graph}[2]{\langle {#1}, {#2} \rangle}
\newcommand{\branch}{\gamma}
\newcommand{\alphabet}{\Sigma}
\newcommand{\infwords}{\alphabet^{\omega}}
\newcommand{\langsymb}[0]{\mathsf{L}}
\newcommand{\lang}[1]{\langsymb(#1)}
\newcommand{\inflang}[1]{\langsymb(#1)}
\newcommand{\states}{Q}
\newcommand{\nstates}{\states_{N}}
\newcommand{\dstates}{\states_{D}}
\newcommand{\trans}{\delta}
\newcommand{\cotrans}[1][e]{\delta^{e}}
\newcommand{\ntrans}{\delta_{N}}
\newcommand{\jtrans}{\delta_{J}}
\newcommand{\dtrans}{\delta_{D}}
\newcommand{\initialStates}{I}
\newcommand{\acc}{F}
\newcommand{\run}{\rho}
\newcommand{\initial}{\iota}
\newcommand{\mappingRunBranch}{M}
\newcommand{\transition}[3]{#1 \stackrel{#2}{\longrightarrow} #3}
\newcommand{\stateOrder}{\preccurlyeq}
\newcommand{\range}[1]{\setn{#1}}
\newcommand{\irange}[1]{\range{#1}_{\setminus 0}}
\newcommand{\dagAW}[2]{G_{#1, #2}}
\newcommand{\reduceddagAW}[3][e]{G^{#1}_{#2, #3}}
\newcommand{\idagAW}[3]{G^{#1}_{#2, #3}}
\newcommand{\ireduceddagAW}[4][e]{G^{#1, #2}_{#3, #4}}
\newcommand{\sdagAW}[2]{G^{s}_{#1, #2}}
\newcommand{\lcodagAW}[2]{G^{l}_{#1, #2}}
\newcommand{\aut}[1][A]{\mathcal{#1}}
\newcommand{\autNewInitial}[2]{{#1}[#2]}
\newcommand{\bigO}{\mathsf{O}}
\newcommand{\coveredBy}[2]{\leq^{#1}_{#2}}
\newcommand{\ranking}{\mathfrak{r}}
\newcommand{\levelRanking}{\mathfrak{l}}
\newcommand{\levelRankingFunctions}{\mathfrak{L}}
\newcommand{\priority}{\mathfrak{p}}
\newcommand{\statesWithValidValue}[1]{\alpha(#1)}
\newcommand{\buchi}{B\"uchi\xspace}
\newcommand{\rkc}{\textsf{RKC}\xspace}
\newcommand{\slc}{\textsf{SLC}\xspace}
\newcommand{\size}[1]{|#1|}
\newcommand{\lab}[1]{\textsf{#1}\xspace}
\newcommand{\labDie}{\lab{die}}
\newcommand{\labInf}{\lab{inf}}
\newcommand{\labNew}{\lab{new}}
\newcommand{\slcLabellingFunction}{\lambda^{\stableLevel}}
\newcommand{\infstates}[1]{\mathit{inf}({#1})}
\newcommand{\stableLevel}{\mathbf{s}}
\newcommand{\separatingLevel}{\mathbf{d}}
\newcommand{\runJumped}{\mathbf{j}}
\newcommand{\runFirstInfiniteAccepting}{\mathbf{f}}
\newcommand{\macrorunJump}{\bm \ell}
\newcommand{\bothJumped}{\mathbf{l}}
\newcommand{\runNotInS}{\mathbf{k}}
\newcommand{\runEntersB}{\mathbf{b}}
\newcommand{\runset}{\mathbf{R}}
\newcommand{\branchset}{\mathbf{B}}
\newtheorem{definition}{Definition}
\newtheorem{lemma}{Lemma}
\newtheorem{corollary}{Corollary}
\newtheorem{proposition}{Proposition}
\newdefinition{remark}{Remark}
\newtheorem{example}{Example}
\newenvironment{markedexample}{\begin{example}}{\qed\end{example}}
\newenvironment{markedremark}{\begin{remark}}{\qed\end{remark}}
\begin{document}

\begin{frontmatter}
	\title{On the Power of Finite Ambiguity in \buchi Complementation}

	\author[sklcs,ucas]{Weizhi Feng}
	\ead{fengwz@ios.ac.cn}
	
	\author[sklcs,liv]{Yong Li\corref{corAut}}
	\ead{liyong@ios.ac.cn}
	
	\author[sklcs,gziis]{Andrea Turrini\corref{corAut}}
	\ead{turrini@ios.ac.cn}
	
	\author[rice]{Moshe Y. Vardi}
	\ead{vardi@cs.rice.edu}
	
	\author[sklcs,ucas,gziis]{Lijun Zhang\corref{corAut}}
	\ead{zhanglj@ios.ac.cn}
	
	\cortext[corAut]{Corresponding author}

	\affiliation[sklcs]{
		organization={State Key Laboratory of Computer Science, Institute of Software, Chinese Academy of Sciences},
		addressline={South Fourth Street 4\#, Zhong Guan Cun},
		postcode={100190},
		city={Beijing},
		country={P.R. China}
	}
	
	\affiliation[ucas]{
		organization={University of the Chinese Academy of Sciences},
		addressline={No.19(A) Yuquan Road},
		postcode={100049},
		city={Beijing},
		country={P.R. China}
	}

  \affiliation[liv]{
		organization={Department of Computer Science, University of Liverpool},
		addressline={Ashton St.},
		postcode={L69 3BX},
		city={Liverpool},
		country={UK}
	}

	\affiliation[gziis]{
		organization={Institute of Intelligent Software, Guangzhou},
		addressline={Jinzhu Plaza},
		postcode={511458},
		city={Guangzhou},
		country={P.R. China}
	}

	\affiliation[rice]{
		organization={Department of Computer Science, Rice University},
		addressline={6100 Main St.},
		postcode={TX 77005-1827},
		city={Houston},
		country={USA}
	}

	\begin{abstract}
		In this work, we exploit the power of \emph{finite ambiguity} for the complementation problem of \buchi automata by using reduced run directed acyclic graphs (DAGs) over infinite words, in which each vertex has at most one predecessor;
		these reduced run DAGs have only a finite number of infinite runs, thus obtaining the finite ambiguity in \buchi complementation.
		We show how to use this type of reduced run DAGs as a \emph{unified tool} to optimize \emph{both} rank-based and slice-based complementation constructions for \buchi automata with a finite degree of ambiguity.
		As a result, given a \buchi automaton with $n$ states and a finite degree of ambiguity, the number of states in the complementary \buchi automaton constructed by the classical rank-based and slice-based complementation constructions can be improved 
		from $2^{\bigO(n \log n)}$ and $\bigO((3n)^{n})$ to $\bigO(6^{n}) \subseteq 2^{\bigO(n)}$ and $\bigO(4^{n})$, respectively.
		We further show how to construct such reduced run DAGs for limit deterministic \buchi automata and obtain a specialized complementation algorithm, thus demonstrating the generality of the power of finite ambiguity.
	\end{abstract}
	
	\begin{keyword}
		\buchi automata \sep complementation \sep finite ambiguity \sep rank-based algorithm \sep slice-based algorithm \sep limit deterministic \buchi automata \sep finitely ambiguous \buchi automata 
	\end{keyword}
 	\journal{arXiv}
\end{frontmatter}

\section{Introduction}
\label{sec:intro}

\sloppy
The complementation of nondeterministic \buchi automata (NBAs for short)~\cite{buchi90decision} is a classic problem and it is a fundamental construction used to solve many other important problems such as model checking~\cite{DBLP:conf/lics/VardiW86} and program-termination analysis~\cite{DBLP:conf/cav/HeizmannHP14}.

The model checking problem essentially asks whether the behavior of the system $A$ satisfies the given specification $B$.
In the automata-based model checking framework~\cite{DBLP:conf/lics/VardiW86},
the model checking problem reduces to a language-containment problem between $A$ and $B$, where the languages of $A$ and $B$ are the sets of all possible behaviors of $A$ and $B$, respectively.
On the one hand, when the system $A$ is modelled as an NBA while the specification $B$ is given as a formula $\phi$ in, e.g., the Linear-time Temporal Logic (LTL)~\cite{DBLP:conf/focs/Pnueli77}, the standard approach is simple: 
one first takes the negated formula $\lnot \phi$, then translates $\lnot \phi$ to an NBA $B_{\lnot \phi}$, and lastly checks the emptiness of the intersection of the languages of $A$ and $B_{\lnot \phi}$.
On the other hand, when both the system $A$ and the specification $B$ are given as NBAs, the complementation of NBAs becomes particularly important:
one first constructs a complementary automaton $B^{c}$ such that $\lang{B^{c}} = \infwords \setminus \lang{B}$ and then checks the emptiness of $\lang{A} \cap \lang{B^{c}}$, where $\alphabet$ is the set of actions allowed in both $A$ and $B$. 
Various implementations of this approach with optimizations~\cite{DBLP:conf/cav/AbdullaCCHHMV10, DBLP:conf/concur/AbdullaCCHHMV11, DBLP:journals/corr/abs-0902-3958, DBLP:journals/lmcs/ClementeM19} have been proposed to improve its practical performance. 
All these implementations, however, directly or indirectly resort to constructing $B^{c}$, which can be exponentially larger than $B$~\cite{Schewe09,Yan/08/lowerComplexity}.

The complementation of \buchi automata is also a key component in the automata-based program-termination checking framework proposed in~\cite{DBLP:conf/cav/HeizmannHP14}.
To prove that a program terminates on all inputs, one first constructs a termination proof for a sample path of the given program and then generalizes it to a \buchi automaton, whose language (by construction) represents a set of paths all sharing the same termination argument.
All these terminating paths are then removed from the program.
The removal of such paths is done by means of the NBA language difference operation, still based on \buchi complementation and intersection.
By iteratively removing terminating paths, one may obtain an empty program; 
in such a case, all paths of the program have a termination argument, thus the program is proved to be terminating.
It has been shown in~\cite{DBLP:conf/pldi/ChenHLLTTZ18} that efficient complementation algorithms for \buchi automata can significantly improve the performance of the program-termination checking framework.

Given its importance, the NBA complementation operation has been studied extensively in literature, where several kinds of algorithms have been proposed for the full class of NBAs as well as for specific subclasses.
The complexity for complementing an NBA $B$ has been proved to be in $\Omega((0.76n)^{n})$~\cite{Yan/08/lowerComplexity,Schewe09}.
A classic line of research on complementation aims at developing optimal (or close to optimal) complementation algorithms. 
Currently there are mainly four types of practical complementation algorithms for NBAs, namely \emph{Ramsey-based}~\cite{sistla1987complementation}, \emph{determinization-based}~\cite{safra1988complexity}, \emph{rank-based}~\cite{kupferman2001weak}, and \emph{slice-based}~\cite{kahler2008complementation} algorithms.
These algorithms, however, all unavoidably lead to a super-exponential growth of the size of $B^{c}$ in the worst case~\cite{Yan/08/lowerComplexity}. 
With the growing understanding of the worst-case complexity for these algorithms, searching for specialized complementation algorithms for certain subclasses of NBAs with better complexity has become an important line of research. 

On the one hand, researchers can easily identify subclasses of NBAs based on their transition structures; 
this led to the subclasses of deterministic, limit deterministic, and reverse deterministic \buchi automata. 
Deterministic \buchi automata (DBAs) have only one successor for each given state and letter; 
limit deterministic \buchi automata (LDBAs), also known as semi-deterministic \buchi automata, are NBAs that behave deterministically after visiting an accepting state; 
reverse deterministic \buchi automata (RDBAs) are backward deterministic, i.e., every state has at most one predecessor on each letter.
For these subclasses, specialized complementation algorithms with better complexity than $\Omega((0.76n)^{n})$ have been proposed in literature: 
it has been shown in~\cite{DBLP:journals/jcss/Kurshan87} that deterministic \buchi automata can be complemented in $\bigO(n)$, 
the more general subclass of limit deterministic \buchi automata has a complementation complexity of $\bigO(4^{n})$~\cite{Blahoudek16}, 
while reverse deterministic \buchi automata of $2^{\bigO(n)}$~\cite{DBLP:journals/corr/abs-1110-6183}.

On the other hand, some researchers proposed different subclasses of NBAs based on counting the number of accepting runs over a word.
For instance, we can identify the class of \emph{unambiguous} nondeterministic \buchi automata (UNBAs)~\cite{carton2003unambiguous} that are NBAs having at most one accepting run for each word;
we can also consider the more general class of \emph{finitely ambiguous} nondeterministic \buchi automata (FANBAs)~\cite{LodingP18} that are NBAs having a finite number of accepting runs instead of only one accepting run for each word. 
Unambiguous and finitely ambiguous \buchi automata have already been used in probabilistic verification~\cite{baier2016markov,CourcoubetisY95}, where nondeterminism can cause an imprecise computation of the probability of satisfying the given property~\cite{BustanRV04}:
despite being nondeterministic, UNBAs do not affect the correctness of the computed probability values;
moreover, they can be exponentially smaller than their equivalent deterministic counterpart~\cite{baier2016markov} usually used in probabilistic verification, such as deterministic Rabin or Parity automata (see, e.g.,~\cite{baier2008principles}).
Despite recognizing the full class of $\omega$-regular automata~\cite{LodingP18} as done by NBAs and LDBAs, research on complementation of FANBAs has seen less effort than the recent research on complementation of LDBAs~\cite{Blahoudek16,DBLP:conf/pldi/ChenHLLTTZ18,DBLP:conf/tacas/HavlenaLS22}. 
The complementation operation for FANBAs has been shown to be doable in $\bigO(5^{n})$~\cite{rabinovich18}, in contrast to $2^{\Omega(n \log n)}$ for general NBAs~\cite{Schewe09}.
Note that checking whether an NBA is an FANBA can be done in polynomial time~\cite{LodingP18}.
Therefore, once an FANBA has been identified, the specialized complementation construction for FANBAs can be applied. 

In this paper, we focus on an in-depth study of the complementation problem for FANBAs by extending our previous work~\cite{DBLP:journals/GandALF/LiVZ20} to cover also LDBAs, whose specialized construction cannot be applied to FANBAs.
Our main technical tool is the construction of reduced directed acyclic graphs (DAGs) of runs of FANBAs over infinite words called \emph{codeterministic run DAGs}, in which each vertex has \emph{at most one} predecessor.
This type of codeterministic run DAGs was previously introduced in~\cite{DBLP:journals/corr/abs-1110-6183,rabinovich18} under different names;
earlier works focus on providing concrete ways for constructing codeterministic run DAGs, while our work extracts the essence of those constructions --- obtaining a run DAG that is codeterministic, regardless of the ways for constructing it;
we defer the comparison of~\cite{DBLP:journals/corr/abs-1110-6183,rabinovich18} with our construction to the related work given in Section~\ref{sec:relatedWork}.
We show that such codeterministic run DAGs can be used to simplify and improve both the classical rank-based and slice-based complementation constructions. 
Our contributions are the following.
\begin{itemize}
\item 
    First, we apply codeterministic run DAGs of FANBAs over infinite words, as a \emph{unified tool} to show how finite ambiguity works in \buchi complementation, to optimize \emph{both} rank-based complementation (\rkc) and slice-based complementation (\slc).
\item
    Second, we show that the construction of codeterministic run DAGs in different complementation algorithms~\cite{TsaiFVT14} helps to achieve simpler and theoretically better complementation algorithms for FANBAs. 
    Given an FANBA with $n$ states, we show that the number of states of the complementary NBA constructed by the classical \rkc and \slc algorithms can be improved, respectively, to $\bigO(6^n) \subseteq 2^{\bigO(n)}$ from $2^{\bigO(n \log n)}$ and to $\bigO(4^{n})$ from $\bigO((3n)^{n})$.
\item
    Third, we reveal that \slc for general NBAs is basically an algorithm based on the construction of codeterministic run DAGs and a specialized complementation algorithm for FANBAs.
    We also provide in Proposition~\ref{prop:sumbsumptionRelationBetweenMacrostatesSLCforFANBAs} a subsumption relation between states in the complementary NBAs of FANBAs, which can be used to improve the containment checking between an NBA and an (FA)NBA and also to reduce the number of redundant states in the complementary NBA.
\item 
    Lastly, we apply codeterministic run DAGs to the complementation of LDBAs, in order to demonstrate their usefulness.
\end{itemize}

\paragraph*{Organization of the paper}
In the remainder of this paper, we first recap some definitions about \buchi automata in Section~\ref{sec:preliminaries} and then introduce the concept of codeterministic run DAGs in Section~\ref{sec:reduced-run-dag}. 
We present our improved algorithms for the rank-based and slice-based algorithms in Section~\ref{sec:rank-based} and Section~\ref{sec:slice-based}, respectively. 
Then, we apply codeterministic run DAGs to the complementation of limit deterministic \buchi automata in Section~\ref{sec:applicationToLDBAs}.
Lastly, we compare our work with the literature in Section~\ref{sec:relatedWork} and we conclude the paper with some future works in Section~\ref{sec:conclusion}.

We postpone the comparison with the literature to Section~\ref{sec:relatedWork} since this allows our readers to have a detailed knowledge of the constructions and results we develop in this work, making their comparison with existing work easier to understand.
To improve the readability of the paper, we provide the full formal proofs for the results presented in this paper in the appendix.

\section{Preliminaries}
\label{sec:preliminaries}

Let $n \in \naturals$ be a natural number;
we denote by $\range{n}$ the set of numbers $\setnocond{0, 1, \cdots, n} \subseteq \naturals$, by $\oddsetn{n}$ the set of odd numbers in $\range{n}$, and by $\irange{n}$ the set of numbers $\range{n} \setminus \setnocond{0}$.

Given a function $f \colon X \to Z$, we extend $f$ to sets in the usual way, that is, $f(X') = \cup_{x \in X'} f(x)$, where $X' \subseteq X$;
we adopt a similar notation for functions with higher arity, for instance for $\trans \colon \states \times \alphabet \to 2^{\states}$, we let $\trans(\states', a) = \bigcup_{q \in \states'} \trans(q, a)$.
We implicitly consider a function $f \colon X \to Z$ also as a function $f' \colon X \to 2^{Z}$ such that $f'(x) = \setnocond{f(x)}$ for each $x \in X$; 
symmetrically, if $\size{f'(x)} \leq 1$ for each $x \in X$, then we consider $f'$ also as a function $f \colon X \to Z$ such that $f(x) = z$ whenever $f'(x) = \setnocond{z}$.

Given a finite set $X$ and a binary relation $\stateOrder$ on $X$, we say that $\stateOrder$ is a \emph{total order} if $\stateOrder$ is reflexive (for each $x \in X$, $x \stateOrder x$), antisymmetric (for each $x,y \in X$, $x \stateOrder y$ and $y \stateOrder x$ implies $x = y$), transitive (for each $x,y,z \in X$, $x \stateOrder y$ and $y \stateOrder z$ implies $x \stateOrder z$), and strongly connected (for each $x,y \in X$, either $x \stateOrder y$ or $y \stateOrder x$).
We say that $\enumerationnocond{x_{1}, x_{2}, \cdots, x_{n}}$ is an \emph{enumeration} of $X$ under $\stateOrder$ if $n = \size{X}$, $\setnocond{x_{1}, x_{2}, \cdots, x_{n}} = X$, and for each $1 \leq i < j \leq n$, $x_{i} \stateOrder x_{j}$, that is, we order the elements of $X$ according to $\stateOrder$.
We always use indices starting from $1$ for enumerations.

We fix a finite \emph{alphabet} $\alphabet$ that we will use throughout the paper without further mentioning.
An \emph{$\omega$-word} is an infinite sequence $w$ of letters in $\alphabet$.
We denote by $\infwords$ the set of all $\omega$-words.
A \emph{language} is a subset of $\infwords$.
Let $L$ be a language; 
the complement language $L^{c}$ of $L$ is the language $L^{c} = \infwords \setminus L$.
Let $\sigma = r_{0} r_{1} r_{2} \cdots$ be a sequence of elements;
we denote by $\wordletter{\sigma}{i}$ the element $r_{i}$ of $\sigma$ at position $i$.

\begin{definition}[\buchi Automata (BAs)]
\label{def:BAs}
    A \emph{\buchi automaton} (BA) is a tuple $\aut = (\states, \initialStates, \trans, \acc)$, where
    $\states$ is a finite set of states, 
    $\initialStates \subseteq \states$ is a set of initial states, 
    $\trans \colon \states \times \alphabet \to 2^{\states}$ is a total transition function, and 
    $\acc \subseteq \states$ is a set of accepting states. 
\end{definition}
We may write $\transition{q}{a}{q'}$ for $q' \in \trans(q,a)$. 
We assume without loss of generality that each BA~$\aut$ is \emph{complete}, i.e., for each state $q \in \states$ and letter $a \in \alphabet$, we have that $\trans(q, a) \neq \emptyset$. 
If a BA is not complete, we make it complete by adding a fresh non-accepting state $q_{\bot} \notin \states$ and redirecting all missing transitions to $q_{\bot}$.

A \emph{run} of $\aut$ on a word $w$ is an infinite sequence of states $\run = q_{0} q_{1} \cdots$ such that $q_{0} \in \initialStates$ and for every $i \in \naturals$, $q_{i + 1} \in \trans(q_{i}, \wordletter{w}{i})$. 
We denote by $\infstates{\run}$ the set of states that occur infinitely often in the run $\run$. 
A word $w \in \infwords$ is \emph{accepted} by $\aut$ if there exists a run $\run$ of $\aut$ over $w$ such that $\infstates{\run} \cap \acc \neq \emptyset$.
We denote by $\lang{\aut}$ the \emph{language} recognized by $\aut$, i.e., the set of words accepted by $\aut$.

Given two runs $\run$ and $\run'$ of $\aut$ over $w$, we say that $\run$ and $\run'$ \emph{join} (or, \emph{merge}) if there is $n \in \naturals$ such that $\wordletter{\run}{i} = \wordletter{\run'}{i}$ for each $i \geq n$.

\begin{figure}
    \centering
    \includegraphics{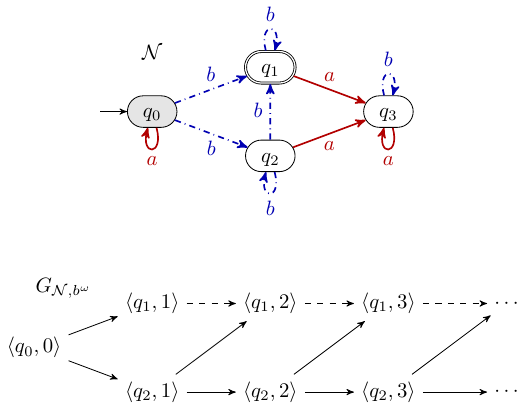}
    \caption{A BA~$\aut[N]$ with $\initialStates = \setnocond{q_{0}}$ and $\acc = \setnocond{q_{1}}$ and the run DAG~$\dagAW{\aut[N]}{w}$ over $w = b^{\omega}$.}
    \label{fig:exampleBA}
\end{figure}

\begin{markedexample}
\label{ex:BA}
    As an example of \buchi automaton, consider the automaton~$\aut[N]$ depicted on the top part of Figure~\ref{fig:exampleBA}.
    The alphabet is $\alphabet = \setnocond{a,b}$; 
    the set of states is $\states = \setnocond{q_{0}, q_{1}, q_{2}, q_{3}}$; 
    the initial states are $\initialStates = \setnocond{q_{0}}$, identified by the small incoming arrow and the light gray background color; 
    the transition function $\trans$ is shown by means of the arrows between states; and 
    the set of accepting states is $\acc = \setnocond{q_{1}}$, denoted by the double border.
    
    As we can see, the BA~$\aut[N]$ has language $\lang{\aut[N]} = \setcond{a^{i} b^{\omega} \in \infwords}{i \in \naturals}$;
    in fact, given a word $w = a^{i} b^{\omega}$, we have several runs over it:
    the sequence of $a$ letters at the beginning of $w$ is consumed by taking the transition $q_{0} \stackrel{a}{\longrightarrow} q_{0}$.
    The first $b$ is consumed either by the transition $q_{0} \stackrel{b}{\longrightarrow} q_{1}$ or by $q_{0} \stackrel{b}{\longrightarrow} q_{2}$, since we have $\trans(q_{0}, b) = \setnocond{q_{1}, q_{2}}$;
    if $q_{0} \stackrel{b}{\longrightarrow} q_{2}$ has been chosen, each letter $b$ of the remaining infinite sequence $b^{\omega}$ is consumed either by the transition $q_{2} \stackrel{b}{\longrightarrow} q_{2}$ or by $q_{2} \stackrel{b}{\longrightarrow} q_{1}$.
    Once $q_{1}$ is reached, on the next $b$ the BA~$\aut[N]$ can only take $q_{1} \stackrel{b}{\longrightarrow} q_{1}$ forever.
    
    Given a word $w = a^{i} b^{\omega}$, we have in practice two classes of runs:
    \begin{align*}
        \run_{r} & = \underbrace{q_{0} \cdots q_{0}}_{i+1} \underbrace{q_{2} q_{2} \cdots}_{\omega} \\
        \run_{j} & = \underbrace{q_{0} \cdots q_{0}}_{i+1} \underbrace{q_{2} \cdots q_{2}}_{j} \underbrace{q_{1} q_{1} \cdots}_{\omega}
    \end{align*}
    where $j \in \naturals$.
    The run $\run_{r}$ is not accepting, since $\infstates{\run_{r}} = \setnocond{q_{2}}$ and $\infstates{\run_{r}} \cap \acc = \emptyset$;
    for each $j \in \naturals$, the run $\run_{j}$ is accepting since $\infstates{\run_{j}} = \setnocond{q_{1}}$ and $\setnocond{q_{1}} \cap \acc \neq \emptyset$.
    Given two runs $\run_{j}$ and $\run_{k}$, we have that they merge at step $n = (i + 1) + \max \setnocond{j, k} + 1$ since we have $\wordletter{\run_{j}}{l} = q_{2} = \wordletter{\run_{k}}{l}$ for each $l \geq n$.
    $\run_{r}$, instead, do not merge with any other run.
    
    As we will see in Section~\ref{sec:reduced-run-dag}, all these runs on the same $\omega$-word $w$ can be represented as a run DAG; 
    one instance is the one depicted in the bottom part of Figure~\ref{fig:exampleBA}:
    for the word $w = b^{\omega}$, the corresponding run $\run_{r}$ is the bottom one; 
    the runs $\run_{j}$ follow the bottom run for $j$ steps, then take the transition from $\vertex{q_{2}}{j+1}$ to $\vertex{q_{1}}{j+2}$ and continue forever on the top run, by following the dashed arrows.
\end{markedexample}

Let $\aut$ be a BA; 
a complementary BA of $\aut$ is a BA that accepts the complementary language $\infwords \setminus \lang{\aut}$ of $\lang{\aut}$.
We denote by $\autNewInitial{\aut}{S}$ the automaton $(\states, S, \trans, \acc)$ obtained from $\aut$ by setting its set of initial states to be $S \subseteq \states$.
In particular, we use $\autNewInitial{\aut}{q}$ as the shorthand for $\autNewInitial{\aut}{\setnocond{q}}$.
We say a state $q$ of $\aut$ \emph{subsumes} a state $q'$ of $\aut$ if $\lang{\autNewInitial{\aut}{q'}} \subseteq \lang{\autNewInitial{\aut}{q}}$.
We classify $\aut$ into the following types of BAs according to its transition structure:
\begin{definition}[Types of BAs by transitions]
\label{def:typesOfBAsbyTransitions}
    Given a \buchi automaton~$\aut = (\states, \initialStates, \trans, \acc)$, we say that $\aut$ is
    \begin{itemize}
    \item 
        \emph{nondeterministic} (an NBA) if $\size{\initialStates} > 1$ or $\size{\trans(q, a)} > 1$ for some state $q \in \states$ and letter $a \in \alphabet$;
    \item
        \emph{deterministic} (a DBA) if $\size{\initialStates} = 1$ and for each $q \in \states$ and each $a \in \alphabet$, we have that $\size{\trans(q, a)} = 1$;
    \item
        \emph{reverse deterministic} (an RDBA) if for each state $q' \in \states$ and letter $a \in \alphabet$, there is at most one state $q \in \states$ such that $q' \in \trans(q, a)$; and
    \item 
        \emph{limit deterministic} (an LDBA) if the state set $\states$ can be partitioned into two disjoint sets $\nstates$ and $\dstates$ such that
        $\acc \subseteq \dstates$ and 
        for each state $q \in \dstates$ and $a \in \alphabet$, we have that $\size{\trans(q, a)} = 1$ and $\trans(q, a) \subseteq \dstates$. 
    \end{itemize}
\end{definition}
In particular, given an LDBA~$\aut$ whose states are partitioned into $\nstates$ and $\dstates$, these two sets induce a partition of $\trans$ into three disjoint transitions functions $\ntrans$, $\dtrans$, and $\jtrans$, representing the nondeterminism of $\aut$ inside $\nstates$, its deterministic nature in $\dstates$, and the jumps connecting $\nstates$ to $\dstates$, respectively. 
Formally, 
$\ntrans \colon \nstates \times \alphabet \to 2^{\nstates}$ is defined for each $n \in \nstates$ and $a \in \alphabet$ as $\ntrans(n, a) = \trans(n, a) \cap \nstates$;
$\jtrans \colon \nstates \times \alphabet \to 2^{\dstates}$ is defined for each $n \in \nstates$ and $a \in \alphabet$ as $\jtrans(n, a) = \trans(n, a) \cap \dstates$;
and 
$\dtrans \colon \dstates \times \alphabet \to \dstates$ is defined for each $d \in \dstates$ and $a \in \alphabet$ as $\dtrans(d, a) = d'$ where $\setnocond{d'} = \trans(d, a)$.

It is easy to see that each DBA is also an LDBA and that each LDBA is also an NBA.
An RDBA, instead, can be any of the other types.

From the perspective of the number of accepting runs of $\aut$, we have the following types of NBAs.
\begin{definition}[Types of NBAs by accepting runs]
\label{def:typesOfNBAsbyAcceptingRuns}
    Given an NBA~$\aut$, we say that $\aut$ is
    \begin{itemize}
    \item
        \emph{finitely ambiguous} (an FANBA) if for each $w \in \lang{\aut}$, the number of accepting runs of $\aut$ over $w$ is finite;
    \item
        \emph{$k$-ambiguous} if for each $w \in \lang{\aut}$, the number of accepting runs of $\aut$ over $w$ is at most $k$; and 
    \item
        \emph{unambiguous} if it is $1$-ambiguous.
    \end{itemize}
\end{definition}
By Definition~\ref{def:typesOfNBAsbyAcceptingRuns}, it holds that both $k$-ambiguous NBAs and unambiguous NBAs are special classes of FANBAs. 
We want to remark that an NBA can have as many non-accepting runs as we want on a given word $w$; 
this does not affect whether the automaton is an FANBA, $k$-ambiguous, or unambiguous.

\begin{figure}
    \centering
    \includegraphics{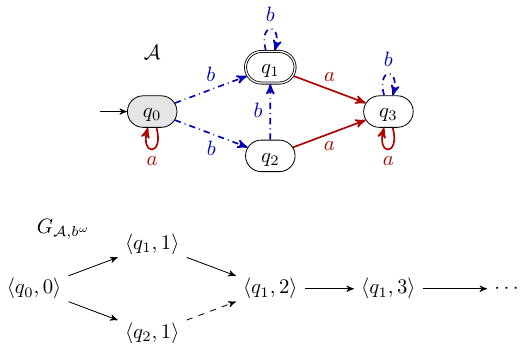}
    \caption{An FANBA~$\aut$ with $\initialStates = \setnocond{q_{0}}$ and $\acc = \setnocond{q_{1}}$ and the run DAG~$\dagAW{\aut}{w}$ over $w = b^{\omega}$.}
    \label{fig:example2ANBA}
\end{figure}
\begin{markedexample}
\label{ex:typesOfBAs}
    As an example of the different types of \buchi automata, consider again the automaton~$\aut[N]$ shown in Figure~\ref{fig:exampleBA}:
    this automaton is nondeterministic since e.g.\@ $\size{\trans(q_{0}, b)} = \size{\setnocond{q_{1}, q_{2}}} = 2 > 1$, thus it is not deterministic;
    also, it is not reverse deterministic since e.g.\@ $q_{1}$ has more than one predecessor via letter $b$, namely $q_{0}$, $q_{1}$, and $q_{2}$.
    It is, however, limit deterministic, since we can split $\states$ as $\nstates = \setnocond{q_{0}, q_{2}}$ and $\dstates = \setnocond{q_{1}, q_{3}}$; 
    it is immediate to see that $\acc = \setnocond{q_{1}} \subseteq \dstates$ and for each $q \in \dstates$ and $a \in \alphabet$, we have that $\size{\trans(q, a)} = 1$ and $\trans(q, a) \subseteq \dstates$.
    We can also partition $\trans$ into three parts: 
    $\ntrans \colon \nstates \times \alphabet \to 2^{\nstates}$, $\jtrans \colon \nstates \times \alphabet \to 2^{\dstates}$, and $\dtrans \colon \dstates \times \alphabet \to \dstates$ where $\ntrans = \setnocond{\transition{q_{0}}{a}{q_{0}}, \transition{q_{0}}{b}{q_{2}}, \transition{q_{2}}{b}{q_{2}}}$, $\jtrans = \setnocond{\transition{q_{0}}{b}{q_{1}}, \transition{q_{2}}{b}{q_{1}}, \transition{q_{2}}{a}{q_{3}}}$, and $\dtrans = \setnocond{\transition{q_{1}}{a}{q_{3}}, \transition{q_{1}}{b}{q_{1}}, \transition{q_{3}}{a}{q_{3}}, \transition{q_{3}}{b}{q_{3}}}$.
    Moreover, $\aut[N]$ is not finitely ambiguous (hence, not $k$-ambiguous or unambiguous) since, as we have seen in Example~\ref{ex:BA}, for e.g.\@ the word $w = b^{\omega}$ we have countably many accepting runs $\run_{j}$ over $w$, one for each $j \in \naturals$.
    Note that we can ignore the run $\run_{r}$ since it is not accepting.
    
    Consider now the BA~$\aut$ depicted in Figure~\ref{fig:example2ANBA}; 
    it is the BA~$\aut[N]$ without the transition $\transition{q_{2}}{b}{q_{2}}$ and it is easy to note that $\lang{\aut} = \lang{\aut[N]}$. 
    In this case we have that $\aut$ is still nondeterministic as well as not reverse deterministic; 
    it is also still limit deterministic, since $\nstates = \setnocond{q_{0}, q_{2}}$ and $\dstates = \setnocond{q_{1}, q_{3}}$ is a valid partition. 
    Note that also $\states'_{N} = \setnocond{q_{0}}$ and $\states'_{D} = \setnocond{q_{1}, q_{2}, q_{3}}$ is a valid partition, while $\states''_{N} = \setnocond{q_{0}, q_{1}, q_{2}}$ and $\states''_{D} = \setnocond{q_{3}}$ is not, since $\acc \not\subseteq \states''_{D}$.
    Differently from $\aut[N]$, we have that $\aut$ is a 2-ambiguous NBA, thus also an FANBA: 
    the language of $\aut$ is again $\lang{\aut} = \setcond{a^{i} b^{\omega} \in \infwords}{i \in \naturals}$ and for each $i \in \naturals$, there are only the two accepting runs $(q_{0})^{i+1} q_{1}^{\omega}$ and $(q_{0})^{i+1} q_{2} q_{1}^{\omega}$ over the word $a^{i} b^{\omega} \in \lang{\aut}$.
    These runs are depicted in the run DAG shown in the bottom part of Figure~\ref{fig:example2ANBA}, for the $\omega$-word $w = b^{\omega}$, i.e., $w = a^{i} b^{\omega}$ with $i = 0$.
\end{markedexample}

\begin{figure}
    \centering
    \includegraphics{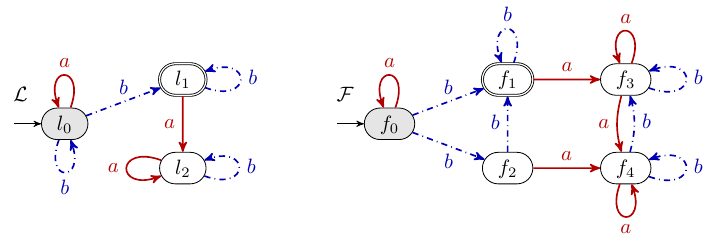}
    \caption{An LDBA~$\aut[L]$ and an FANBA~$\aut[F]$.}
    \label{fig:LDBAandFANBAareIncomparable}
\end{figure}
\begin{proposition}
\label{prop:LDBAandFANBAareIncomparable}
    Given the sets $\mathrm{LDBA}$ and $\mathrm{FANBA}$ of limit deterministic and finitely ambiguous NBAs, respectively, we have that $\mathrm{LDBA} \not \subseteq \mathrm{FANBA}$ and $\mathrm{FANBA} \not \subseteq \mathrm{LDBA}$.
\end{proposition}
To justify this proposition, we exhibit two BAs that are in one set but not in the other.
These automata are depicted in Figure~\ref{fig:LDBAandFANBAareIncomparable}.
Consider the BA~$\aut[L]$: 
it is trivial to verify that $\aut[L]$ is an LDBA, with partition $\nstates = \setnocond{l_{0}}$ and $\dstates = \setnocond{l_{1}, l_{2}}$.
On the other hand, $\aut[L]$ is not an FANBA:
the $\omega$-word $b^{\omega}$ accepted by $\aut[L]$ is such that for each $i \in \naturals$, the run $l_{0}^{i+1} l_{1}^{\omega}$ over $b^{\omega}$ is accepting, so $\aut[L]$ has infinitely many accepting runs over $b^{\omega}$.

Consider now the BA~$\aut[F]$: 
it is just the FANBA~$\aut$ from Figure~\ref{fig:example2ANBA} where we duplicated state $q_{3}$ as $f_{3}$ and $f_{4}$.
Exactly as in Example~\ref{ex:typesOfBAs}, we can show that $\aut[F]$ is an FANBA, given that each accepted word has only two accepting runs over it.
However, it is not an LDBA because the constraints $\acc = \setnocond{f_{1}} \subseteq \dstates$ and $\trans(f_{1}, a) \subseteq \dstates$ imply that $f_{3}$ must belong to $\dstates$, so we should have that $\size{\trans(f_{3}, a)} = 1$. 
Instead, we have that $\size{\trans(f_{3}, a)} = 2$, thus the conditions on $\aut[F]$ for being limit deterministic cannot be satisfied.

\begin{figure}
    \centering
    \includegraphics{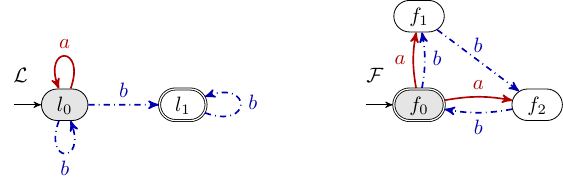}
    \caption{An LDBA and an FANBA that are not complete.}
    \label{fig:nonCompleteLDBAandFANBAareIncomparable}
\end{figure}
\begin{markedremark}
    In this work, we consider only complete automata; 
    the counterexamples we used in the proof of Proposition~\ref{prop:LDBAandFANBAareIncomparable} make use of this property, so one may wonder whether Proposition~\ref{prop:LDBAandFANBAareIncomparable} still holds once restricted to automata, not necessarily complete, in which we consider only the reachable fragment after pruning the states that have no possibility to reach an accepting state.
    
    Proposition~\ref{prop:LDBAandFANBAareIncomparable} indeed still holds:
    the fact that $\mathrm{LDBA} \not \subseteq \mathrm{FANBA}$ is still justified by the LDBA~$\aut[L]$, shown in Figure~\ref{fig:nonCompleteLDBAandFANBAareIncomparable}, that is obtained by pruning from the homonym LDBA depicted in Figure~\ref{fig:LDBAandFANBAareIncomparable} the state $l_{2}$ and relative transitions.
    Clearly it is still limit deterministic and nothing changes with regards to the accepting runs, given that we removed states that can never lead to accepting states in $\acc$.
    
    The fact that $\mathrm{FANBA} \not \subseteq \mathrm{LDBA}$ is witnessed by the NBA~$\aut[F]$ shown in Figure~\ref{fig:nonCompleteLDBAandFANBAareIncomparable}.
    This automaton has language $\lang{\aut[F]} = \setnocond{abb, ab, bbb}^{\omega}$ and it is easy to recognize that it is 2-ambiguous, thus $\aut[F] \in \mathrm{FANBA}$:
    the word $b^{\omega}$ has only the run $f_{0} f_{1} f_{2} f_{0} f_{1} f_{2} \cdots$ over it that is accepting;
    similarly, there is only one accepting run for the words in $\setnocond{abb, ab, bbb}^{*} \cdot \setnocond{abb, ab}^{\omega}$;
    there are instead two accepting runs on the word in $\setnocond{abb, ab, bbb}^{*} \cdot \setnocond{abb, ab} \cdot b^{\omega}$:
    for instance, the word $ab^{\omega}$ has $f_{0} f_{1} f_{2} f_{2} \cdots$ and $f_{0} f_{2} f_{2} \cdots$ as accepting runs over it.
    $\aut[F]$ is trivially not an LDBA since $f_{0} \in \acc$ has two $a$-successors, i.e., $f_{1}$ and $f_{2}$.
\end{markedremark}

\section{Codeterministic Run DAGs for FANBAs}
\label{sec:reduced-run-dag}

In this section, we first describe the concept of run DAG of an NBA over a word $w$, introduced in~\cite{kupferman2001weak}.
We then present the codeterministic run DAGs for FANBAs as a unified tool for both \rkc and \slc constructions by making use of the finite ambiguity in FANBAs.
In the remainder of the paper, we use DAGs as the shorthand for run DAGs.

\begin{definition}[Run DAG]
\label{def:runDAG}
    Let $\aut = (\states, \initialStates, \trans, \acc)$ be an NBA and $w \in \infwords$ be an $\omega$-word. 
    The DAG~$\dagAW{\aut}{w} = \graph{\vertices}{\edges}$ of $\aut$ over $w$ is defined as follows:
    \begin{description}
    \item[Vertices $\vertices$:] 
        the set of vertices $\vertices \subseteq \states \times \naturals$ is defined as $\bigcup_{l \geq 0} (\vertices_{l} \times \setnocond{l})$ where $\vertices_{0} = \initialStates$ and $\vertices_{l + 1} = \trans(\vertices_{l}, \wordletter{w}{l})$ for every $l \in \naturals$;
    \item[Edges $\edges$:] 
        there is an edge $(\vertex{q}{l}, \vertex{q'}{l'}) \in \edges$ from $\vertex{q}{l}$ to $\vertex{q'}{l'}$ if $l' = l + 1$ and $q' \in \trans(q, \wordletter{w}{l})$.
    \end{description}
    We call a DAG \emph{codeterministic} if each vertex only has at most one predecessor, that is, for each $v' \in \vertices$, we have that $\size{\setcond{v \in \vertices}{(v, v') \in \edges}} \leq 1$.
\end{definition}
Intuitively, we have a vertex $\vertex{q_{l}}{l} \in \vertices$ for each state $q_{l}$ that can be reached by $\aut$ after reading the first $l$ letters of $w$.
In other words, we have the vertex $\vertex{q_{l}}{l}$ if there exists a run $\run$ of $\aut$ over $w$ such that $q_{l} = \wordletter{\run}{l}$.
An edge $(\vertex{q_{l}}{l}, \vertex{q_{l+1}}{l+1})$ simply connects the vertex $\vertex{q_{l}}{l}$ to the vertex $\vertex{q_{l+1}}{l+1}$ provided that $\transition{q_{l}}{\wordletter{w}{l}}{q_{l+1}}$.

Let $\dagAW{\aut}{w} = \graph{\vertices}{\edges}$ be a DAG.
We say that $\dagAW{\aut}{w}$ is a sub-DAG of the DAG~$\dagAW{\aut}{w}' = \graph{\vertices'}{\edges'}$, denoted $\dagAW{\aut}{w} \subseteq \dagAW{\aut}{w}'$, if $\vertices \subseteq \vertices'$ and $\edges \subseteq \edges'$.

A vertex $\vertex{q}{l}$ is said to be on level $l$; 
by definition, there are at most $\size{\states}$ vertices on each level. 
A vertex $\vertex{q}{l}$ is an \emph{$\acc$-vertex} if $q \in \acc$.
A finite/infinite sequence of vertices $\branch = \vertex{q_{\ell}}{\ell} \vertex{q_{\ell + 1}}{\ell + 1} \cdots$ is called a \emph{branch} of $\dagAW{\aut}{w}$ if for each $l \geq \ell$, there is an edge from $\vertex{q_{l}}{l}$ to $\vertex{q_{l + 1}}{l+1}$.
We call $\branch$ an \emph{initial branch} if $\ell = 0$ and $q_{0} \in \initialStates$.
An \emph{$\omega$-branch} of $\dagAW{\aut}{w}$ is a branch of infinite length.
A \emph{fragment} $\vertex{q_{l}}{l} \vertex{q_{l+1}}{l+1} \cdots$ of $\branch$ is said to be a branch from the vertex $\vertex{q_{l}}{l}$; 
a fragment $\vertex{q_{l}}{l} \cdots \vertex{q_{l+k}}{l+k}$ of $\branch$ is said to be a \emph{path} from $\vertex{q_{l}}{l}$ to $\vertex{q_{l+k}}{l+k}$, where $k \in \naturals$. 
A vertex $\vertex{q_{j}}{j}$ is \emph{reachable} from $\vertex{q_{l}}{l}$ if there is a path from $\vertex{q_{l}}{l}$ to $\vertex{q_{j}}{j}$.
In particular, a vertex by default can reach itself.
We call a vertex $\vertex{q}{l}$ \emph{finite} if there are no $\omega$-branches in $\dagAW{\aut}{w}$ starting from $\vertex{q}{l}$; 
we call a vertex $\vertex{q}{l}$ \emph{$\acc$-free} if it is not finite and no $\acc$-vertices are reachable from $\vertex{q}{l}$ in $\dagAW{\aut}{w}$.
Given two $\omega$-branches $\branch = \vertex{q_{\ell}}{\ell} \vertex{q_{\ell + 1}}{\ell + 1} \cdots$ and $\branch' = \vertex{q'_{\ell'}}{\ell'} \vertex{q'_{\ell' + 1}}{\ell' + 1} \cdots$, we say that $\branch$ and $\branch'$ \emph{join} (or, \emph{merge}) if there is $n \geq \max\setnocond{\ell, \ell'}$ such that for $q_{j} = q'_{j}$ each $j \geq n$.

It is easy to recognize that it is possible to establish a bijection between the set of runs of $\aut$ on $w$ and the set of initial $\omega$-branches in $\dagAW{\aut}{w}$. 
In fact, to a run $\run = q_{0} q_{1} \cdots$ of $\aut$ over $w$ corresponds the initial $\omega$-branch $\hat{\run} = \vertex{q_{0}}{0} \vertex{q_{1}}{1} \cdots$ and, symmetrically, to an initial $\omega$-branch $\branch = \vertex{q_{0}}{0} \vertex{q_{1}}{1} \cdots$ corresponds the run $\hat{\branch} = q_{0} q_{1} \cdots$.
In both cases, the definitions of initial branch and run are respected since both require that for each $l \in \naturals$, $q_{l+1} \in \trans(q_{l}, \wordletter{w}{l})$.
As an immediate result, we have that $w$ is accepted by $\aut$ if and only if there exists an initial $\omega$-branch in $\dagAW{\aut}{w}$ that visits $\acc$-vertices infinitely often; 
we say that such an initial $\omega$-branch is \emph{accepting}. 
$\dagAW{\aut}{w}$ is called accepting if and only if there exists an accepting initial $\omega$-branch in $\dagAW{\aut}{w}$.
In the remainder of the paper we assume that all branches are initial, unless otherwise specified.

Assume that $\aut$ is an FANBA. 
Then an accepting $\omega$-branch in $\dagAW{\aut}{w}$, if it exists, only merges with other (accepting) $\omega$-branches for finitely many times.
That is, there exists a level $\separatingLevel \geq 1$ such that all vertices $\vertex{q_{l}}{l}$ with $l > \separatingLevel$ on an accepting $\omega$-branch have exactly one predecessor, i.e., there is only one vertex $\vertex{q_{l-1}}{l-1}$ such that $(\vertex{q_{l-1}}{l-1}, \vertex{q_{l}}{l}) \in E$;
we call the level $\separatingLevel$ a \emph{separating level}.
Intuitively, given an accepting $\omega$-branch $\branch$ and a separating level $\separatingLevel$, we can use $\separatingLevel$ to split $\branch$ into two parts: 
the finite initial part, of the levels below $\separatingLevel$, where other branches can merge with $\branch$; 
and the infinite part, of the levels above $\separatingLevel$, where $\branch$ visits $\acc$-vertices infinitely often and where $\branch$ is not affected by the behavior of other $\omega$-branches.
For example, consider the accepting DAG~$\dagAW{\aut}{w}$ of $\aut$ over $b^{\omega}$ in Figure~\ref{fig:example2ANBA}: 
the separating level is $2$, because each vertex $\vertex{q_{1}}{i}$ with $i > 2$ only has $\vertex{q_{1}}{i - 1}$ as its unique predecessor.
The separating level cannot be smaller than $2$ since the vertex $\vertex{q_{1}}{2}$ has two predecessors, namely $\vertex{q_{1}}{1}$ and $\vertex{q_{2}}{1}$.
We formalize the existence of separating levels of $\dagAW{\aut}{w}$ for an FANBA~$\aut$ in the following lemma.

\begin{restatable}[Separating Levels of Accepting DAGs of FANBAs]{lemma}{separatingLevelDAG}
\label{lem:separatingLevelDAG}
    Let $\aut$ be an FANBA and $\dagAW{\aut}{w}$ be the accepting DAG of $\aut$ over $w \in \lang{\aut}$.
    Then there must exist a separating level $\separatingLevel \geq 1$ in $\dagAW{\aut}{w}$.
\end{restatable}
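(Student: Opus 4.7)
}

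The plan is to argue by contradiction: assume no separating level exists, and use this to manufacture infinitely many distinct accepting initial $\omega$-branches in $\dagAW{\aut}{w}$, contradicting the finite ambiguity of $\aut$.

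First, I would unpack the negation. If no $\separatingLevel \geq 1$ is separating, then for every such $\separatingLevel$ there is some $l > \separatingLevel$ and some vertex $\vertex{q}{l}$ lying on an accepting $\omega$-branch of $\dagAW{\aut}{w}$ with at least two predecessors. Hence there is an infinite sequence of levels $l_{1} < l_{2} < \cdots$ at each of which some accepting-branch vertex has multiple predecessors. Since $\aut$ is an FANBW and $w \in \lang{\aut}$, the bijection between accepting runs and accepting initial $\omega$-branches (noted right after Definition~\ref{def:runDAG}) yields that the set of accepting $\omega$-branches of $\dagAW{\aut}{w}$ is finite. By the pigeonhole principle I can therefore fix a single accepting branch $\branch = \vertex{p_{0}}{0} \vertex{p_{1}}{1} \cdots$ and an infinite subsequence of levels, which I relabel $l_{1} < l_{2} < \cdots$, such that for each $i$ the vertex $\vertex{p_{l_{i}}}{l_{i}}$ admits a predecessor $\vertex{u_{i}}{l_{i} - 1}$ with $u_{i} \neq p_{l_{i}-1}$.

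Next, I would build, for each $i$, a new initial $\omega$-branch. Because every vertex of $\dagAW{\aut}{w}$ is reachable from $\vertices_{0} = \inits$ (by induction on the level, using the definition of $\vertices_{l+1}$), there is a finite path $\pi_{i}$ from an initial vertex to $\vertex{u_{i}}{l_{i}-1}$. Concatenate $\pi_{i}$ with the edge to $\vertex{p_{l_{i}}}{l_{i}}$ (which exists since $\vertex{u_{i}}{l_{i}-1}$ is a predecessor of $\vertex{p_{l_{i}}}{l_{i}}$), followed by the suffix of $\branch$ from level $l_{i}$ onwards, to obtain an initial $\omega$-branch $\tilde{\branch}_{i}$. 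Because $\tilde{\branch}_{i}$ coincides with $\branch$ from level $l_{i}$ on and $\branch$ visits $\acc$-vertices infinitely often, $\tilde{\branch}_{i}$ is accepting.

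The main obstacle is then showing that the $\tilde{\branch}_{i}$'s are pairwise distinct (and distinct from $\branch$), since each is assembled from some arbitrary path $\pi_{i}$ followed by a tail of $\branch$, and the $\pi_{i}$'s could a priori share vertices. I would bypass this by comparing each $\tilde{\branch}_{i}$ to $\branch$ at level $l_{i} - 1$: by construction $\tilde{\branch}_{i}$ passes through $\vertex{u_{i}}{l_{i}-1}$ at that level, whereas $\branch$ passes through $\vertex{p_{l_{i}-1}}{l_{i}-1}$, and $u_{i} \neq p_{l_{i}-1}$. For the pairwise distinction, given $i < j$, the branch $\tilde{\branch}_{i}$ follows $\branch$ at level $l_{j}-1$ (since $l_{j}-1 \geq l_{i}$), so $\tilde{\branch}_{i}$ is at $\vertex{p_{l_{j}-1}}{l_{j}-1}$ there, while $\tilde{\branch}_{j}$ is at $\vertex{u_{j}}{l_{j}-1}$ with $u_{j} \neq p_{l_{j}-1}$; hence $\tilde{\branch}_{i} \neq \tilde{\branch}_{j}$. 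This exhibits infinitely many distinct accepting initial $\omega$-branches of $\dagAW{\aut}{w}$, contradicting finite ambiguity of $\aut$ on $w$. Finally, once finitely many levels with multiple predecessors are established, setting $\separatingLevel$ to their maximum (or to $1$ if no such level exists) yields a separating level $\separatingLevel \geq 1$.
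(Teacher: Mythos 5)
Your proposal is correct and follows essentially the same route as the paper's proof: both reduce the claim to the fact that an FANBW has only finitely many accepting $\omega$-branches in $\dagAW{\aut}{w}$, and argue that infinitely many multi-predecessor vertices on an accepting branch would manufacture infinitely many distinct accepting initial $\omega$-branches. Your version merely makes explicit (via the pigeonhole step and the construction of the branches $\tilde{\branch}_{i}$) what the paper's proof asserts in one sentence, which is a welcome level of detail but not a different argument.
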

The proof for this lemma is given in~\ref{app:otherProofs} and it relates the fact that $\aut$ has only finitely many accepting runs over $w$ with the fact that accepting $\omega$-branches eventually must stop to share vertices; 
the separating level is then the level of the last vertex that has been shared by any two accepting $\omega$-branches.

An immediate consequence of Lemma~\ref{lem:separatingLevelDAG} is that for each vertex $v$ in $\dagAW{\aut}{w}$ with more than one incoming edge, keeping only one of the incoming edges of $v$ does not change whether $\dagAW{\aut}{w}$ is accepting.
In fact, the definition of branch joining prevents us to merge an accepting and a non-accepting branch, since the former visits infinitely often $\acc$-vertices, while the latter at some point stops to visit an $\acc$-vertex.
Thus, it is not possible to find $n \in \naturals$ such that the two branches coincide after position $n$, hence at least one (non-)accepting branch must remain in $\dagAW{\aut}{w}$ after merging, if there were some before the merging operation.
This means that we can reduce the number of branches in a DAG without affecting its acceptance property.

Given an NBA~$\aut = (\states, \initialStates, \trans, \acc)$, let $\stateOrder$ be some total order over $\states$.
We construct the \emph{edge-reduced} DAG~$\reduceddagAW{\aut}{w} = \graph{\vertices}{\reducededges}$ of $\dagAW{\aut}{w}$, in which each vertex only has at most one predecessor, so that $\reduceddagAW{\aut}{w}$ is codeterministic, by means of the following edge removal policy:
if there is a vertex in $\dagAW{\aut}{w}$ with multiple incoming edges, we keep only the edge from the minimal predecessor according to the enumeration of the states induced by $\stateOrder$.
Formally, the definition of edges in $\reduceddagAW{\aut}{w}$ is given as follows. 
\begin{definition}[Reduced Run DAG]
\label{def:reducedRunDAG}
    Given an NBA~$\aut = (\states, \initialStates, \trans, \acc)$,
    a total order $\stateOrder$ over $\states$,
    and an $\omega$-word $w \in \infwords$, 
    let $\dagAW{\aut}{w} = \graph{\vertices}{\edges}$ be the run DAG of $\aut$ over $w$. 
    The \emph{reduced run DAG} $\reduceddagAW{\aut}{w} = \graph{\vertices}{\reducededges}$ of $\aut$ over $w$ has as set of edges
    \[
        \reducededges = \setcond{(\vertex{q}{l}, \vertex{q'}{l + 1}) \in \edges}{q = \min_{\stateOrder} \setcond{p \in \states}{q' \in \trans(p, \wordletter{w}{l}), \vertex{p}{l} \in \vertices}}\text{.} 
    \]
\end{definition}

The following lemma ensures that $\reduceddagAW{\aut}{w}$ is accepting only when $\aut$ accepts~$w$.
\begin{restatable}{lemma}{FANBAacceptanceCodeterministicDAG}
\label{lem:FANBAacceptanceCodeterministicDAG}
    Given an FANBA~$\aut$ and a word $w \in \infwords$, let $\reduceddagAW{\aut}{w}$ be the codeterministic DAG of $\aut$ over $w$.
    Then $w$ is accepted by $\aut$ if and only if $\reduceddagAW{\aut}{w}$ is accepting.
\end{restatable}
The proof for this lemma is given in~\ref{app:otherProofs} and makes use of $\dagAW{\aut}{w}$ to relate the visits to accepting states to the visits to accepting vertices.

\begin{markedexample}
\label{ex:2ANBAreducedDAG}
    As an example of reduced codeterministic DAG, consider the $\dagAW{\aut}{b^{w}}$ shown in Figure~\ref{fig:example2ANBA} and the natural enumeration $\enumerationnocond{q_{0}, q_{1}, q_{2}, q_{3}}$ for the states; 
    its reduced codeterministic counterpart is obtained by deleting the dashed edge from $\vertex{q_{2}}{1}$ to $\vertex{q_{1}}{2}$, since there is the edge from $\vertex{q_{1}}{1}$ to $\vertex{q_{1}}{2}$ and $q_{1}$ is smaller than $q_{2}$ according to the states enumeration. 
    It is easy to see that it is still accepting.
\end{markedexample}

\begin{markedremark}
\label{rem:NBAreducedDAGcounterexampleToLemmafanba-run-acc}
    The condition in Lemma~\ref{lem:FANBAacceptanceCodeterministicDAG} about $\aut$ being an FANBA is fundamental. 
    In fact, consider the NBA~$\aut[N]$, the word $b^{\omega}$, and the corresponding DAG~$\dagAW{\aut[N]}{b^{\omega}}$, depicted in Figure~\ref{fig:exampleBA}; 
    recall that $\aut[N]$ is not an FANBA.
    If the enumeration of $\states$ is $\enumerationnocond{q_{3}, q_{2}, q_{1}, q_{0}}$, then the edges to be removed are the ones shown in Figure~\ref{fig:exampleBA} as dashed arrows.
    Then every accepting vertex $\vertex{q_{1}}{l}$ has no successor and the only $\omega$-branch is the bottom one composed only by solid arrows, that has no accepting vertex at all.
\end{markedremark}

As mentioned at the beginning of this section, we will use DAGs and codeterministic DAGs as the building blocks for the complementation algorithms \rkc and \slc.
If we consider the set $S_{\ell}$ of the states $q$ occurring in the vertices $\vertex{q}{\ell}$ at level $\ell$ of a given DAG~$\dagAW{\aut}{w}$, we can see that $S_{\ell}$ can be obtained by standard subset construction.
If we consider the edge-reduced $\reduceddagAW{\aut}{w}$ counterpart of $\dagAW{\aut}{w}$, we can simplify $S_{\ell}$ to $S'_{\ell}$ by keeping only the minimal states.
This means that we can use the information provided by $\reduceddagAW{\aut}{w}$ to define how to construct $S'_{\ell + 1}$ from $S'_{\ell}$, that is, we can define a transition function for these sets.

\begin{definition}[Transition Function for Codeterministic DAGs]
\label{def:edge-relation-e}
    Given an FANBA~$\aut$ and $w \in \infwords$, let $S \subseteq \states$ be the set of states at level $\ell$ of $\reduceddagAW{\aut}{w}$; 
    let $S' = \trans(S, \wordletter{w}{\ell})$ be the set of states at level $\ell + 1$ of $\reduceddagAW{\aut}{w}$ and 
    $S_{\mathit{min}} = \setcond{q_{m} \in S}{q_{m} = \min_{\stateOrder} \setcond{p \in S}{q' \in \trans(p, \wordletter{w}{\ell})}, q' \in S'}$ 
    be the set of minimal predecessors of $S'$.
    Then, for a set of states $S_{1} \subseteq S$, we define $\cotrans_{w, \ell}(S_{1}, \wordletter{w}{\ell}) = \trans(S_{1} \cap S_{\mathit{min}}, \wordletter{w}{\ell})$. We call $\cotrans_{w, \ell}$ the \emph{reduced transition function at level~$\ell$} in $\reduceddagAW{\aut}{w}$. 
\end{definition}

\begin{figure}
    \centering
    \resizebox{\linewidth}{!}{
    \includegraphics{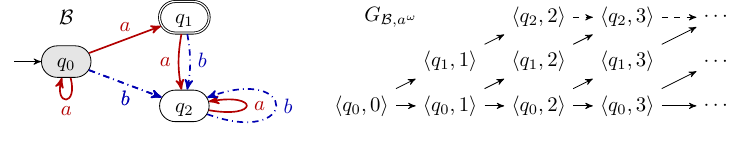}
    }
    \caption{Another FANBA~$\aut[B]$ with $\initialStates = \setnocond{q_{0}}$ and $\acc = \setnocond{q_{1}}$ and its run DAG~$\dagAW{\aut[B]}{a^{\omega}}$.}
    \label{fig:num-of-omega-branches}
\end{figure}
\begin{figure}
    \centering
    \includegraphics{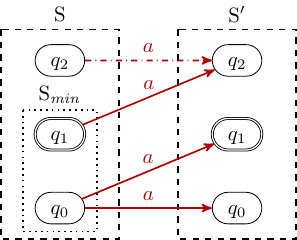}
    \caption{Reduced transition $S' = \cotrans(S,a)$ for the FANBA~$\aut[B]$ shown in Figure~\ref{fig:num-of-omega-branches}.}
    \label{fig:reduced-trans}
\end{figure}
\begin{markedexample}
\label{example:trans-not-on-level}
    Consider the FANBA~$\aut[B]$ and its run DAG~$\dagAW{\aut[B]}{a^{\omega}}$ on $a^{\omega}$, both shown in Figure~\ref{fig:num-of-omega-branches}.
    As order for the states, consider $q_{0} \stateOrder q_{1} \stateOrder q_{2}$.
    The corresponding reduced run DAG~$\reduceddagAW{\aut[B]}{a^{\omega}}$ is obtained by removing the dashed arrows.
    Regarding the level $\ell = 0$, we have $S = \setnocond{q_{0}}$ and we derive the sets $S' = \setnocond{q_{0}, q_{1}}$ and $S_{\mathit{min}} = \setnocond{q_{0}}$; 
    the latter follows from the fact that both $q_{0}$ and $q_{1}$ have $q_{0}$ as minimum (and only) $a$-predecessor.
    Thus, for each $S_{1} \subseteq \states$, we get $\cotrans_{w, 0} (S_{1}, a) = \emptyset$ when $q_{0} \notin S_{1}$ and $\cotrans_{w, 0} (S_{1}, a) = \setnocond{q_{0}, q_{1}}$ when $q_{0} \in S_{1}$.
    For level $\ell = 1$, we have $S = \setnocond{q_{0}, q_{1}}$ from which we get the sets $S' = \setnocond{q_{0}, q_{1}, q_{2}}$ and $S_{\mathit{min}} = \setnocond{q_{0}, q_{1}}$, where $q_{1}$ is now due to $q_{2} \in S'$.
    This means that $\cotrans_{w, 1} (S_{1}, a)$ contains both $q_{0}$ and $q_{1}$ if $q_{0} \in S_{1}$; 
    it also contains $q_{2}$ if $q_{1} \in S_{1}$.
    Lastly, for any level $\ell \geq 2$, we have $S = \states = \setnocond{q_{0}, q_{1}, q_{2}}$, $S' = \setnocond{q_{0}, q_{1}, q_{2}}$, and $S_{\mathit{min}} = \setnocond{q_{0}, q_{1}}$; 
    note that $q_{2} \notin S_{\mathit{min}}$ since we have $\setcond{p \in S}{q_{2} \in \trans(p, \wordletter{w}{\ell})} = \setnocond{q_{1}, q_{2}}$ and the $\stateOrder$-minimum is $q_{1}$.
    See Figure~\ref{fig:reduced-trans} for a graphical representation of the sets.
    This implies that $\cotrans_{w, \ell} = \cotrans_{w, 1}$.
\end{markedexample}

In general, the reduced transition function $\cotrans_{w, \ell}$ seems to depend on the level $\ell$ and the word $w$ yielding the edge connections between vertices at levels $\ell$ and $\ell + 1$ in $\reduceddagAW{\aut}{w}$. 
We claim, instead, that $\cotrans_{w, \ell}$ is not dependent on the level $\ell$ and the word $w$, due to our specific choice of the set $S_{\mathit{min}}$. 
Thus, we can just drop $\ell$ and $w$ from $\cotrans_{w, \ell}$, yielding $\cotrans$.

\begin{restatable}{lemma}{FANBAindependentReducedTrans}
\label{lem:FANBAindependentReducedTrans}
    Given an FANBA~$\aut$, $w_{1}, w_{2} \in \infwords$, $S \subseteq \states$, and $b \in \alphabet$, assume that there are two levels $\ell_{1}$ in $\reduceddagAW{\aut}{w_{1}}$ and $\ell_{2}$ in $\reduceddagAW{\aut}{w_{2}}$ such that the set of states and the input letter at level $\ell_{1}$ in $\reduceddagAW{\aut}{w_{1}}$ and at level $\ell_{2}$ in $\reduceddagAW{\aut}{w_{2}}$ are $S$ and $b$, respectively, for both levels.
    Then $\cotrans_{w_{1}, \ell_{1}}$ of $\reduceddagAW{\aut}{w_{1}}$ and $\cotrans_{w_{2}, \ell_{2}}$ of $\reduceddagAW{\aut}{w_{2}}$ are equal.
\end{restatable}
The proof for this lemma is given in~\ref{app:otherProofs}; 
it is a simple consequence of the assumptions about the set of states $S$ and the letter $b$.

\begin{corollary}
\label{cor:FANBAuniqueReducedTransitionFunctionDAG}
    Given an FANBA~$\aut$, there is a unique reduced transition function $\cotrans \colon 2^{\states} \times \alphabet \to 2^{\states}$ such that for each $w \in \infwords$, $\ell \in \naturals$, $S \in 2^{\states}$, and $a \in \alphabet$, if $\cotrans_{w, \ell}$ of $\reduceddagAW{\aut}{w}$ is defined on $(S, a)$, then $\cotrans(S, a) = \cotrans_{w, \ell}(S, a)$.
    We call $\cotrans$ the \emph{unique reduced transition function} associated with the reduced DAGs of $\aut$.
\end{corollary}

As one can expect, in general $\cotrans$ provides fewer transitions than $\trans$;
the missing transitions, however, are not necessary for the complementation construction, as long as the state order $\stateOrder$ is ``good''.
\begin{markedremark}[$\cotrans$ is different from $\trans$]
\label{ex:reduceTransition}
    Consider again the FANBA~$\aut[B]$ shown in Figure~\ref{fig:num-of-omega-branches} and $w = a^{\omega}$.
    As shown in Figure~\ref{fig:reduced-trans} and explained  below Definition~\ref{def:edge-relation-e}, the definition of $S_{\mathit{min}}$ makes us ignore the contribution of the transition $(q_{2},a,q_{2}) \in \trans$ to $\cotrans$.
    While this does not affect the sequence of sets of states reached while running over $w$, it does affect the behavior when we consider only some of the states. 
    We have seen that after reading the first two $a$ inputs, we reach $S = \setnocond{q_{0}, q_{1}, q_{2}}$; 
    by reading another $a$, we have $\trans(S, a) = S = \cotrans(S, a)$.
    However, if we focus on the singleton set $\setnocond{q_{2}}$, we get $\trans(\setnocond{q_{2}}, a) = \setnocond{q_{2}}$ while $\cotrans(\setnocond{q_{2}}, a) = \emptyset$.
    As we will see in the following sections presenting the complementation algorithms, removing unnecessary transitions can be useful when tracking rejected words and constructing the complementary automaton.
\end{markedremark}

Due to Corollary~\ref{cor:FANBAuniqueReducedTransitionFunctionDAG}, we can just use the reduced transition function $\cotrans$ with respect to a given set of states $S$ and an input letter $b$, independently from the actual level $\ell$ and word $w$ (of $\reduceddagAW{\aut}{w}$). 
This is particularly helpful in the construction of complementary NBAs of FANBAs (see Definitions~\ref{def:FANBArankBasedComplementation} and~\ref{def:FANBAsliceBasedComplementation}), since it allows us to ``ignore the past'', i.e., how we reached the current set of states $S \subseteq \states$. 
We remark that the codeterministic DAGs we construct in this paper from a run DAG are not the only possible codeterministic DAGs one can generate;
in the following example we provide another construction, so that $\cotrans_{w, \ell}$ indeed depends on the level $\ell$.

\begin{figure}
    \centering
    \resizebox{\linewidth}{!}{
    \includegraphics{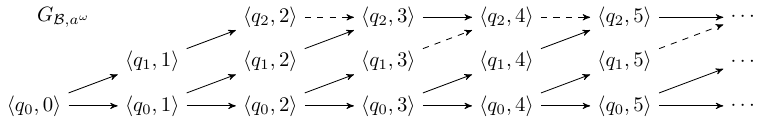}
    }
    \caption{A different reduced run DAG~$\dagAW{\aut[B]}{a^{\omega}}$ for the FANBA~$\aut[B]$ shown in Figure~\ref{fig:num-of-omega-branches}.} 
    \label{fig:cotransDependingOnLevel}
\end{figure}

\begin{markedexample}[$\cotrans_{w, \ell}$ depending on $\ell$]
\label{ex:cotransDependingOnLevel}
    Consider the DAG~$\dagAW{\aut[B]}{a^{\omega}}$ shown in Figure~\ref{fig:num-of-omega-branches} and take any level $\ell \geq 2$; 
    we have that $S^{\ell} = \setnocond{q_{0}, q_{1}, q_{2}}$.
    Moreover, we have that $\trans(S^{\ell}, a) = S^{\ell}$ as the set of successor states on each level $\ell \geq 2$.
    
    Rather than keeping the predecessor with the minimal index of a state in $S_{\mathit{min}}$ (see Definition~\ref{def:edge-relation-e}), we now define $S_{\mathit{min}}$ as $S^{\ell}_{\mathit{min}}$, which depends on the level $\ell$ as follows.
    When $\ell$ is an odd number, we define $S^{\ell}_{\mathit{min}} = \setnocond{q_{0}, q_{1}}$, otherwise we take $S^{\ell}_{\mathit{min}} = \setnocond{q_{0}, q_{2}}$.
    That is, we keep the predecessor $q_{1}$ of $q_{2}$ at odd levels and predecessor $q_{2}$ at even levels.
    Let $\cotrans_{a^{\omega}, \ell}$ be the resulting reduced transition function at level $\ell$. 
    For instance, we have that $\cotrans_{a^{\omega}, \ell}(\setnocond{q_{1}}, a) = \trans(\setnocond{q_{1}} \cap S^{\ell}_{\mathit{min}}, a) = \setnocond{q_{2}}$ when $\ell$ is odd and $\cotrans_{a^{\omega}, \ell}(\setnocond{q_{1}}, a) = \emptyset$ otherwise.
    Clearly, the definition of $\cotrans_{a^{\omega}, \ell}$ is dependent on the level $\ell$ and the resulting codeterministic DAG, shown in Figure~\ref{fig:cotransDependingOnLevel}, is different from the one depicted in Figure~\ref{fig:num-of-omega-branches}, once the dashed arrows have been removed. 
\end{markedexample}

In the remainder of this paper, we may just write $\cotrans(q, b)$ instead of $\cotrans(\setnocond{q}, b)$, for a singleton set $\setnocond{q}$.
The transition function $\cotrans$ will be used in the complementation of FANBAs since the complementation essentially constructs DAGs and then identifies accepting DAGs.

One can verify that each vertex in the codeterministic DAG~$\reduceddagAW{\aut}{w}$ of $\aut$ over $w$ has at most one predecessor. 
It follows that the number of $\omega$-branches in a non-accepting/accepting $\reduceddagAW{\aut}{w}$ is at most $\size{\states}$, as formalized in Lemma~\ref{lem:FANBAfiniteOmegaBranchesInCodeterministicDAG}.
\begin{restatable}[Finite Number of $\omega$-Branches in Codeterministic DAGs for FANBAs]{lemma}{FANBAfiniteOmegaBranchesInCodeterministicDAG}
\label{lem:FANBAfiniteOmegaBranchesInCodeterministicDAG}
    Given an FANBA~$\aut$, let $\reduceddagAW{\aut}{w}$ be a codeterministic DAG of $\aut$ over $w \in \infwords$.
    Then the number $m$ of $\omega$-branches in $\reduceddagAW{\aut}{w}$ is at most $\size{\states}$.
\end{restatable}
The proof for this lemma is given in~\ref{app:otherProofs}.
Intuitively, the fact that $\reduceddagAW{\aut}{w}$ is codeterministic prevents its $\omega$-branches to share a vertex (except for the ones at level $0$), so the Pigeonhole principle prevents such branches to exceed $\size{\states}$, the number of vertices at each level.

This result does not extend to finite branches (cf.\@ the next example);
however, this is not a problem because finite branches can only visit a finite number of accepting vertices, so they play no role in the acceptance of the word $w$. 

\begin{markedexample}
    Consider the DAG~$\dagAW{\aut[B]}{a^{\omega}}$ shown in Figure~\ref{fig:num-of-omega-branches}:
    it is easy to verify that there are infinitely many $\omega$-branches in the non-reduced DAG~$\dagAW{\aut[B]}{a^{\omega}}$ over $a^{\omega}$:
    we have an $\omega$-branch $\vertex{q_{0}}{0} \cdots \vertex{q_{0}}{i} \vertex{q_{1}}{i + 1} \vertex{q_{2}}{i + 2} \vertex{q_{2}}{i + 3} \cdots$ for each $i \in \naturals$.
    On the other hand, its codeterministic counterpart $\reduceddagAW{\aut[B]}{a^{\omega}}$, obtained by removing the dashed edges, has only one $\omega$-branch: 
    $\vertex{q_{0}}{0} \vertex{q_{0}}{1} \vertex{q_{0}}{2} \cdots$.
    
    Regarding the number of finite branches, $\reduceddagAW{\aut[B]}{a^{\omega}}$ still has infinitely many of them: 
    there is a finite branch $\vertex{q_{0}}{0} \cdots \vertex{q_{0}}{i} \vertex{q_{1}}{i + 1} \vertex{q_{2}}{i + 2}$ for each $i \in \naturals$, corresponding to the $\omega$-branch $\vertex{q_{0}}{0} \cdots \vertex{q_{0}}{i} \vertex{q_{1}}{i + 1} \vertex{q_{2}}{i + 2} \vertex{q_{2}}{i + 3} \cdots$ in $\dagAW{\aut[B]}{a^{\omega}}$ that has been truncated after level $i + 2$ by the removal of the edge $(\vertex{q_{2}}{i + 2}, \vertex{q_{2}}{i + 3})$.
\end{markedexample}

After redundant edges have been cut off, we obtain a DAG~$\reduceddagAW{\aut}{w}$ with a finite number of $\omega$-branches.
Thus if $w \notin \lang{\aut}$, there must exist a maximum level $l > 0$ among those $\omega$-branches such that each $\acc$-vertex $\vertex{q}{l'}$ with $l' \geq l$ is finite, which can be used for identifying whether $\reduceddagAW{\aut}{w}$ is accepting in the complementation of FANBAs.

\begin{definition}[Stable level in DAGs]
\label{def:stableLevel}
    Given an NBA~$\aut$ and an $\omega$-word $w \in \infwords$, we call a level $\stableLevel > 0$ \emph{stable} in $\dagAW{\aut}{w}$ if each $\acc$-vertex $\vertex{q}{i}$ with $i \geq \stableLevel$ in $\dagAW{\aut}{w}$ is finite.
\end{definition}

\begin{restatable}[Stable Level in Non-accepting Codeterministic DAGs for FANBAs]{lemma}{FANBAwordRejectedStableLevelInCodeterministicDAG}
\label{lem:FANBAwordRejectedStableLevelInCodeterministicDAG}
    Given an FANBA~$\aut$ and an $\omega$-word $w \in \infwords$, let $\reduceddagAW{\aut}{w}$ be the codeterministic DAG of $\aut$ over $w$.
    Then $w \notin \lang{\aut}$ if and only if there exists a stable level $\stableLevel > 0$ in $\reduceddagAW{\aut}{w}$.
\end{restatable}
The proof for this lemma is given in~\ref{app:otherProofs};
it relates the existence of the stable label of $\reduceddagAW{\aut}{w}$ with the fact that each $\omega$-branch of $\reduceddagAW{\aut}{w}$ at some point terminates to visit accepting vertices.

Consider again the DAG~$\dagAW{\aut[B]}{w}$ depicted in Figure~\ref{fig:num-of-omega-branches}:
there does not exist a stable level in this non-reduced DAG since each $\acc$-vertex $\vertex{q_{1}}{l}$ with $l \geq 1$ is not finite, given the fact that it occurs in the $\omega$-branch $\vertex{q_{1}}{l} \vertex{q_{2}}{l+1} \vertex{q_{2}}{l+2} \cdots$.
On the other hand, we can obtain its codeterministic counterpart $\reduceddagAW{\aut[B]}{w}$ by removing the dashed edges from $\vertex{q_{2}}{l}$ to $\vertex{q_{2}}{l+1}$.
It follows that the level $\stableLevel = 1$ is stable:
each $\acc$-vertex $\vertex{q_{1}}{l}$ with $l \geq 1$ occurs only in a single finite branch.

\section{Rank-Based Complementation}
\label{sec:rank-based}

We first recall in Subsection~\ref{ssec:rank-algo} the rank-based complementation (\rkc) algorithm proposed in~\cite{kupferman2001weak}, which constructs, for a given NBA~$\aut$, a complementary NBA~$\aut^{c}$ with $2^{\bigO(n \log n)}$ states.
Then in Subsection~\ref{ssec:FANBArankBasedComplementation} we show that, when restricted to FANBAs, \rkc based on the construction of codeterministic DAGs produces a complementary NBA~$\aut^{c}$ with $2^{\bigO(n)}$ states.

\subsection{Rank-Based Complementation Algorithm for NBAs}
\label{ssec:rank-algo}

\rkc was introduced by Kupferman and Vardi in~\cite{kupferman2001weak} to construct a complementary NBA~$\aut^{c}$ for a given NBA~$\aut$; 
this was done by identifying the DAGs of $\aut$ over non-accepted words $w \notin \lang{\aut}$. 
Intuitively, given a word $w \notin \lang{\aut}$,  all $\omega$-branches of the DAG of $\aut$ over $w$ will eventually stop visiting $\acc$-vertices, otherwise the word would be accepted. 
Based on this observation, in order to identify the non-accepting DAG of $\aut$ over $w$, Kupferman and Vardi introduced the notion of \emph{level rankings} for the DAG~$\dagAW{\aut}{w}$. 
By assigning only even ranks to $\acc$-vertices, they showed that there exists a unique ranking function that assigns ranks in $\range{2n}$ to the vertices of $\dagAW{\aut}{w}$ such that $w \notin \lang{\aut}$ if and only if all $\omega$-branches of $\dagAW{\aut}{w}$ eventually get trapped in odd ranks.

We now define the notion of level rankings for a run DAG. 
\begin{definition}[Level Ranking of a Run DAG]
\label{def:DAGlevelRanking}
    Given an NBA~$\aut$ with $n$ states and an $\omega$-word $w \in \infwords$, the level ranking of $\dagAW{\aut}{w} = \graph{\vertices}{\edges}$ is given by a ranking function $\levelRanking \colon \vertices \to \range{2n}$ that satisfies the following conditions:
    \begin{enumerate}[(i)]
    \item 
        for each vertex $\vertex{q}{i} \in \vertices$, if $\levelRanking(\vertex{q}{i}) \in \oddsetn{2n}$, then $q \notin \acc$,
    \item 
        for each edge $(\vertex{q}{i}, \vertex{q'}{i+1}) \in \edges$, $\levelRanking(\vertex{q'}{i+1}) \leq \levelRanking(\vertex{q}{i})$
    \end{enumerate}
\end{definition}
By Definition~\ref{def:DAGlevelRanking}, the ranks along a branch in the run DAG~$\dagAW{\aut}{w}$ decrease monotonically;
one can see that non-$\acc$-vertices get either even ranks or odd ranks while $\acc$-vertices get only even ranks.

Unfortunately, not all valid ranking functions defined by Definition~\ref{def:DAGlevelRanking} have the ability to decide whether $w$ is not accepted.
In fact, since non-$\acc$-vertices can also get even ranks, we simply cannot tell whether the infinite even ranks are due to the infinite number of $\acc$-vertices or just by $\levelRanking$.

Therefore, we want to define a specific ranking function $\ranking$ for $\dagAW{\aut}{w}$ that is able to tell whether a given run DAG~$\dagAW{\aut}{w}$ is accepting.
To this end, we will first introduce how to identify non-accepting DAGs by a finite number of pruning operations on them.

\begin{definition}[Sequence of DAGs for Identifying Rejected Words]
\label{def:sequenceDAGsRejectedWord}
    Given an NBA~$\aut$ and $w \in \infwords$, we define a sequence of DAGs $\idagAW{0}{\aut}{w} \supseteq \idagAW{1}{\aut}{w} \supseteq \cdots$, where $\idagAW{0}{\aut}{w} = \dagAW{\aut}{w}$, as follows.
    For each $i \geq 0$, 
    \begin{itemize}
    \item 
        $\idagAW{2i+1}{\aut}{w}$ is the DAG constructed from $\idagAW{2i}{\aut}{w}$ by removing all its finite vertices and the corresponding edges involving them, and
    \item 
        $\idagAW{2i+2}{\aut}{w}$ is the DAG constructed from $\idagAW{2i+1}{\aut}{w}$ by removing all $\acc$-free vertices in $\idagAW{2i+1}{\aut}{w}$ and the corresponding edges.
    \end{itemize}
\end{definition}
In practice, from $\dagAW{\aut}{w}$, we alternate between removing all vertices and edges that belong to a finite branch and removing all vertices and edges that belong to branches that will never reach an $\acc$-vertex.
This alternation is needed because by removing a vertex, we might make other vertices fulfill one of the two conditions.
The overall effect of these two repeated steps is that we prune $\dagAW{\aut}{w}$ by keeping only the vertices and edges occurring in some accepting $\omega$-branch.

\begin{markedexample}
    As an example of sequence of DAGs as given in Definition~\ref{def:sequenceDAGsRejectedWord}, consider the DAG~$\dagAW{\aut[B]}{a^{\omega}}$ shown in Figure~\ref{fig:num-of-omega-branches}.
    By definition, $\idagAW{0}{\aut[B]}{a^{\omega}} = \dagAW{\aut[B]}{a^{\omega}}$.
    
    For $i = 0$, we get the DAG~$\idagAW{1}{\aut[B]}{a^{\omega}}$ that is obtained from $\idagAW{0}{\aut[B]}{a^{\omega}}$ by removing all finite vertices (and corresponding edges).
    Since there are no finite vertices, we have that $\idagAW{1}{\aut[B]}{a^{\omega}} = \idagAW{0}{\aut[B]}{a^{\omega}} = \dagAW{\aut[B]}{a^{\omega}}$.
    We also get the DAG~$\idagAW{2}{\aut[B]}{a^{\omega}}$ as result of removing from $\idagAW{1}{\aut[B]}{a^{\omega}}$ all $\acc$-free vertices (and corresponding edges), that is, the vertices on $\omega$-branches that cannot reach an $\acc$-vertex.
    By definition, this means that we need to remove the vertices $\vertex{q_{2}}{j}$ for $j \geq 2$ and their incoming and outgoing edges.
    
    For $i = 1$, we get the DAG~$\idagAW{3}{\aut[B]}{a^{\omega}}$ that is obtained from $\idagAW{2}{\aut[B]}{a^{\omega}}$ by removing all finite vertices (and corresponding edges).
    This means that we have to remove all vertices $\vertex{q_{1}}{j}$ for $j \geq 1$ and their incoming and outgoing edges since each vertex $\vertex{q_{1}}{j}$ is not on an $\omega$-branch. 
    The resulting DAG~$\idagAW{3}{\aut[B]}{a^{\omega}}$ has only vertices $\vertex{q_{0}}{j}$ and the edges between them.
    
    We also get the DAG~$\idagAW{4}{\aut[B]}{a^{\omega}}$ as result of removing from $\idagAW{3}{\aut[B]}{a^{\omega}}$ all $\acc$-free vertices (and corresponding edges).
    Since there is no $\acc$-vertex in $\idagAW{3}{\aut[B]}{a^{\omega}}$, all vertices in $\idagAW{3}{\aut[B]}{a^{\omega}}$ are $\acc$-free, thus $\idagAW{4}{\aut[B]}{a^{\omega}}$ is empty.
    
    The following DAGs $\idagAW{k}{\aut[B]}{a^{\omega}}$ with $k > 4$ are trivially equal to $\idagAW{4}{\aut[B]}{a^{\omega}}$ since they are also all empty.
\end{markedexample}

Recall that $\acc$-free vertices cannot reach $\acc$-vertices.
We can see that each vertex $\vertex{q}{l}$ is either finite in $\idagAW{2i}{\aut}{w}$ or $\acc$-free in $\idagAW{2i+1}{\aut}{w}$ for all $i \geq 0$.
It was shown in~\cite{kupferman2001weak} that the pruning operations given in Definition~\ref{def:sequenceDAGsRejectedWord} will converge at $\idagAW{2n+1}{\aut}{w}$, as formalized below.

\begin{lemma}[\cite{kupferman2001weak}]
\label{lem:convergence-pruning}
    Given an NBA~$\aut$ with $n$ states and an $\omega$-word $w \in \infwords$, let $\idagAW{i}{\aut}{w}$, $i \in \naturals$, form the sequence of run DAGs defined in Definition~\ref{def:sequenceDAGsRejectedWord}.
    Then we have $\idagAW{k}{\aut}{w} = \idagAW{2n+1}{\aut}{w}$ for each $k > 2n$.
    In particular, $w$ is not accepted by $\aut$ if and only if $\idagAW{2n+1}{\aut}{w}$ is empty.
\end{lemma}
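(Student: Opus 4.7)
The plan is to prove convergence via a width-based monotonicity argument and to derive the acceptance characterization as a corollary.

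By Definition~\ref{def:sequenceDAGsRejectedWord}, the sequence $\idagAW{0}{\aut}{w} \supseteq \idagAW{1}{\aut}{w} \supseteq \cdots$ is decreasing. To quantify how much has been pruned I would introduce the \emph{width} $\omega_{i} = \liminf_{l \to \infty} \size{\setcond{q \in \states}{\vertex{q}{l} \in \idagAW{2i+1}{\aut}{w}}}$, which lies in $\range{n}$ since each level of $\dagAW{\aut}{w}$ contains at most $n$ vertices. A convenient handle is that $\omega_{i} \geq 1$ iff $\idagAW{2i+1}{\aut}{w}$ is non-empty, because the finite-vertex removal in passing from $\idagAW{2i}{\aut}{w}$ to $\idagAW{2i+1}{\aut}{w}$ ensures every remaining vertex extends to an $\omega$-branch of the current DAG and so contributes a vertex at each subsequent level.

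The technical core I would prove next is the claim that whenever $\idagAW{2(i+1)+1}{\aut}{w} \neq \idagAW{2i+1}{\aut}{w}$ we have $\omega_{i+1} < \omega_{i}$. First I would observe that if $\idagAW{2i+1}{\aut}{w}$ contains no $\acc$-free vertex, then every vertex of $\idagAW{2i+1}{\aut}{w}$ both reaches an $\acc$-vertex within $\idagAW{2i+1}{\aut}{w}$ and extends to an $\omega$-branch; chaining these two facts via a König-style compactness argument produces an accepting initial $\omega$-branch in $\dagAW{\aut}{w}$, and moreover in that case no further strict pruning can occur, so the sequence has already stabilized. Otherwise some $\acc$-free vertex is deleted in $\idagAW{2i+2}{\aut}{w}$; the crucial step is to trace how this deletion propagates: since $\acc$-free vertices form downward-closed sets along $\omega$-branches, an entire $\omega$-branch of $\idagAW{2i+1}{\aut}{w}$ is effectively cut, and the cascading finite-vertex pruning that yields $\idagAW{2i+3}{\aut}{w}$ eliminates that branch's contribution at almost every level, strictly reducing the $\liminf$ per-level count by at least one.

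With the width lemma in hand, the monotone sequence $\omega_{0} \geq \omega_{1} \geq \cdots$ of values in $\range{n}$ can strictly decrease at most $n$ times, after which it stabilizes and hence so does the DAG sequence, yielding $\idagAW{k}{\aut}{w} = \idagAW{2n+1}{\aut}{w}$ for every $k > 2n$. For the ``in particular'' part: if $w \notin \lang{\aut}$, a non-empty stabilized $\idagAW{2n+1}{\aut}{w}$ would contain neither finite nor $\acc$-free vertices, forcing an accepting $\omega$-branch in $\dagAW{\aut}{w}$ by the König argument above, a contradiction, so $\idagAW{2n+1}{\aut}{w}$ must be empty; conversely, if $w \in \lang{\aut}$, any accepting $\omega$-branch $\branch$ in $\dagAW{\aut}{w}$ has every vertex both non-finite (it lies on $\branch$) and non-$\acc$-free (it reaches infinitely many $\acc$-vertices further along $\branch$), so no pruning step ever removes a vertex of $\branch$, keeping $\idagAW{2n+1}{\aut}{w}$ non-empty. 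The main obstacle I anticipate is precisely the strict width-reduction step: showing that the removal of an $\acc$-free vertex together with the cascading finite-vertex pruning actually lowers the $\liminf$ of per-level counts and not merely the count at finitely many levels.
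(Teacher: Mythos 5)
Your argument is correct, and it is essentially the classical Kupferman--Vardi proof of this result: the paper itself gives no proof of Lemma~\ref{lem:convergence-pruning} but imports it from~\cite{kupferman2001weak}, where the same width-induction is carried out (phrased as ``at most $n-i$ vertices per level beyond some $l_i$'' rather than via a $\liminf$, which is an equivalent bookkeeping choice). The one step you flag as a worry goes through exactly as you hope: a deleted $\acc$-free vertex of $\idagAW{2i+1}{\aut}{w}$ is non-finite there, so its (entirely $\acc$-free) descendant cone meets every later level and its removal lowers the per-level count, hence the $\liminf$, by at least one.
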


According to Lemma~\ref{lem:convergence-pruning}, we know that $\dagAW{\aut}{w}$ is accepting if $\idagAW{2n+1}{\aut}{w}$ is not empty.
We show below (cf.~\cite{kupferman2001weak}) how the sequence of DAGs generated from Definition~\ref{def:sequenceDAGsRejectedWord} actually yields a \emph{unique} ranking function $\ranking$ over the set of vertices in $\dagAW{\aut}{w}$.
We refer to this ranking function $\ranking$ as the \emph{classical} ranking function.
\begin{definition}[Classical Ranking Function for DAG~$\dagAW{\aut}{w}$]
\label{def:rankingFunctionForDAGrejectedWord}
    Given the DAG~$\dagAW{\aut}{w}$, let $\idagAW{0}{\aut}{w} \supseteq \idagAW{1}{\aut}{w} \supseteq \cdots \supseteq \idagAW{2n}{\aut}{w} \supseteq \idagAW{2n+1}{\aut}{w}$ be the finite sequence of DAGs constructed according to Definition~\ref{def:sequenceDAGsRejectedWord}.

    The ranking function $\ranking \colon \vertices \to \naturals$ for $\dagAW{\aut}{w}$ is defined as follows:
    for each $0 \leq i \leq 2 n$,
    \begin{enumerate}
    \item 
        if $i$ is even, then $\ranking(\vertex{q}{l}) = i$ for each finite vertex $\vertex{q}{l}$ in $\idagAW{i}{\aut}{w}$;
    \item 
        if $i$ is odd, then $\ranking(\vertex{q}{l}) = i$ for each $\acc$-free vertex $\vertex{q}{l}$ in $\idagAW{i}{\aut}{w}$.
    \end{enumerate}
    We also set $\ranking(\vertex{q}{l}) = 2n$ for every vertex $\vertex{q}{l}$ in $\idagAW{2n+1}{\aut}{w}$ provided that $\idagAW{2n+1}{\aut}{w}$ is not empty.
\end{definition}

As mentioned before, every vertex $\vertex{q}{l}$ in $\dagAW{\aut}{w}$ must satisfy either one of the following conditions (cf.~\cite{kupferman2001weak}):
\begin{itemize}
\item
    $\vertex{q}{l}$ is present in $\idagAW{2n+1}{\aut}{w}$, which indicates that $w$ is accepted by $\aut$ and the rank of $\vertex{q}{l}$ is $2n$;
\item
    $\vertex{q}{l}$ is a finite vertex in $\idagAW{i}{\aut}{w}$, with $i$ being even, which means that the rank of $\vertex{q}{l}$ is $i$; or
\item
    $\vertex{q}{l}$ is a $\acc$-free vertex in $\idagAW{i}{\aut}{w}$, with $i$ being odd, which indicates that the rank of $\vertex{q}{l}$ is $i$,
\end{itemize}
where we have $0 \leq i \leq 2n$.
By analyzing Definition~\ref{def:sequenceDAGsRejectedWord}, we get that from $\idagAW{0}{\aut}{w}$ to $\idagAW{1}{\aut}{w}$ we remove all finite vertices of $\idagAW{0}{\aut}{w}$, thus their ranking is $0$; 
all other vertices are not finite, so they don't get assigned a rank, for now.
$\idagAW{2}{\aut}{w}$ is obtained from $\idagAW{1}{\aut}{w}$ by removing all $\acc$-free vertices, that get $1$ as ranking;
all remaining vertices are not $\acc$-free vertices, so they do not get now their ranking.
By removing $\acc$-free vertices, it can happen that some of the non-finite vertices $v$ $\idagAW{1}{\aut}{w}$ become finite in $\idagAW{2}{\aut}{w}$: 
this is the case when $v$ is only on infinite branches that visit $\acc$-vertices finitely often.
By the removal of the $\acc$-free vertices, all these infinite branches are cut after the last occurrence of an $\acc$-vertex;
this means that $v$ has become finite, so it will be removed from $\idagAW{2}{\aut}{w}$ and get ranking $2$.
We can see that each vertex $v$ gets its ranking only once, since it fulfills the conditions in Definition~\ref{def:rankingFunctionForDAGrejectedWord} only once, just before being removed by the pruning operations given in Definition~\ref{def:sequenceDAGsRejectedWord}.
Thus $\ranking$ defined in Definition~\ref{def:rankingFunctionForDAGrejectedWord} is a valid ranking function (cf.\@ Definition~\ref{def:DAGlevelRanking}).
As shown in~\cite{kupferman2001weak}, we can use the classical ranking function $\ranking$ defined in Definition~\ref{def:rankingFunctionForDAGrejectedWord} to identify whether a given run DAG~$\dagAW{\aut}{w}$ is accepting, so to decide whether $w$ is accepted by $\aut$.

\begin{lemma}[Identification of Nonaccepting Run DAGs~{\cite[Lemma~5.2]{kupferman2001weak}}]
\label{lem:DAGrejectedWordOmegaBranchTrappedInOddRanks}
    Let $\dagAW{\aut}{w}$ be a run DAG.
    $\aut$ rejects the word $w$ if and only if the unique classical ranking function $\ranking$ given in Definition~\ref{def:rankingFunctionForDAGrejectedWord} for $\dagAW{\aut}{w}$ is such that all $\omega$-branches of $\dagAW{\aut}{w}$ eventually get trapped in odd ranks.
\end{lemma}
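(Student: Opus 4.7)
The plan is to argue both directions separately, relying on Lemma~\ref{lem:convergence-pruning} for the forward direction and on the validity of $\ranking$ (with respect to Definition~\ref{def:DAGlevelRanking}) for the backward direction. First, I would briefly verify that $\ranking$ from Definition~\ref{def:rankingFunctionForDAGrejectedWord} does satisfy conditions (i) and (ii) of Definition~\ref{def:DAGlevelRanking}; namely, any vertex receiving an odd rank $2k+1$ is $\acc$-free in $\idagAW{2k+1}{\aut}{w}$ and hence in particular is not itself an $\acc$-vertex (since each vertex reaches itself), so condition (i) holds, while condition (ii) (rank non-increase along edges) follows because a successor of a finite vertex in $\idagAW{2i}{\aut}{w}$ is also finite there, and a successor of an $\acc$-free vertex in $\idagAW{2i+1}{\aut}{w}$ is also $\acc$-free there, so successors are removed no later than their predecessors.

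For the forward direction, assume $w \notin \lang{\aut}$. By Lemma~\ref{lem:convergence-pruning} the DAG $\idagAW{2n+1}{\aut}{w}$ is empty, so every vertex of $\dagAW{\aut}{w}$ receives a finite rank in $\range{2n}$. Fix any $\omega$-branch $\branch = v_{0} v_{1} v_{2} \cdots$. Condition (ii) of the ranking function gives that $\ranking(v_{0}) \geq \ranking(v_{1}) \geq \cdots$, so the ranks along $\branch$ stabilize at some value $r \in \range{2n}$, i.e.\ there exists $N$ such that $\ranking(v_{i}) = r$ for all $i \geq N$. Suppose for contradiction that $r$ is even; then by Definition~\ref{def:rankingFunctionForDAGrejectedWord} each $v_{i}$ with $i \geq N$ is a finite vertex of $\idagAW{r}{\aut}{w}$. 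However, since none of the $v_{i}$ with $i \geq N$ has been removed before stage $r$, the whole infinite suffix $v_{N} v_{N+1} \cdots$ still lies in $\idagAW{r}{\aut}{w}$, contradicting the finiteness of $v_{N}$ there. Hence $r$ is odd, which is exactly the claim that $\branch$ gets trapped in odd ranks.

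For the backward direction, assume that every $\omega$-branch of $\dagAW{\aut}{w}$ eventually gets trapped in odd ranks, and suppose for contradiction that $w \in \lang{\aut}$. Then there exists an accepting initial $\omega$-branch $\branch = v_{0} v_{1} \cdots$ that visits $\acc$-vertices infinitely often. By assumption there exists $N$ and an odd $r$ such that $\ranking(v_{i}) = r$ for all $i \geq N$. Pick any $\acc$-vertex $v_{j}$ on $\branch$ with $j \geq N$: by condition (i) of Definition~\ref{def:DAGlevelRanking} (already verified for $\ranking$), an $\acc$-vertex cannot receive an odd rank, contradicting $\ranking(v_{j}) = r$. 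Hence $w \notin \lang{\aut}$.

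The routine part is bookkeeping on the pruning stages; the main subtlety, which I flagged above, is the forward direction's contradiction step, where one has to make sure that the entire tail $v_{N} v_{N+1} \cdots$ of the branch survives up to stage $r$ in the pruning sequence so that finiteness of $v_{N}$ in $\idagAW{r}{\aut}{w}$ can be contradicted. This follows simply because ranks are assigned only upon removal, and no vertex on the stabilized tail of $\branch$ is removed before stage $r$.
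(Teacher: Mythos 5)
Your proof is correct and follows the standard Kupferman--Vardi argument: validity of $\ranking$ as a level ranking, convergence of the pruning sequence (Lemma~\ref{lem:convergence-pruning}) for the forward direction, and condition (i) of Definition~\ref{def:DAGlevelRanking} applied to an accepting branch for the backward direction. The paper itself does not reprove this lemma (it is imported from~\cite{kupferman2001weak}), but your reasoning matches the informal justification the paper gives after Definition~\ref{def:rankingFunctionForDAGrejectedWord}, and you correctly isolate the one delicate point, namely that the entire stabilized tail of an $\omega$-branch survives into $\idagAW{r}{\aut}{w}$ so that an even stabilized rank contradicts finiteness there.
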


By means of Definition~\ref{def:rankingFunctionForDAGrejectedWord} and Lemma~\ref{lem:DAGrejectedWordOmegaBranchTrappedInOddRanks}, we have constructed a unique ranking function for identifying non-accepting DAGs.
However, this construction assumes that we are given the run DAG~$\dagAW{\aut}{w}$ beforehand.
In practice, it is hard to precompute the run DAG for every $\omega$-word $w$, not to mention constructing a ranking function to check whether $w$ is accepted by $\aut$.
According to~\cite{kupferman2001weak}, one can construct a run DAG level by level in an on-the-fly manner along an input $\omega$-word $w$.
In fact, in order to construct the next level, we just need to know the vertices at the current level and the letter $a \in \alphabet$ to read;
this allows us to construct run DAGs for every $\omega$-word.
Actually, it is enough to know the current set of states~$S$, since it does not matter how many steps (that is, the current level) we took to reach~$S$;
thus it is not necessary to keep track of the level number of the vertices when constructing the complementary NBA of $\aut$. 
Moreover, as we will see later in Definition~\ref{def:NBArankBasedComplementation} and Proposition~\ref{prop:correctness-rank-kv}, the number of such ``levels'' for all $\omega$-words, which correspond to states in the complementary NBA, is finite since the number of states in $\aut$ is finite.
Another important observation is that it is impossible for us to construct the classical ranking function: 
we do not know which run DAG~$\idagAW{i}{\aut}{w}$ in Definition~\ref{def:rankingFunctionForDAGrejectedWord} a vertex $v$ belongs to, thus we are not sure whether we must assign to $v$ the rank $2i$ or $2i+1$.

Therefore, in order to construct the complementary NBA~$\aut^{c}$ with the help of the classical ranking functions to identify non-accepting run DAGs, we have to guess the ranks of the vertices at every constructed level.
The maximum rank of a vertex is $2n$ according to Lemma~\ref{lem:convergence-pruning} and Definition~\ref{def:rankingFunctionForDAGrejectedWord}.
We note that Definition~\ref{def:rankingFunctionForDAGrejectedWord} just gives us one way to define a ranking function that works for identifying such non-accepting DAGs.
In fact, as long as we guess every possible ranking function for a run DAG, we are guaranteed to have one function that is the classical ranking function as by Definition~\ref{def:rankingFunctionForDAGrejectedWord}.
That is, if $w \notin \lang{\aut}$, the constructed NBA~$\aut^{c}$ is guaranteed to identify and accept $w$ because, by guessing ranking functions, we also guess correctly the classical ranking function given in Definition~\ref{def:rankingFunctionForDAGrejectedWord}.
On the other hand, as long as every ranking function we have guessed is a valid ranking function as in Definition~\ref{def:DAGlevelRanking}, $\aut^{c}$ will not accept an $\omega$-word $w \in \lang{\aut}$.
Thus, the constructed NBA~$\aut^{c}$ accepts exactly $\infwords \setminus \lang{\aut}$.

In the following, we will provide the details for the above complementation construction procedure based on guessing ranking functions.
As we have seen by Lemma~\ref{lem:convergence-pruning} and Definition~\ref{def:rankingFunctionForDAGrejectedWord}, the maximum rank of a classical ranking function is $2n$; this means that along an input word $w$ we can encode a ranking function for $\dagAW{\aut}{w}$ by ranking each state $q$ in the set of states $S$ we visit level by level in the DAG~$\dagAW{\aut}{w}$.
To do this, we define the level ranking functions as follows.
\begin{definition}
\label{def:levelRankingFunction}
    Given an NBA~$\aut$ with $n$ states, a \emph{level ranking function} for a given set of states $S \subseteq \states$ is a total function $\levelRanking_{S} \colon \states \to \range{2n} \cup \setnocond{\bot}$ such that 
    \begin{itemize}
    \item
        if $q \in S \cap \acc$, then $\levelRanking_{S}(q)$ is even; 
    \item
        if $q \in \states \setminus S$, then $\levelRanking_{S}(q) = \bot$.
    \end{itemize}
    We denote by $\levelRankingFunctions$ the set of all possible level ranking functions.
\end{definition}
Note that in Definition~\ref{def:levelRankingFunction} there is no explicit constraint on $\levelRanking_{S}(q)$ when $q \in S \setminus \acc$; 
however, by the way level ranking functions will be used in the construction of $\aut^{c}$ (cf.\@ Definitions~\ref{def:coverage-level-rnk}, \ref{def:NBArankBasedComplementation}, and~\ref{def:FANBArankBasedComplementation}) we ensure that $\levelRanking_{S}(q) \in \range{2n}$ for each $q \in S \setminus \acc$.
Moreover, we will use level ranking functions whose set $S$ is constructed by means of the transition function $\trans$, starting from the set~$\initialStates$ of initial states of $\aut$.
This means that a level ranking function $\levelRanking_{S}$ implicitly encodes $S$ by means of the values assigned to the states; 
for this motivation, we can drop the subscript $S$ from $\levelRanking_{S}$ since it is easy to derive it.

By the discussion above, the level ranking functions for the states visited at each level in the DAG~$\dagAW{\aut}{w}$ will constitute a ranking function $\levelRanking$ for $\dagAW{\aut}{w}$.
To make sure the resulting ranking function $\levelRanking$ agrees with Definition~\ref{def:DAGlevelRanking}, we need the ranks not increase in a branch, which is formalized below as a coverage relation between two level ranking functions.
Let $\statesWithValidValue{\levelRanking} = \setcond{q \in \states}{\levelRanking(q) \neq \bot}$.
\begin{definition}[Coverage Relation for Level Rankings]
\label{def:coverage-level-rnk}
    Given two level ranking functions $\levelRanking, \levelRanking' \colon \states \to \range{2n} \cup \setnocond{\bot}$, a transition function $\trans$, and a letter $a \in \alphabet$, we say that \emph{$\levelRanking$ covers $\levelRanking'$ under letter $a$ and transition function $\trans$}, written $\levelRanking' \coveredBy{\trans}{a} \levelRanking$, if for all $q' \in \states$, it holds that 
    \begin{itemize}
    \item
         if $q' \in \trans(\statesWithValidValue{\levelRanking}, a)$, then for each $q \in \statesWithValidValue{\levelRanking}$ such that $q' \in \trans(q, a)$, we have that $0 \leq \levelRanking'(q') \leq \levelRanking(q)$, and 
    \item
        if $q' \notin \trans(\statesWithValidValue{\levelRanking}, a)$, then $\levelRanking'(q') = \bot$.
    \end{itemize}
\end{definition}

Intuitively, if a state $q'$ can be reached from a state $q$ at the previous level (i.e., $q \in \statesWithValidValue{\levelRanking}$) via the transition function $\trans$ and letter $a$, then the rank of $q'$ has to be at most that of $q$, i.e., $\levelRanking'(q') \leq \levelRanking(q)$.
This condition in the coverage relation indicates that the level rankings $\levelRanking_{S}$ and $\levelRanking'_{S'}$ of two consecutive levels of $\dagAW{\aut}{w}$ with states $S$ and $S'$, respectively, do not increase in ranks.
Moreover, it ensures that none of the states $q'$ in $S'$ is assigned to $\bot$, since $0 \leq \levelRanking'(q')$.
If $q'$ cannot be reached from any state in $\statesWithValidValue{\levelRanking}$, then $q'$ will not be present in the next level, thus we impose $\levelRanking'(q') = \bot$.
Note here that $\coveredBy{\trans}{a}$ depends on the transition function $\trans$.

In the coverage relation given in Definition~\ref{def:coverage-level-rnk}, we can derive the sets $S'$ and $S$ for $\levelRanking' \coveredBy{\trans}{a} \levelRanking$, respectively, as $S = \statesWithValidValue{\levelRanking}$ and $S' = \trans(S, a) = \trans(\statesWithValidValue{\levelRanking}, a)$.

In order to verify that the guess about the ranking of $\dagAW{\aut}{w}$ is correct, \rkc uses a \emph{breakpoint construction} as proposed in~\cite{miyano1984alternating}. 
This construction employs a set of states $O \subseteq \states$ to check whether the vertices assigned with even ranks are finite.
Similarly to Lemma~\ref{lem:FANBAwordRejectedStableLevelInCodeterministicDAG}, the non-accepting DAG~$\dagAW{\aut}{w}$ with the classical ranking function given in Definition~\ref{def:rankingFunctionForDAGrejectedWord} eventually reaches a level where all $\omega$-branches get trapped in odd ranks, i.e., all vertices after the level with even ranks are finite.
Therefore, after reaching such a level, the breakpoint set $O$ can be used to keep track of the states that are assigned with even ranks, whose descendants eventually will be empty, i.e., $O$ will become empty.
When $O$ is empty, we also track other states with newly assigned even ranks by putting them in the set $O$.
If all states being checked have only finite descendants, $O$ will become empty for infinitely many times, which means that the guess is guaranteed to be correct.

The formal definition of the complementary NBA~$\aut^{c}$ of the input NBA~$\aut$ is given below.
To make the presentation simpler, in the remainder of the paper we call the states and runs of $\aut^{c}$ \emph{macrostates} and \emph{macroruns}, respectively.

\begin{definition}[\cite{kupferman2001weak}]
\label{def:NBArankBasedComplementation}
    Given an NBA~$\aut = (\states, \initialStates, \trans, \acc)$, we define the NBA~$\aut^{c} = (\states^{c}, \initialStates^{c}, \trans^{c}, \acc^{c})$ as the complementary NBA of $\aut$ as follows.
    \begin{itemize}
    \item 
        $\states^{c} \subseteq \levelRankingFunctions \times 2^{\states}$ is the smallest set such that for $\initialStates^{c}$ and $\trans^{c}$ as defined below, we have $\initialStates^{c} \subseteq \states^{c}$ and $\trans^{c}(\states^{c}, a) \subseteq \states^{c}$ for each $a \in \alphabet$;
    \item 
        $\initialStates^{c} = \setnocond{(\levelRanking, \emptyset)}$, where $\levelRanking(q) = 2n$ if $q \in \initialStates$ and $\levelRanking(q) = \bot$ otherwise;
    \item 
        for each $(\levelRanking, O) \in \states^{c}$ and $a \in \alphabet$, 
        \begin{enumerate}
        \item 
            if $O \neq \emptyset$, then $\trans^{c}((\levelRanking, O), a) = \setcond{(\levelRanking', \trans(O, a) \setminus \odd{\levelRanking'})}{\levelRanking' \coveredBy{\trans}{a} \levelRanking}$ (intuition: the breakpoint set $O$ only tracks vertices with even rank), 
        \item 
            if $O = \emptyset$, then $\trans^{c}((\levelRanking, \emptyset), a) = \setcond{(\levelRanking', \even{\levelRanking'})}{\levelRanking' \coveredBy{\trans}{a} \levelRanking}$ (intuition: $O = \emptyset$ means that all previous $\acc$-vertices with even ranks are finite, so we just need to verify the new vertices with even rank);
        \end{enumerate}
    \item 
        $\acc^{c} = \setnocond{(\levelRanking, \emptyset) \in \states^{c}}$,
    \end{itemize}
    where $\odd{\levelRanking} = \setcond{q \in \states}{\text{$\levelRanking(q)$ is odd}}$ and $\even{\levelRanking} = \setcond{q \in \states}{\text{$\levelRanking(q)$ is even}}$.
\end{definition}
For an example of the rank-based NBA complementation construction, see Example~\ref{ex:FANBArankBasedComplementation} in Section~\ref{ssec:FANBArankBasedComplementation}.
By inspecting Definition~\ref{def:NBArankBasedComplementation}, we can recognize that the ranking function for the initial macrostate is $\levelRanking_{\initialStates}$, for which we have $\statesWithValidValue{\levelRanking_{\initialStates}} = \initialStates$.
This, combined with Definition~\ref{def:coverage-level-rnk}, gives that the $a$-successors $(\levelRanking_{S}, O)$ of the initial macrostate have $S = \trans(\initialStates, a)$;
as before, we have that $\statesWithValidValue{\levelRanking_{S}} = S$.
This reasoning can be repeated for the successors $(\levelRanking'_{S'}, O')$ of $(\levelRanking_{S}, O)$ and for their successors, and so on and so forth.
In practice, we have that every macrostate $(\levelRanking, O)$ in $\aut^{c}$ corresponds to one or more levels of the DAG~$\dagAW{\aut}{w}$ for some $\omega$-word $w$, namely, the levels $\ell$ having $\statesWithValidValue{\levelRanking}$ as the set of all states occurring in  the vertices $\vertex{q}{\ell}$. 
So a macrorun of $\aut^{c}$ over $w$ can be related to the run DAG~$\dagAW{\aut}{w}$.
As mentioned before, if $w$ is accepted by $\aut^{c}$, i.e., $O$ becomes empty infinitely often, then we know that our guess about the ranking function of $\dagAW{\aut}{w}$ is correct.
We then conclude that all $\omega$-branches of $\dagAW{\aut}{w}$ eventually get trapped in odd ranks, since the vertices assigned with even ranks being checked in $O$ are all finite. 
It follows that no branch is accepting in $\dagAW{\aut}{w}$, i.e., $w \notin \lang{\aut}$.

When $w \in \lang{\aut}$, it is easy to see that some accepting $\omega$-branch will get infinitely many even ranks, which indicates that $O$ will be nonempty forever once an accepting branch enters $O$.
Therefore, $\aut^{c}$ will not accept an $\omega$-word $w \in \lang{\aut}$.
Thus we conclude that $\lang{\aut^{c}} = \infwords \setminus \lang{\aut}$. 
Since $\levelRanking \in \levelRankingFunctions$ is a function from $\states$ to $\range{2n} \cup \setnocond{\bot}$, the number of possible such functions is $(2n+2)^{n} \in \bigO((3n)^{n})$;
this implies that the number of macrostates of $\aut^{c}$ is in $2^{n} \times \bigO((3n)^{n}) \in \bigO((6n)^{n})$.

\begin{proposition}[The Language and Size of $\aut^{c}$~\cite{kupferman2001weak}]
\label{prop:correctness-rank-kv}
    Let $\aut$ be an NBA with $n$ states and $\aut^{c}$ be the NBA constructed according to Definition~\ref{def:NBArankBasedComplementation}. 
    Then $\inflang{\aut^{c}} = \infwords \setminus \lang{\aut}$ and $\aut^{c}$ has $\bigO((6n)^{n})$ macrostates.
\end{proposition}

\subsection{Rank-Based Complementation Algorithm for FANBAs}
\label{ssec:FANBArankBasedComplementation}

The \rkc construction given in Definition~\ref{def:NBArankBasedComplementation} applies to general NBAs;
if we restrict ourselves to the class of FANBAs, we can use a fixed set, not depending on the number of states of $\aut$, as the range of the level ranking functions appearing in the macrostates of $\aut^{c}$; 
in this way we can reduce considerably the size of $\aut^{c}$.
In the remainder of this section, we present how to obtain this result;
in particular, we show 
in Lemma~\ref{lem:maximumRankOfCodeterministicDAGs} that if $\aut$ is an FANBA, then the maximum rank of the vertices in a codeterministic DAG of $\aut$ is at most $2$, instead of $2n$ as for NBAs (cf.\@ Lemma~\ref{lem:DAGrejectedWordOmegaBranchTrappedInOddRanks}). 
This means that the range of a classical ranking function $\levelRanking \in \levelRankingFunctions$ defined by Definition~\ref{def:rankingFunctionForDAGrejectedWord} can be restricted to just $\setnocond{0, 1, 2} \cup \setnocond{\bot}$. 
Therefore, the number of macrostates in $\aut^{c}$ is in $2^{n} \times 4^{n} \in \bigO(8^{n})$; 
as we will see in Theorem~\ref{thm:FANBAlanguageSizeRankBasedComplement}, the actual upperbound on the number of macrostates can be reduced to $\bigO(6^{n})$, by a more careful analysis of the interaction between the possible value of the rank and the presence of the state in the breakpoint set.

\begin{restatable}[Maximum Rank of Codeterministic DAGs]{lemma}{maximumRankOfCodeterministicDAGs}
\label{lem:maximumRankOfCodeterministicDAGs}
    Given an FANBA~$\aut$ and $w \in \infwords$, let $\reduceddagAW{\aut}{w}$ be the codeterministic DAG of $\aut$ over $w$.
    Then $w \notin \lang{\aut}$ if and only if $\ireduceddagAW{3}{\aut}{w}$ is empty. 
\end{restatable}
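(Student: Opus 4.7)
The plan is to prove both directions of the biconditional using Lemmas~\ref{lem:FANBWacceptanceCodeterministicDAG} and~\ref{lem:FANBWfiniteOmegaBranchesInCodeterministicDAG} as the essential tools; the pruning sequence is the one from Definition~\ref{def:sequenceDAGsRejectedWord} applied now to $\reduceddagAW{\aut}{w}$ instead of $\dagAW{\aut}{w}$.

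For the ``only if'' direction, I would start from $w \notin \lang{\aut}$ and invoke Lemma~\ref{lem:FANBWacceptanceCodeterministicDAG} to get that $\reduceddagAW{\aut}{w}$ has no accepting $\omega$-branch. Passing to $\ireduceddagAW{1}{\aut}{w}$ removes every finite vertex, so each surviving vertex lies on at least one $\omega$-branch. By Lemma~\ref{lem:FANBWfiniteOmegaBranchesInCodeterministicDAG} there are at most $n$ such branches, and since none of them is accepting, every branch visits $\acc$-vertices only finitely often. For each such branch $\branch_{i}$ let $L_{i}$ be the highest level at which $\branch_{i}$ carries an $\acc$-vertex, and set $M = \max_{i} L_{i}$. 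Then any vertex $v$ of $\ireduceddagAW{1}{\aut}{w}$ at level strictly greater than $M$ can only reach (within this DAG) vertices lying on one of the finitely many $\omega$-branches at levels above $M$, none of which is an $\acc$-vertex by the choice of $M$. Such a $v$ is therefore not finite (it lies on an $\omega$-branch) and has no reachable $\acc$-vertex, so it is $\acc$-free and gets removed when forming $\ireduceddagAW{2}{\aut}{w}$. The resulting DAG $\ireduceddagAW{2}{\aut}{w}$ is confined to levels $\leq M$ and hence has no $\omega$-branch at all, so every vertex in it is finite; the final pruning then yields $\ireduceddagAW{3}{\aut}{w} = \emptyset$.

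For the ``if'' direction, I would argue by contraposition: assume $w \in \lang{\aut}$, so by Lemma~\ref{lem:FANBWacceptanceCodeterministicDAG} there is an accepting $\omega$-branch $\branch$ in $\reduceddagAW{\aut}{w}$. Each vertex of $\branch$ is simultaneously not finite (it lies on an infinite branch) and not $\acc$-free (infinitely many $\acc$-vertices are reachable along the tail of $\branch$), so neither of the two pruning rules ever removes it. Consequently $\branch$ persists unchanged in $\ireduceddagAW{i}{\aut}{w}$ for every $i$, contradicting the hypothesis that $\ireduceddagAW{3}{\aut}{w}$ is empty.

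The main obstacle is the step in the forward direction that extracts a single uniform level $M$ beyond which no surviving vertex sees an $\acc$-vertex. A priori the vertices in $\ireduceddagAW{1}{\aut}{w}$ are infinitely many and each could accommodate its own arbitrarily late $\acc$-visit; it is precisely finite ambiguity, funnelled through Lemma~\ref{lem:FANBWfiniteOmegaBranchesInCodeterministicDAG}, that collapses the picture to at most $n$ $\omega$-branches and makes the maximum $M = \max_{i} L_{i}$ well defined. Everything else --- verifying the $\acc$-freeness of late vertices and then the finiteness of the truncated remainder --- is routine bookkeeping on how the pruning rules interact in a codeterministic DAG.
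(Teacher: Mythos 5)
Your proof is correct and follows essentially the same route as the paper's: the paper's forward direction simply cites Lemma~\ref{lem:FANBWwordRejectedStableLevelInCodeterministicDAG} to obtain the stable level, whereas you re-derive that level $M$ inline from Lemmas~\ref{lem:FANBWacceptanceCodeterministicDAG} and~\ref{lem:FANBWfiniteOmegaBranchesInCodeterministicDAG} --- which is exactly how the stable-level lemma is itself proved --- and then run the same analysis of the three pruning steps. The backward direction (every vertex on an accepting $\omega$-branch is neither finite nor $\acc$-free at any stage, so the branch survives all prunings) coincides with the paper's argument.
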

The proof for this lemma is given in~\ref{app:otherProofs}.
It mainly links the existence of a stable level with what states are removed step by step from $\ireduceddagAW{0}{\aut}{w} = \reduceddagAW{\aut}{w}$.

Similarly to Definition~\ref{def:rankingFunctionForDAGrejectedWord}, we can construct a classical ranking function for $\reduceddagAW{\aut}{w}$.

\begin{definition}[Classical Ranking Function for DAG~$\reduceddagAW{\aut}{w}$]
\label{def:rankingFunctionForreducedDAGrejectedWord}
    Given the DAG~$\reduceddagAW{\aut}{w}$, let $\ireduceddagAW{0}{\aut}{w} \supseteq \ireduceddagAW{1}{\aut}{w} \supseteq \ireduceddagAW{2}{\aut}{w} \supseteq \ireduceddagAW{3}{\aut}{w}$ be the finite sequence of DAGs constructed according to Definition~\ref{def:sequenceDAGsRejectedWord}.
    
    The ranking function $\ranking \colon \vertices \to \setnocond{0, 1, 2}$ for $\reduceddagAW{\aut}{w}$ is defined as follows:
    \begin{enumerate}
    \item 
        $\ranking(\vertex{q}{l}) = 0$ for each finite vertex $\vertex{q}{l}$ in $\ireduceddagAW{0}{\aut}{w}$;
    \item 
        $\ranking(\vertex{q}{l}) = 1$ for each $\acc$-free vertex $\vertex{q}{l}$ in $\ireduceddagAW{1}{\aut}{w}$;
    \item
        $\ranking(\vertex{q}{l}) = 2$ for each finite vertex $\vertex{q}{l}$ in $\ireduceddagAW{2}{\aut}{w}$;
    \item
        $\ranking(\vertex{q}{l}) = 2$ for each vertex $\vertex{q}{l}$ in $\ireduceddagAW{3}{\aut}{w}$ if $\ireduceddagAW{3}{\aut}{w}$ is not empty.
    \end{enumerate}
\end{definition}
By Lemma~\ref{lem:maximumRankOfCodeterministicDAGs}, if there exists a vertex in $\ireduceddagAW{3}{\aut}{w}$, then $\reduceddagAW{\aut}{w}$ is accepting.
It follows that $\ireduceddagAW{3}{\aut}{w}$ has an $\omega$-branch whose vertices are all assigned with even rank $2$ according to Definition~\ref{def:rankingFunctionForreducedDAGrejectedWord}.
So the $\omega$-branch of $\reduceddagAW{\aut}{w}$ will not get trapped in odd ranks.
If $\ireduceddagAW{3}{\aut}{w}$ is empty, similarly to Definition~\ref{def:rankingFunctionForDAGrejectedWord}, a vertex of $\reduceddagAW{\aut}{w}$ can be a finite vertex either in $\ireduceddagAW{0}{\aut}{w}$ or in $\ireduceddagAW{2}{\aut}{w}$, or an $\acc$-free vertex in $\ireduceddagAW{1}{\aut}{w}$.
Moreover, since all $\acc$-free vertices get only rank $1$, all $\acc$-vertices are assigned with the even rank $2$.
Recall that a vertex will not be removed later than its predecessors when constructing the finite sequence of DAGs $\ireduceddagAW{i}{\aut}{w}$.
So the ranks in a branch do not increase and the ranking function $\ranking$ is a valid ranking function according to Definition~\ref{def:DAGlevelRanking}.
We show below that the classical ranking function defined in Definition~\ref{def:rankingFunctionForreducedDAGrejectedWord} can also be used to check whether a given $\omega$-word $w$ is accepted by $\aut$. 
\begin{restatable}[Identification of Nonaccepting Run DAGs for FANBAs]{lemma}{identificationOfNonacceptingRunDAGsForFANBAs}
\label{lem:identificationOfNonacceptingRunDAGsForFANBAs}
    Given an FANBA~$\aut$ and $w \in \infwords$, let $\reduceddagAW{\aut}{w}$ be the codeterministic DAG of $\aut$ over $w$.
    $\aut$ rejects the word $w$ if and only if the unique classical ranking function $\ranking$ given in Definition~\ref{def:rankingFunctionForreducedDAGrejectedWord} for $\reduceddagAW{\aut}{w}$ is such that all $\omega$-branches of $\reduceddagAW{\aut}{w}$ eventually get trapped in odd ranks.
\end{restatable}
The proof for this lemma is given in~\ref{app:otherProofs}; 
it is just a direct consequence of Definition~\ref{def:rankingFunctionForreducedDAGrejectedWord} and Lemma~\ref{lem:FANBAacceptanceCodeterministicDAG}.

In order to set the maximum rank to $2$ in Definition~\ref{def:NBArankBasedComplementation}, the underlying DAG~$\dagAW{\aut}{w}$ constructed for complementing FANBAs has to be codeterministic. 
We cannot use $\trans$, since it does not ensure that $\dagAW{\aut}{w}$ is codeterministic; 
this is instead guaranteed by the reduced transition function $\cotrans$, as we discussed in Section~\ref{sec:reduced-run-dag}.
Since \rkc generates rankings level by level, we then use the reduced transition function $\cotrans$ for computing successors at the next level. 
For FANBAs, the complementation construction  in Definition~\ref{def:NBArankBasedComplementation} can be improved accordingly:
\begin{definition}
\label{def:FANBArankBasedComplementation}
    Given an FANBA~$\aut = (\states, \initialStates, \trans, \acc)$, we define the NBA~$\aut^{c} = (\states^{c}, \initialStates^{c}, \trans^{c}, \acc^{c})$, where 
    \begin{itemize}
    \item
        $\states^{c} \subseteq \levelRankingFunctions \times 2^{\states}$ is the smallest set such that for $\initialStates^{c}$ and $\trans^{c}$ as defined below, we have $\initialStates^{c} \subseteq \states^{c}$, $\trans^{c}(\states^{c}, a) \subseteq \states^{c}$ for each $a \in \alphabet$, and for each $(\levelRanking, O) \in \states^{c}$, $\levelRanking$ has codomain $\setnocond{0, 1, 2, \bot}$;
    \item 
        $\initialStates^{c} = \setnocond{(\levelRanking, \emptyset)}$ where $\levelRanking(q) = 2$ if $q \in \initialStates$ and $\levelRanking(q) = \bot$ otherwise; 
    \item 
        for each $(\levelRanking, O) \in \states^{c}$ and $a \in \alphabet$,
        \begin{enumerate}
        \item 
            if $O \neq \emptyset$, then $\trans^{c}((\levelRanking, O), a) = \setcond{(\levelRanking', \cotrans(O, a) \setminus \odd{\levelRanking'})}{\levelRanking' \coveredBy{\cotrans}{a} \levelRanking}$, 
        \item 
            if $O = \emptyset$, then $\trans^{c}((\levelRanking, \emptyset), a) =  \setcond{(\levelRanking', \even{\levelRanking'})}{\levelRanking' \coveredBy{\cotrans}{a} \levelRanking}$,
        \end{enumerate}
        where $\cotrans$ is the unique reduced transition function associated with the codeterministic DAGs of $\aut$; 
        and
    \item
        $\acc^{c} = \setnocond{(\levelRanking, \emptyset) \in \states^{c}}$.
    \end{itemize}
\end{definition}

\begin{figure}[t]
    \centering
    \resizebox{\textwidth}{!}{
    \includegraphics{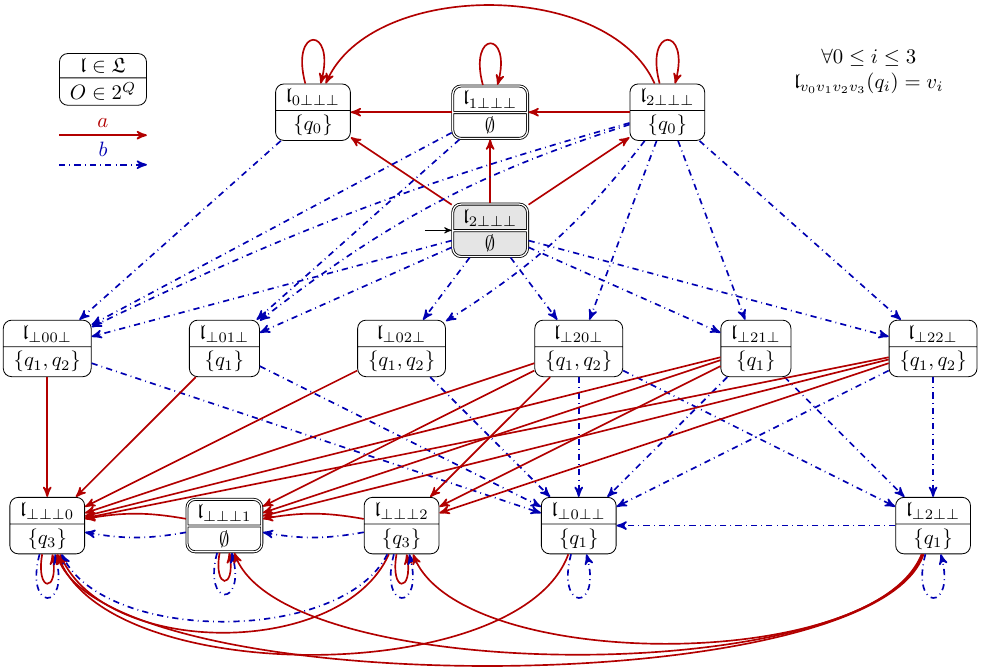}
    }
    \caption{The complementary NBA~$\aut^{c}$ of $\aut$ depicted in Figure~\ref{fig:example2ANBA} constructed by Definition~\ref{def:FANBArankBasedComplementation}.} 
    \label{fig:FANBArankBasedComplementation}
\end{figure}

\begin{markedexample}
\label{ex:FANBArankBasedComplementation}
    As an example of application of the rank-based complementation construction given in Definition~\ref{def:FANBArankBasedComplementation}, consider the FANBA~$\aut$ depicted in Figure~\ref{fig:example2ANBA}. 
    By definition, the initial macrostates are $\initialStates^{c} = \setnocond{(\levelRanking, \emptyset)}$, i.e., there is only one initial macrostate $(\levelRanking, \emptyset)$, where $\levelRanking(q_{0}) = 2$ and $\levelRanking(q_{1}) = \levelRanking(q_{2}) = \levelRanking(q_{3}) = \bot$.
    To make the notation more compact, we encode the assignments of the ranking made by the level ranking function in its subscript, so we just write $\levelRanking$ as $\levelRanking_{2 \bot \bot \bot}$; 
    the initial macrostate $(\levelRanking, \emptyset)$ thus becomes $(\levelRanking_{2 \bot \bot \bot}, \emptyset)$.
    
    From this initial macrostate, we have the following transitions:
    on reading the letter $a$, we have the successors
    $(\levelRanking_{0 \bot \bot \bot}, \setnocond{q_{0}})$,
    $(\levelRanking_{1 \bot \bot \bot}, \emptyset)$, and 
    $(\levelRanking_{2 \bot \bot \bot}, \setnocond{q_{0}})$.
    This is the case because the $a$-successor of $q_{0}$, the only state in $\statesWithValidValue{\levelRanking_{2 \bot \bot \bot}}$, is $q_{0}$ itself, so we need to have $\statesWithValidValue{\levelRanking'} = \setnocond{q_{0}}$.
    About its value, it can be $0$, $1$, and $2$ since each of them is between $0$ and $2$; 
    the set $O$ is constructed accordingly.
    
    In a similar way, on reading the letter $b$, we have the successors macrostates
    $(\levelRanking_{\bot 0 0 \bot}, \setnocond{q_{1}, q_{2}})$,
    $(\levelRanking_{\bot 0 1 \bot}, \setnocond{q_{1}})$,
    $(\levelRanking_{\bot 0 2 \bot}, \setnocond{q_{1}, q_{2}})$,
    $(\levelRanking_{\bot 2 0 \bot}, \setnocond{q_{1}, q_{2}})$,
    $(\levelRanking_{\bot 2 1 \bot}, \setnocond{q_{1}})$, and
    $(\levelRanking_{\bot 2 2 \bot}, \setnocond{q_{1}, q_{2}})$.
    Note that here $q_{1}$ only gets even ranks since it is an accepting state, so e.g.\@ the function $\levelRanking_{\bot 1 0 \bot}$ such that $\levelRanking_{\bot 1 0 \bot}(q_{1}) = 1$, $\levelRanking_{\bot 1 0 \bot}(q_{2}) = 0$, and $\levelRanking_{\bot 1 0 \bot}(q_{0}) = \levelRanking_{\bot 1 0 \bot}(q_{3}) = \bot$ is not considered since it is not a valid level ranking function, according to Definition~\ref{def:levelRankingFunction}.
    All other macrostates are constructed in a similar way; 
    the resulting NBA~$\aut^{c}$ is shown in Figure~\ref{fig:FANBArankBasedComplementation}.
    
    Note that for two macrostates $(\levelRanking_{1}, O_{1})$ and $(\levelRanking_{2}, O_{2})$ of $\aut^{c}$, we have that $(\levelRanking_{1}, O_{1})$ subsumes $(\levelRanking_{2}, O_{2})$, that is, $\lang{\aut^{c}[(\levelRanking_{2}, O_{2})]} \subseteq \lang{\aut^{c}[(\levelRanking_{1}, O_{1})]}$, if the following syntactic conditions hold (cf.~\cite{DBLP:journals/corr/abs-0902-3958}): 
    \begin{itemize}
    \item
        $\statesWithValidValue{\levelRanking_{1}} = \statesWithValidValue{\levelRanking_{2}} = S$ for some $S \subseteq \states$,
    \item
        $\levelRanking_{1}(q) \geq \levelRanking_{2}(q) $ for each $q \in S$, 
        and
    \item
        $O_{1} \subseteq O_{2}$.
    \end{itemize}
    In the NBA~$\aut^{c}$ depicted in Figure~\ref{fig:FANBArankBasedComplementation} we have several cases of subsumption; 
    for instance, we can see that the state $(\levelRanking_{\bot 0 1 \bot}, \setnocond{q_{1}})$ subsumes $(\levelRanking_{\bot 0 0 \bot}, \setnocond{q_{1}, q_{2}})$.
    
    Note that there are pairs of macrostates $q$ and $q'$ such that the language of $\autNewInitial{\aut^{c}}{q}$ is a superset of the language of $\autNewInitial{\aut^{c}}{q'}$, without fulfilling the syntactic conditions given above; 
    this means that such conditions are sufficient but not necessary to identify the subsumption between states.
    For instance, we have that $\lang{\autNewInitial{\aut^{c}}{(\levelRanking_{\bot 0 2 \bot}, \setnocond{q_{1}, q_{2}})}} \subseteq \lang{\autNewInitial{\aut^{c}}{(\levelRanking_{\bot 2 0 \bot}, \setnocond{q_{1}, q_{2}})}}$, but the second condition is violated, given that for $q_{2} \in \setnocond{q_{1}, q_{2}} = \statesWithValidValue{\levelRanking_{\bot 0 2 \bot}} = \statesWithValidValue{\levelRanking_{\bot 2 0 \bot}}$ we have that $\levelRanking_{\bot 2 0 \bot}(q_{2}) = 0 \not \geq 2 = \levelRanking_{\bot 0 2 \bot}(q_{2})$.
\end{markedexample}

Recall that the coverage relation between two level ranking functions $\levelRanking$ and $\levelRanking'$, parameterized with  $\cotrans$, is given in Definition~\ref{def:coverage-level-rnk}.
Similarly to Definition~\ref{def:edge-relation-e}, in order to compute $\cotrans(S_{1}, a)$, we have to first compute the set $S_{\mathit{min}}$ of minimal predecessors of $S' = \trans(S, a)$, where $S = \statesWithValidValue{\levelRanking}$, i.e., the set of states at the current level, and $a$ is the input letter at the current level. 
Thus we have $\cotrans(S_{1}, a) = \trans(S_{1} \cap S_{\mathit{min}}, a)$.
Intuitively, for $w \in \infwords$, $\cotrans$ is used to construct a codeterministic DAG~$\reduceddagAW{\aut}{w}$ over $w$ level by level.
By Lemma~\ref{lem:maximumRankOfCodeterministicDAGs}, the maximum rank of $\reduceddagAW{\aut}{w}$ is at most $2$, which is sufficient in Definition~\ref{def:FANBArankBasedComplementation} for constructing a classical ranking function to identify whether $\reduceddagAW{\aut}{w}$ eventually gets all its $\omega$-branches trapped in odd ranks, according to Lemma~\ref{lem:identificationOfNonacceptingRunDAGsForFANBAs}.
Therefore, with Definition~\ref{def:FANBArankBasedComplementation}, we can construct a complementary NBA~$\aut^{c}$ with $(2 + 2 + 1 + 1)^{n} \in {\bigO(6^{n})}$ states since a state is assigned with (1) even ranks ($0$ and $2$) and in the breakpoint $O$, (2) with even ranks and not in $O$, (3) with the odd rank $1$ or (4) the label $\bot$.
Note that in the last two cases, by definition we have that $q \notin O$.

\begin{restatable}[The Language and Size of $\aut^{c}$ for FANBAs]{theorem}{FANBAlanguageSizeRankBasedComplement}
\label{thm:FANBAlanguageSizeRankBasedComplement}
    Let $\aut$ be an FANBA with $n$ states and $\aut^{c}$ be the NBA constructed according to Definition~\ref{def:FANBArankBasedComplementation}. 
    Then $\lang{\aut^{c}} = \infwords \setminus \lang{\aut}$ and $\aut^{c}$ has $\bigO(6^{n})$ macrostates.
\end{restatable}
The detailed proof can be found in~\ref{app:FANBAlanguageSizeRankBasedComplement}.
Intuitively, to prove that $\lang{\aut^{c}} \subseteq \infwords \setminus \lang{\aut}$ we use the fact that an accepting run over $w \in \lang{\aut}$ visits accepting states infinitely often, thus the rank must stabilize to an even value, so eventually the component $O$ never becomes empty again.
For the other inclusion $\infwords \setminus \lang{\aut} \subseteq \lang{\aut^{c}}$, among all possible choices of the ranking function, one is the good one that makes all branches of $\reduceddagAW{\aut}{w}$ be trapped in odd ranks, hence the set $O$ becomes empty again and again, making the corresponding run of $\aut^{c}$ accepting.

In~\cite{DBLP:journals/corr/abs-1110-6183}, Fogarty and Vardi proved that complementing reverse deterministic NBAs with \rkc is doable in $2^{\bigO(n)}$ as the non-reduced DAGs $\dagAW{\aut}{w}$ are already codeterministic.
The motivation of this is that when $\aut$ is reverse deterministic, then each vertex $\vertex{q}{\ell + 1}$ in $\dagAW{\aut}{w}$ has at most one predecessor, a consequence of the fact that $q$ has only one $\wordletter{w}{\ell}$-predecessor;
it follows that $\dagAW{\aut}{w}$ is codeterministic. 
This implies that, similarly to Lemma~\ref{lem:FANBAfiniteOmegaBranchesInCodeterministicDAG}, the number of (accepting) $\omega$-branches in $\dagAW{\aut}{w}$ is at most $\size{\states}$.
It follows that there are at most $\size{\states}$ accepting runs of a reverse deterministic NBA~$\aut$ since there is an one-to-one correspondence between infinite runs of $\aut$ and $\omega$-branches in $\dagAW{\aut}{w}$.
Therefore, we have that reverse deterministic NBAs are a special class of FANBAs.
\begin{proposition}
\label{prop:reverseDeterministicNBAsAreFANBAs}
    Let $\aut$ be a reverse deterministic NBA.
    Then $\aut$ is also an FANBA.
\end{proposition}
In contrast, an FANBA is not necessarily a reverse deterministic NBA. 
For instance, the FANBA~$\aut$ depicted in Figure~\ref{fig:example2ANBA} is not reverse deterministic since $q_{1}$ has three $b$-predecessors, namely, $q_{0}$, $q_{1}$, and $q_{2}$. 

We remark that the construction presented in~\cite{DBLP:journals/corr/abs-1110-6183} just sets the maximum rank to $2$ in Definition~\ref{def:NBArankBasedComplementation} without modifying the transition function $\trans^{c}$, which turns out to be a special case of our construction, according to Proposition~\ref{prop:reverseDeterministicNBAsAreFANBAs}.

\section{Slice-Based Complementation}
\label{sec:slice-based}

In this section we consider a different algorithm for the complementation of NBAs.
It is based on slices, a different way to organize the information about runs on $\omega$-words.
In Subsection~\ref{ssec:slice-run-dag}, we first recall the \emph{slice-based} complementation construction (\slc) that was introduced in~\cite{DBLP:conf/birthday/VardiW08,kahler2008complementation,TsaiFVT14}, adapted to use our notation; 
given an NBA~$\aut$ with $n$ states, this construction produces a complementary NBA~$\aut^{c}$ with $\bigO((3n)^{n})$ states. 
Then, in Subsection~\ref{ssec:FANBAsliceBasedComplementation}, we show that this construction, when restricted to FANBAs, can be simplified to yield a complementary NBA with $\bigO(4^{n})$ states.

\subsection{Slice-Based Complementation Algorithm for NBAs}
\label{ssec:slice-run-dag}

Let $\aut$ be an NBA and $w \in \infwords$ be an $\omega$-word.
\slc uses a data structure called \emph{slice} instead of level ranking functions to encode the set of vertices in $\dagAW{\aut}{w}$ that are at the same level.
A slice is defined in~\cite{DBLP:conf/birthday/VardiW08} as an ordered sequence of disjoint sets of states with the same level. 
To simplify the notation, in this section we assume that the order is given by the index of the sets;
for instance, for vertices $\vertex{S_{i}}{\ell}$ and $\vertex{S_{j}}{\ell}$, we assume that $S_{i}$ precedes $S_{j}$ in the order whenever $i < j$.

Since the codeterministic DAGs constructed by \slc are different from those built by \rkc in Section~\ref{sec:reduced-run-dag}, we use the superscript $s$ to distinguish them, so we write $\sdagAW{\aut}{w}$ (instead of $\reduceddagAW{\aut}{w}$) for the codeterministic DAG generated by \slc as it proceeds along the word $w$.

We now describe \slc from the perspective of building codeterministic DAGs $\sdagAW{\aut}{w}$.
Each vertex is a pair $\vertex{S_{i}}{\ell}$ where $\emptyset \neq S_{i} \subseteq \states$ and $\ell \in \naturals$;
we say that $\vertex{S_{i}}{\ell}$ is an $\acc$-vertex if $S_{i} \subseteq \acc$.
We note here that a vertex in $\sdagAW{\aut}{w}$ is labelled with a set of states, in contrast to a single state in $\reduceddagAW{\aut}{w}$.

If $\initialStates \setminus \acc \neq \emptyset$, then we have at most two vertices at the level $0$ of $\sdagAW{\aut}{w}$: 
the vertex $\vertex{S_{1}}{0} = \vertex{\initialStates \setminus \acc}{0}$ and the $\acc$-vertex $\vertex{S_{2}}{0} = \vertex{\initialStates \cap \acc}{0}$;
if $\initialStates \cap \acc = \emptyset$, then we omit $\vertex{S_{2}}{0}$.
Otherwise, when $\initialStates \setminus \acc = \emptyset$, we have only the vertex $\vertex{S_{1}}{0} = \vertex{\initialStates \cap \acc}{0}$ at level $0$ of $\sdagAW{\aut}{w}$. 
In practice, we slice the initial states $\initialStates$ of $\aut$ as non-accepting and accepting states; 
clearly the resulting sets $S_{1}$ and $S_{2}$ are disjoint. 

Regarding the other levels, recall that the vertices $\vertex{S_{j}}{l}$ on some level $l$ in $\sdagAW{\aut}{w}$ are ordered from left to right by their indices. 
As already seen for level $0$, during the construction all empty sets $S_{j}$ are going to be removed and the indices of the remaining sets are shifted and compacted according to the increasing order of their original indices.
More precisely, assume that on level $l$, the sequence of vertices in $\sdagAW{\aut}{w}$ is $\vertex{S_{1}}{l}, \cdots, \vertex{S_{k_{l}}}{l}$. 
The vertices at level $l+1$ are generated as follows. 
Given a set $S_{j}$, on reading the letter $\wordletter{w}{l}$, the set $\trans(S_{j}, \wordletter{w}{l})$ of the $\wordletter{w}{l}$-successors of $S_{j}$ is computed;
it is then partitioned into the non-$\acc$ set $S'_{2j-1} = \trans(S_{j}, \wordletter{w}{l}) \setminus \acc$ and the $\acc$-set $S'_{2j} = \trans(S_{j}, \wordletter{w}{l}) \cap \acc$.

This gives us a sequence of sets $S'_{1}, S'_{2}, \cdots, S'_{2k_{l}-1}, S'_{2k_{l}}$.
We now have to ensure that the sets of states are not empty and disjoint.
Note that there can be states of $\aut$ that occur in several sets $S'_{j}$. 
The first operation we perform on the sequence of sets is to keep only the rightmost occurrence of a state: 
different runs of $\aut$ may merge with each other at some level and we only need to keep one of them and cut off the others, since they share the same infinite suffix and finite prefixes do not affect whether the DAG is accepting.
This means that, if a state $s$ occurs in multiple sets, we only keep $s$ in the rightmost set, that is, we choose to keep the run $\run$ represented in the rightmost set and cut off the other runs that join with $\run$ at $s$.
This operation does not change whether the codeterministic DAG~$\sdagAW{\aut}{w}$ is accepting, since at least one accepting run of $\aut$ remains and will not be cut off~\cite{DBLP:conf/birthday/VardiW08}. 
Formally, for each set $S'_{j}$, we define the set $S''_{j} = S'_{j} \setminus \bigcup_{j < p \leq 2k_{l}} S'_{p}$. 
This yields a sequence of disjoint sets $S''_{1}, S''_{2}, \cdots, S''_{2k_{l}-1}, S''_{2k_{l}}$; 
some of them may be empty.
After removing the empty sets in this sequence and reassigning the index of each set according to their positions, we finally obtain the sequence of sets on level $l+1$, denoted by $S_{1}, \cdots, S_{k_{l+1}}$, which will become the vertices $\vertex{S_{1}}{l+1}, \cdots, \vertex{S_{k_{l+1}}}{l+1}$ at level $l+1$.
Obviously, the resulting sets at the same level are again pairwise disjoint, so they form a slice.

Based on this construction, we define the codeterministic DAG~$\sdagAW{\aut}{w} = \graph{\vertices}{\edges}$ of $\aut$ over $w$ for an NBA~$\aut$ as follows:
\begin{description}
\item[Vertices $\vertices$:] 
    the set of vertices is $V = \bigcup_{\ell \in \naturals, 1 \leq j \leq k_{\ell}} \setnocond{\vertex{S_{j}}{\ell}}$;
\item[Edges $\edges$:]
    there is an edge from $\vertex{S_{j}}{l}$ to $\vertex{S_{h}}{l+1}$ if and only if $S_{h}$ is either $S''_{2j-1}$ or $S''_{2j}$ as defined above where $1 \leq j \leq k_{l}$ and $1 \leq h \leq k_{l+1}$. 
\end{description}

\begin{markedexample}
\label{ex:slicedDAGwordInLanguage}
    As an example of the construction of $\sdagAW{\aut}{w}$, consider the NBA~$\aut$ shown in Figure~\ref{fig:example2ANBA} and the word $w = b^{\omega}$.
    \begin{itemize}
    \item 
        There is only one vertex at level $0$, namely, $\vertex{S^{0}_{1}}{0} = \vertex{\setnocond{q_{0}}}{0}$, since $\initialStates \setminus \acc = \setnocond{q_{0}} \neq \emptyset$ and $\initialStates \cap \acc = \emptyset$.
    \item
        Regarding level $1$, from $\vertex{S^{0}_{1}}{0}$ we construct $S'_{1} = \trans(S^{0}_{1}, b) \setminus \acc = \setnocond{q_{2}}$ and $S'_{2} = \trans(S^{0}_{1}, b) \cap \acc = \setnocond{q_{1}}$.
        These sets are disjoint and not empty, so we just obtain the vertices $\vertex{S^{1}_{1}}{1} = \vertex{\setnocond{q_{2}}}{1}$ and $\vertex{S^{1}_{2}}{1} = \vertex{\setnocond{q_{1}}}{1}$, the only $\acc$-vertex at this level.
    \item
        Moving on to level $2$, we now construct four sets: 
        $S'_{1} = \trans(S^{1}_{1}, b) \setminus \acc = \emptyset$ and $S'_{2} = \trans(S^{1}_{1}, b) \cap \acc = \setnocond{q_{1}}$ out of $S^{1}_{1}$ as well as $S'_{3} = \trans(S^{1}_{2}, b) \setminus \acc = \emptyset$ and $S'_{4} = \trans(S^{1}_{2}, b) \cap \acc = \setnocond{q_{1}}$ out of $S^{1}_{2}$.
        After removing the common states, we obtain the four sets $S''_{1} = S''_{2} = S''_{3} = \emptyset$ and $S''_{4} = \setnocond{q_{1}}$.
        By purging empty sets and reassigning indices, we get the vertex $\vertex{S^{2}_{1}}{2} = \vertex{\setnocond{q_{1}}}{2}$, that is also an $\acc$-vertex.
    \item
        In level $3$, we construct the sets $S'_{1} = \trans(S^{2}_{1}, b) \setminus \acc = \emptyset$ and $S'_{2} = \trans(S^{2}_{1}, b) \cap \acc = \setnocond{q_{1}}$;  
        the only resulting vertex is $\vertex{S^{3}_{1}}{3} = \vertex{\setnocond{q_{1}}}{3}$, again an $\acc$-vertex.
    \end{itemize}
    The vertices in higher levels $\ell > 3$ are similar to $\vertex{S^{3}_{1}}{3} = \vertex{\setnocond{q_{1}}}{3}$: 
    they are just $\vertex{S^{\ell}_{1}}{\ell} = \vertex{\setnocond{q_{1}}}{\ell}$.
\end{markedexample}

\begin{markedexample}
\label{ex:slicedDAGwordNotInLanguage}
    As another example of the construction of $\sdagAW{\aut}{w'}$, consider again the NBA~$\aut$ shown in Figure~\ref{fig:example2ANBA} and the word $w' = bba^{\omega}$.
    The vertices at level $0$, $1$, and $2$ are the same as in Example~\ref{ex:slicedDAGwordInLanguage}, given the fact that the two words $w$ and $w'$ share the same first two letters.
    Consider now level $3$ and the letter $a$: from the vertex $\vertex{S^{2}_{1}}{2} = \vertex{\setnocond{q_{1}}}{2}$ we construct the sets $S'_{1} = \trans(S^{2}_{1}, a) \setminus \acc = \setnocond{q_{3}}$ and $S'_{2} = \trans(S^{2}_{1}, a) \cap \acc = \emptyset$;  
    the only resulting vertex is $\vertex{S^{3}_{1}}{3} = \vertex{\setnocond{q_{3}}}{3}$, a non-$\acc$-vertex.
    All vertices at level $\ell > 3$ are just similar, i.e., $\vertex{S^{\ell}_{1}}{\ell} = \vertex{\setnocond{q_{3}}}{\ell}$.
\end{markedexample}

By the construction of $\sdagAW{\aut}{w}$, each vertex $\vertex{S_{h}}{l+1}$, in which $S_{h}$ is either $S''_{2j-1}$ or $S''_{2j}$ as computed from $S_{j}$, has at most one predecessor $\vertex{S_{j}}{l}$.
Thus $\sdagAW{\aut}{w}$ is codeterministic.
Similarly to Lemma~\ref{lem:FANBAwordRejectedStableLevelInCodeterministicDAG}, we have the following lemma for the codeterministic DAG constructed for a given word.

\begin{lemma}[Codeterministic DAGs for NBAs~\cite{DBLP:conf/birthday/VardiW08}]
\label{lem:finite-ambiguity-stable-levels}
    Given an NBA~$\aut$ and a word $w \in \infwords$, let $\sdagAW{\aut}{w}$ be the codeterministic DAG as constructed above.
    Then we have that
    \begin{enumerate}
    \item 
        the number of (accepting) $\omega$-branches in $\sdagAW{\aut}{w}$ is at most $\size{\states}$;
    \item
        $w$ is accepted by $\aut$ if and only if $\sdagAW{\aut}{w}$ is accepting; and
    \item
        there exists a stable level $\stableLevel \geq 1$ in $\sdagAW{\aut}{w}$ such that all $\acc$-vertices after level $\stableLevel$ are finite if and only if $w \notin \lang{\aut}$.
    \end{enumerate}
\end{lemma}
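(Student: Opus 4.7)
The plan is to prove the three parts in sequence, all leveraging the codeterministic structure of $\sdagAW{\aut}{w}$ produced by the slice-based construction. For part~1, I would exploit the fact that each non-level-$0$ vertex of $\sdagAW{\aut}{w}$ has a unique predecessor, a direct consequence of the construction since every set $S''_{h}$ at level $\ell+1$ derives from a unique $S_{j}$ at level $\ell$. Hence two distinct $\omega$-branches that first disagree at some level cannot re-merge afterwards, as merging would require a vertex with two predecessors. For any collection of $k$ distinct $\omega$-branches, one can therefore find a level past which they occupy $k$ pairwise distinct vertices; since the vertices at any level are labelled by non-empty pairwise disjoint subsets of $\states$, there are at most $\size{\states}$ of them at each level, yielding $k \leq \size{\states}$.

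For part~2, the ($\Leftarrow$) direction proceeds via König's lemma. From an accepting $\omega$-branch $\vertex{T_{0}}{0}, \vertex{T_{1}}{1}, \ldots$ I would build the finitely-branching tree of partial paths $q_{0} q_{1} \cdots q_{\ell}$ with $q_{i} \in T_{i}$ and $q_{i+1} \in \trans(q_{i}, \wordletter{w}{i})$. This tree has paths of arbitrarily large length because the slicing forces $T_{i+1} \subseteq \trans(T_{i}, \wordletter{w}{i})$, so every $q' \in T_{i+1}$ admits a chain of predecessors back to level~$0$. König's lemma yields an infinite run of $\aut$ on $w$; since every non-empty slice is entirely contained in $\acc$ or entirely disjoint from $\acc$ (the split on $\acc$ is performed at every step of the construction), the accepting branch forces infinitely many $q_{\ell} \in \acc$, making this run accepting. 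For the converse direction, I would follow the standard argument of~\cite{DBLP:conf/birthday/VardiW08}: given an accepting run $\run = q_{0} q_{1} \cdots$, the unique slice $\tau_{\ell}$ containing $q_{\ell}$ satisfies $\tau_{\ell} \subseteq \acc$ whenever $q_{\ell} \in \acc$, and the rightmost-keep pruning preserves at least one accepting run at each merging event; iteratively replacing $\run$ by the merging witness yields an accepting run $\run'$ whose sequence of slices is an actual $\omega$-branch of $\sdagAW{\aut}{w}$, witnessing acceptance.

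For part~3, combining parts~1 and~2 handles both directions. If $w \notin \lang{\aut}$, part~2 implies that no $\omega$-branch of $\sdagAW{\aut}{w}$ is accepting, and part~1 bounds the number of $\omega$-branches by $\size{\states}$; each such branch visits $\acc$-vertices only finitely often, so taking the maximum such level and adding one yields a stable level, because any $\acc$-vertex past it lies on no $\omega$-branch and is thus finite. Conversely, if a stable level $\stableLevel$ exists, then every $\omega$-branch stops visiting $\acc$-vertices past level~$\stableLevel$, so no $\omega$-branch is accepting and part~2 gives $w \notin \lang{\aut}$. I expect the main obstacle to be the converse direction of part~2: reconciling the potentially run-shifting effect of the rightmost-keep pruning step with the need to exhibit an actual $\omega$-branch of $\sdagAW{\aut}{w}$, rather than merely a sequence of slices containing the successive states of the original run.
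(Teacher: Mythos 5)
The paper does not actually prove this lemma: it is imported verbatim from the cited work of Vardi and Wilke, so there is no in-paper proof to compare against. Judged on its own merits, your proposal is sound for part~1 (codeterminism forbids re-merging, and the slices at each level are pairwise disjoint non-empty subsets of $\states$, so the Pigeonhole bound $\size{\states}$ applies -- this mirrors the paper's proof of Lemma~\ref{lem:FANBWfiniteOmegaBranchesInCodeterministicDAG}), for the K\"onig's-lemma direction of part~2 (every state in a slice has a predecessor chain back to level~$0$, and every slice is either contained in or disjoint from $\acc$, so the extracted run is accepting), and for the forward direction of part~3, which only uses the direction of part~2 you did prove.

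The genuine gap is exactly where you flagged it: the direction $w \in \lang{\aut} \implies \sdagAW{\aut}{w}$ accepting. Your ``iteratively replace $\run$ by the merging witness'' step has no termination or convergence argument, and for a general NBW the replacement can genuinely be needed infinitely often: the rightmost-keep pruning may cut the current run's slice-trajectory at infinitely many levels. The two devices the paper uses to tame this in its analogous lemmas are both unavailable here: for FANBWs (Lemma~\ref{lem:FANBWacceptanceCodeterministicDAG}) the separating level of Lemma~\ref{lem:separatingLevelDAG} guarantees only finitely many merges, and for LDBWs (Lemma~\ref{lem:LDBWlanguageSizeComplementNBW:merge}) each merge strictly decreases a priority bounded by $\size{\dstates}+1$; a general NBW admits neither bound. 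The standard repair is a limit argument: for each level $m$ the position of the level-$m$ ancestor of the slice containing $\wordletter{\run}{\ell}$ is non-decreasing in $\ell$ (each cut moves the run strictly to the right) and bounded by $\size{\states}$, hence stabilizes, and the stabilized ancestors assemble into an $\omega$-branch. But even then one must still argue that this limit branch visits $\acc$-vertices infinitely often -- the $\acc$-slices containing the $\wordletter{\run}{\ell}$ need not themselves lie on the limit branch -- and that final step is the real content of the cited result. Since part~3's backward direction ($\stableLevel$ exists $\implies w \notin \lang{\aut}$) is the contrapositive of this very implication, it inherits the same gap.
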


Similarly to the \rkc method, given an NBA~$\aut$, the \slc algorithm~\cite{DBLP:conf/birthday/VardiW08} constructs a complementary NBA~$\aut^{c}$ whose runs over an $\omega$-word $w$  correspond to a codeterministic DAG~$\sdagAW{\aut}{w}$.
The crucial part of the \slc construction is also the identification of non-accepting codeterministic DAGs, similarly to the rank-based one.
The rank-based complementation construction uses the ranking function for the identification: 
the fact that the ranks of all $\omega$-branches get trapped in odd ranks indicates the non-acceptance of the DAGs.
The \slc construction, instead, exploits a different approach to identify non-accepting codeterministic DAGs, based on the stable level from Lemma~\ref{lem:finite-ambiguity-stable-levels}, Item (3).

Given an $\omega$-word $w$ and the codeterministic DAG~$\sdagAW{\aut}{w}$, the general idea in the \slc construction~\cite{DBLP:conf/birthday/VardiW08} to check whether $w \notin \lang{\aut}$ is that it will first guess a stable level $\stableLevel$ and then check whether all $\acc$-vertices after the level $\stableLevel$ are finite.
To identify the finite vertices after $\stableLevel$, we can label a vertex with $\bot$ if it is a descendant of an $\acc$-vertex and with $\top$ otherwise.
For the levels before $\stableLevel$, we can just label all vertices with $\top$.
It follows that $\sdagAW{\aut}{w}$ is non-accepting if and only if all $\omega$-branches in $\sdagAW{\aut}{w}$ will have only $\top$-labels.

Formally, this is obtained by defining the labelling function $\slcLabellingFunction \colon \vertices \to \setnocond{\top, \bot}$, where $\vertices$ is the set of vertices in $\sdagAW{\aut}{w}$, as follows:
for each $i \in \naturals$, 
let $\vertices_{i}$ be the set of vertices on level $i$.
\begin{itemize}
\item 
    If $i \leq \stableLevel$, then for every $v \in \vertices_{i}$ we define $\slcLabellingFunction(v) = \top$.
\item 
    If $i > \stableLevel$, then for every $v \in \vertices_{i}$:
    \begin{itemize}
    \item 
        if $v$ is an $\acc$-vertex, then $\slcLabellingFunction(v) = \bot$;
    \item 
        otherwise, $\slcLabellingFunction(v) = \slcLabellingFunction(u)$ with $(u, v) \in \edges$ where $\edges$ is the set of edges in $\sdagAW{\aut}{w}$.
        This means that the vertex $v$ gets its label from its only parent $u \in \vertices_{i-1}$. 
    \end{itemize}
\end{itemize}

It follows immediately from Lemma~\ref{lem:finite-ambiguity-stable-levels} that:
\begin{lemma}[\cite{DBLP:conf/birthday/VardiW08,DBLP:journals/corr/FogartyKWV13,DBLP:journals/iandc/FogartyKVW15}]
\label{lem:identify-nonacc-lambda-func}
    Let $\sdagAW{\aut}{w}$ be the codeterministic DAG of the NBA~$\aut$ over $w \in \infwords$.
    $\aut$ rejects $w$ if and only if there exists an integer $\stableLevel > 0$ such that all $\omega$-branches in $\sdagAW{\aut}{w}$ will have only $\top$-labels by the labelling function $\slcLabellingFunction$.
\end{lemma}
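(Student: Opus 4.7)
The plan is to reduce both directions of the biconditional to item~(3) of Lemma~\ref{lem:finite-ambiguity-stable-levels}, using the labelling $\slcLabellingFunction$ as an operational witness of the stable level. The key observation is that because $\sdagAW{\aut}{w}$ is codeterministic, the recursive clause ``$\slcLabellingFunction(v) = \slcLabellingFunction(u)$ for the unique parent $u$ of $v$'' is well defined, so a $\bot$-label on any vertex above the guessed stable level can always be traced back to an $\acc$-vertex that originates it.

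For the forward direction, suppose $w \notin \lang{\aut}$. By Lemma~\ref{lem:finite-ambiguity-stable-levels}(3), pick a stable level $\stableLevel \geq 1$ such that every $\acc$-vertex at level $> \stableLevel$ is finite, and use this $\stableLevel$ to define $\slcLabellingFunction$. Let $\branch$ be an arbitrary $\omega$-branch. Assume for contradiction that some vertex $v$ of $\branch$ has $\slcLabellingFunction(v) = \bot$. By the labelling rule, $v$ must lie at a level $> \stableLevel$, and tracing the $\bot$-label backwards along the (unique) parents on $\branch$ must terminate at some $\acc$-vertex $v'$ of $\branch$ at a level $> \stableLevel$. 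Since $v' \in \branch$, $v'$ lies on an $\omega$-branch, contradicting the fact that $v'$ is finite.

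For the converse, fix an integer $\stableLevel > 0$ witnessing that every $\omega$-branch of $\sdagAW{\aut}{w}$ carries only $\top$-labels, and suppose for contradiction that $\aut$ accepts $w$. By Lemma~\ref{lem:finite-ambiguity-stable-levels}(2), $\sdagAW{\aut}{w}$ is accepting, so there exists an $\omega$-branch $\branch$ that visits $\acc$-vertices infinitely often; in particular, $\branch$ contains an $\acc$-vertex $v$ at a level $> \stableLevel$. The defining clause for levels above $\stableLevel$ forces $\slcLabellingFunction(v) = \bot$, contradicting our assumption.

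I do not expect a substantial obstacle: the heavy lifting has already been done in Lemma~\ref{lem:finite-ambiguity-stable-levels}, and the only point requiring care is verifying that the $\bot$-propagation along $\branch$ is well posed (which follows from codeterminism and the recursive definition of $\slcLabellingFunction$) and that $\bot$-labels cannot be created spontaneously on vertices that are not $\acc$-vertices above $\stableLevel$ (immediate from the case split in the definition).
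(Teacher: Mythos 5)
Your proof is correct and follows essentially the same route as the paper, which states the lemma as an immediate consequence of Lemma~\ref{lem:finite-ambiguity-stable-levels}; you simply spell out the bookkeeping (well-definedness of the $\bot$-propagation via codeterminism, and the trace-back of a $\bot$-label to a non-finite $\acc$-vertex above $\stableLevel$) that the paper leaves implicit.
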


We note that every guess of the stable level $\stableLevel$ yields a run in the complementary NBA~$\aut^{c}$ constructed by \slc, so there may be infinitely many runs of $\aut^{c}$ over $w$.
The last part of the \slc construction is about how to verify whether $\slcLabellingFunction$ is correct given a guessed stable level $\stableLevel$, which is explained below.

Similarly to \rkc, also the \slc construction makes use of a breakpoint construction for the verification of $\slcLabellingFunction$. 
Assume that we are at a level $\ell \geq \stableLevel$ and the ordered vertices are $\vertex{S_{1}}{\ell}, \cdots, \vertex{S_{k_{\ell}}}{\ell}$.
A macrostate of $\aut^{c}$, corresponding to the level $\ell \geq \stableLevel$, not only contains the information about the ordered vertices $\vertex{S_{j}}{\ell}$, but also decorates each vertex $\vertex{S_{j}}{\ell}$ with a label $l_{j} \in \setnocond{\labDie, \labInf, \labNew}$.
These labels are used to verify whether $\slcLabellingFunction$ is correct.
Intuitively,
\begin{itemize}
\item  
    \labDie-labelled vertices $\vertex{S_{j}}{\ell}$ are  $\bot$-vertices, descendants of an $\acc$-vertex.
    The set of \labDie-labelled vertices is currently being inspected in the construction and in fact forms a breakpoint set similarly to the rank-based complementation construction.
    For $w$ to be accepted (i.e., $w \notin \lang{\aut}$), the sets $S_{j}$ in \labDie-labelled vertices should eventually become empty after finitely many steps, thus making $\vertex{S_{j}}{\ell}$ a finite vertex.
    Recall that empty sets will be removed in the construction of $\sdagAW{\aut}{w}$.
\item 
    \labInf-labelled vertices $\vertex{S_{j}}{\ell}$ are $\top$-vertices; 
    they are not yet a descendant of an $\acc$-vertex.
\item $
    \labNew$-labelled vertices are also descendants of an $\acc$-vertex and thus $\bot$-vertices in $\slcLabellingFunction$;
    the \labNew-labelled vertices should be inspected later by changing their labels to \labDie once all current \labDie-labelled vertices disappear. 
\end{itemize}
We omit the detailed \slc construction for general NBAs since
it can be found in~\cite{DBLP:conf/birthday/VardiW08,TsaiFVT14};
in Section~\ref{ssec:FANBAsliceBasedComplementation} we will present a similar but simplified construction for FANBAs.
The acceptance condition in \slc also requires the breakpoint set to be empty infinitely often. 
Assume that $w \in \lang{\aut^{c}}$:
this means that the breakpoint set becomes empty for infinitely many times, i.e., \labDie-labelled vertices disappear infinitely often.
It follows that all $\omega$-branches in $\sdagAW{\aut}{w}$ will have only \labInf-labelled vertices;
that is, all $\omega$-branches in $\sdagAW{\aut}{w}$ will have only $\top$-labels by the labelling function $\slcLabellingFunction$.
Then $w$ is not accepted by $\aut$ due to Lemma~\ref{lem:identify-nonacc-lambda-func}.

Assume that $w$ is not accepted by $\aut$.
We will have an integer $\stableLevel$ such that $\sdagAW{\aut}{w}$ will have only $\top$-labels by the labelling function $\slcLabellingFunction$.
Since we will guess each possible $\stableLevel$, we are guaranteed to obtain the labelling function $\slcLabellingFunction$ and the set of \labDie-labelled vertices will disappear infinitely often, i.e, the breakpoint becomes empty for infinitely many times.
Thus $w$ will be accepted by $\aut^{c}$.

The guess of the level $\stableLevel$ is done by a nondeterministic transition from a state of $\aut^{c}$ without the decorations in $\setnocond{\labDie, \labInf, \labNew}$ to a state with such a decoration.
A state of $\aut^{c}$ without a decoration can be encoded as a ordered sequence of states $S_{1}, \cdots, S_{k}$ since the level numbers can be omitted.
Obviously, the number $k$ of sets in a slice is at most the number $n$ of states in $\aut$.
According to~\cite{DBLP:journals/corr/FogartyKWV13}, the number of all possible states is approximately $(0.53n)^{n}$.
A state with the decoration will induce a blow-up of $4^{n}$ since a state may be labelled with \labDie, \labInf, \labNew, or none of them, thus leading to $4^{n} \times (0.53n)^{n} \in \bigO((3n)^{n})$.
Now we recall the correctness and the complexity results of the above slice-based construction:
\begin{proposition}[The Language and Size of $\aut^{c}$ for NBAs~\cite{DBLP:conf/birthday/VardiW08}]
\label{lem:size-language-slice}
    Given an NBA~$\aut$ with $n$ states, let $\aut^{c}$ be the NBA constructed by \slc.
    Then $\lang{\aut^{c}} = \infwords \setminus \lang{\aut}$ and $\aut^{c}$ has $\bigO((3n)^{n})$ states.
\end{proposition}

As a final note about the construction of codeterministic DAGs $\sdagAW{\aut}{w}$ over words given above, one can translate any given NBA with $n$ states to an FANBA with at most $3^{n}$ states and $n$ accepting runs per word~\cite{LodingP18}, where every run of the constructed FANBA over $w$ corresponds to a branch in the constructed $\sdagAW{\aut}{w}$.
This means that codeterministic DAGs are useful not only for complementing automata, but also in the translation of NBAs to more restrictive, yet equivalent, subclasses.

\subsection{Slice-Based Complementation Algorithm for FANBAs}
\label{ssec:FANBAsliceBasedComplementation}

We now propose a specialized slice-based complementation construction for FANBAs.
We first provide an overview of the algorithm and then we  present the technicalities.
According to Lemma~\ref{lem:FANBAwordRejectedStableLevelInCodeterministicDAG},  given a word $w \notin \lang{\aut}$, there exists a stable level $\stableLevel$ in the codeterministic DAG~$\reduceddagAW{\aut}{w}$ such that each $\acc$-vertex on a level after $\stableLevel$ is finite. 
Therefore, in the construction of $\aut^{c}$, we can nondeterministically guess such a level $\stableLevel$ and then use a breakpoint construction to verify that our guess is the correct one, in analogy with \rkc. 
This can be done since, when constructing the complementary NBA~$\aut^{c}$, we can identify two phases: 
the \emph{initial phase} and the \emph{accepting phase}.
While the term ``initial'' is common, the second phase has different names in literature: 
it is called ``accepting'' in~\cite{DBLP:journals/iandc/LiCZL20} and ``repetition'' in~\cite{DBLP:conf/birthday/VardiW08}.
We adopt the former since it recalls that the constructed NBA is a limit deterministic NBA (cf.\@ Remark~\ref{rem:slcGivesLDBAs}), with the initial phase in the nondeterministic part and the accepting phase in the deterministic part, corresponding to the constraints $\initialStates \subseteq \nstates$ and $\acc \subseteq \dstates$, respectively (cf.\@ Definition~\ref{def:typesOfBAsbyTransitions}).

In the initial phase, we trace the evolution of $\aut$ over $w$ by a pure subset construction that keeps track of the states of $\aut$ reached while reading $w$; 
note that such states are also occurring at the corresponding  level of the codeterministic DAG~$\reduceddagAW{\aut}{w}$. 
During the initial phase, on reading a letter while being in a macrostate of $\aut^{c}$, the macrorun of $\aut^{c}$ over $w$ decides to either remain in the initial phase or to jump to the accepting phase. 
Once entering the accepting phase, we bet that the macrorun of $\aut^{c}$, which represents multiple runs of $\aut$, has reached the stable level $\stableLevel$. 
To verify whether the bet is winning, we adopt a breakpoint construction that allows us to check whether all $\acc$-vertices after level $\stableLevel$ are finite.

In the accepting phase, we use as macrostate a triplet $(N, C, B)$ to encode the set of vertices and their labels as they appear on a level $\ell > \stableLevel$ in the codeterministic DAG~$\reduceddagAW{\aut}{w}$ (or $\sdagAW{\aut}{w}$ for general NBAs accordingly). 
Since we do not care about the actual level $\ell$, but only about the states we are visiting in such a level, we omit $\ell$ and identify vertices with their states:
suppose that we reach the set of states $S$ at two levels $\ell$ and $\ell' > \ell$;
the fact that a word is accepted depends only on whether from $S$ we will visit infinitely often accepting states, not on how we reached $S$ by reading the initial, finite fragment of $w$. 
This means that in the triplet $(N, C, B)$ we have that 
\begin{itemize}
\item 
    the set $N$ keeps all states of the reachable vertices on level $\ell$, corresponding to the set of all vertices labelled with \labDie, \labInf, and \labNew;
\item 
    the set $C$ keeps all states of the finite vertices on the level $\ell$. That means, it contains states of both \labNew-labelled vertices recording new encountered states and \labDie-labelled vertices being inspected now;
\item 
    the set $B \subseteq C$ as a breakpoint construction is used to verify that the guess on the set $C$ of finite vertices is correct, corresponding to the set of vertices labelled with \labDie,
\end{itemize} 
where $\ell$ is some level.
Recall that \labDie, \labInf, and \labNew are three labels of vertices used in \slc for complementing general NBAs, as described in Subsection~\ref{ssec:slice-run-dag}.
The specialized complementation algorithm for FANBAs is formalized below.
Recall that $\cotrans$ denotes the unique reduced transition function associated with the reduced DAGs of $\aut$ (cf.\@ Corollary~\ref{cor:FANBAuniqueReducedTransitionFunctionDAG} in Section~\ref{sec:reduced-run-dag}).
\begin{definition}[Slice-based complementation for FANBAs]
\label{def:FANBAsliceBasedComplementation}
    Given an FANBA~$\aut = (\states, \initialStates, \trans, \acc)$, let $\cotrans$ be the reduced transition function associated with the reduced DAGs for $\aut$.

    We define the NBA~$\aut^{c}= (\states^{c}, \initialStates^{c}, \trans^{c}, \acc^{c})$ as follows.
    \begin{itemize}
    \item 
        $\states^{c} \subseteq 2^{\states} \cup (2^{\states} \times 2^{\states} \times 2^{\states})$ is the smallest set such that for $\initialStates^{c}$ and $\trans^{c}$ as defined below, we have $\initialStates^{c} \subseteq \states^{c}$ and $\trans^{c}(\states^{c}, a) \subseteq \states^{c}$ for each $a \in \alphabet$;
    \item 
        $\initialStates^{c} = \setnocond{\initialStates}$;
    \item 
        $\trans^{c} = \ntrans^{c} \cup \jtrans^{c} \cup \dtrans^{c}$ where
        \begin{itemize}
        \item 
            $\ntrans^{c} \colon 2^{\states} \times \alphabet \to 2^{2^{\states}}$ is such that for each $S \subseteq \states$ and $a \in \alphabet$, we set $\ntrans^{c}(S, a) = \setnocond{\cotrans(S, a)}$;
        \item 
            $\dtrans^{c} \colon (2^{\states} \times 2^{\states} \times 2^{\states}) \times \alphabet \to (2^{\states}\times 2^{\states} \times 2^{\states})$ is such that for each $N, C, B \subseteq \states$ and $a \in \alphabet$, we define $\dtrans^{c}((N, C, B), a) = (N', C', B')$ where 
            \begin{itemize}
            \item 
                $N' = \cotrans(N, a)$,
            \item 
                $C' = \cotrans(C, a) \cup (N' \cap \acc)$, and
            \item 
                if $B \neq \emptyset$, then $B' = \cotrans(B, a)$, otherwise $B' = C'$;
            \end{itemize}
        \item 
            $\jtrans^{c} \colon 2^{\states} \times \alphabet \to 2^{2^{\states} \times 2^{\states} \times 2^{\states}}$ is such that for each $S \subseteq \states$ and $a \in \alphabet$, we set $\jtrans^{c}(S, a) = \setnocond{\dtrans^{c}((S, S \cap \acc, S \cap \acc), a)}$;
        \end{itemize}
    \item 
        $\acc^{c} = \setnocond{(N, C, \emptyset) \in \states^{c}}$.
    \end{itemize}
\end{definition}
In practice, the component $\ntrans^{c}$ and the macrostates $S \subseteq 2^{\states}$ follow a subset construction so to keep track of all possible current states of $\aut$ after having read a finite prefix of an $\omega$-word $w \in \infwords$.
The component $\jtrans^{c}$ makes a guess that it is time to establish whether all runs of $\aut$ are finite (thus, $w \notin \lang{\aut}$), so to let $\aut^{c}$ accept $w$; 
we do this by jumping from the current macrostate $S \subseteq 2^{\states}$ in the initial phase to the corresponding macrostate $(N', C', B') \in 2^{\states} \times 2^{\states} \times 2^{\states}$ in the accepting phase.
The component $\dtrans^{c}$ is responsible for checking whether the guess was correct:
the component $N$ is again the result of a subset construction, like for $\ntrans^{c}$;
the component $C$ traces the runs that visited accepting states after the guess point;
the component $B$ becomes empty once all runs have disappeared because the visited accepting states were finite.
Note that, for each $S \subseteq \states$ and $a \in \alphabet$, both $\ntrans^{c}(S,a)$ and $\jtrans^{c}(S,a)$ are singleton sets, so we can simplify their notation and for instance just write $\ntrans^{c}(S, a) = \cotrans(S, a)$ instead of $\ntrans^{c}(S, a) = \setnocond{\cotrans(S, a)}$.

\begin{figure}
    \centering
    \resizebox{\linewidth}{!}{
    \includegraphics{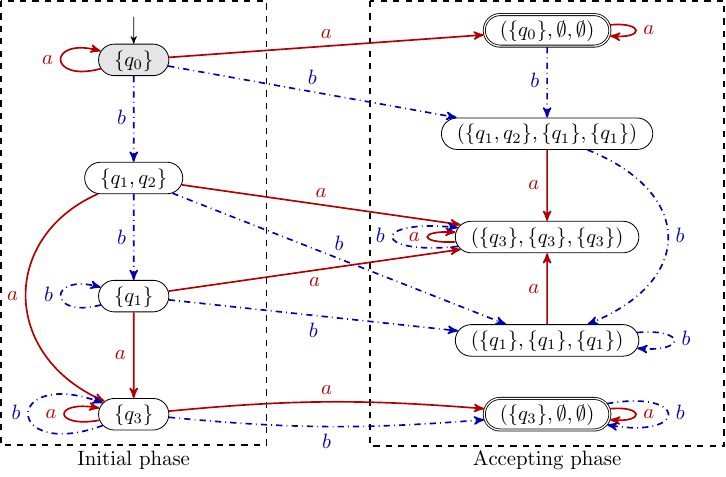}
    }
    \caption{The complementary NBA~$\aut^{c}$ of $\aut$ depicted in Figure~\ref{fig:example2ANBA} constructed by Definition~\ref{def:FANBAsliceBasedComplementation}.} 
    \label{fig:FANBAsliceBasedComplementation}
\end{figure}
\begin{markedexample}
\label{ex:FANBAsliceBasedComplementation}
    As an example of application of the construction given in Definition~\ref{def:FANBAsliceBasedComplementation}, consider the FANBA~$\aut$ depicted in Figure~\ref{fig:example2ANBA}. 
    By definition, the initial macrostates are $\initialStates^{c} = \setnocond{\initialStates}$, i.e., the only initial macrostate is $\setnocond{q_{0}} = \initialStates$.
    From this macrostate, we have the following transitions:
    on reading the letter $a$, we have the successors
    \begin{itemize}
    \item 
        $\setnocond{q_{0}} = \ntrans^{c}(\setnocond{q_{0}}, a)$ and 
    \item 
        $(\setnocond{q_{0}}, \emptyset, \emptyset) = \jtrans^{c}(\setnocond{q_{0}}, a) = \dtrans^{c}((\setnocond{q_{0}}, \emptyset, \emptyset), a)$. 
    \end{itemize}  
    Note that $(\setnocond{q_{0}}, \emptyset, \emptyset)$ is an accepting macrostate since its component $B$ is empty. 

    Still from the initial macrostate, after reading a letter $b$, we have the two successors $\setnocond{q_{1}, q_{2}}$ and $(\setnocond{q_{1}, q_{2}}, \setnocond{q_{1}}, \setnocond{q_{1}})$:
    \begin{itemize}
    \item 
        in the initial phase, $\setnocond{q_{1}, q_{2}} = \ntrans^{c}(\setnocond{q_{0}}, b)$; and
    \item 
        the transition from the initial phase to the accepting phase is given by $(\setnocond{q_{1}, q_{2}}, \setnocond{q_{1}}, \setnocond{q_{1}}) = \jtrans^{c}(\setnocond{q_{0}}, b) = \dtrans^{c}((\setnocond{q_{0}}, \emptyset, \emptyset), b)$.
    \end{itemize}
    All other (reachable) macrostates can be derived similarly;
    the resulting complementary NBA~$\aut^{c}$ is shown in Figure~\ref{fig:FANBAsliceBasedComplementation}. 
    It is easy to see that, for $\alphabet = \setnocond{a, b}$, $\lang{\aut} = \setcond{a^{i} b^{\omega} \in \alphabet^{\omega}}{i \in \naturals}$ and $\lang{\aut^{c}} = \setnocond{a^{\omega} \in \alphabet^{\omega}} \cup \setcond{a^{j} b b^{k} a \setnocond{a,b}^{\omega} \in \alphabet^{\omega}}{j, k \in \naturals}$, that is, $\lang{\aut^{c}} = \alphabet^{\omega} \setminus \lang{\aut}$.
\end{markedexample}

\begin{figure}
    \centering
    \resizebox{\linewidth}{!}{
    \includegraphics{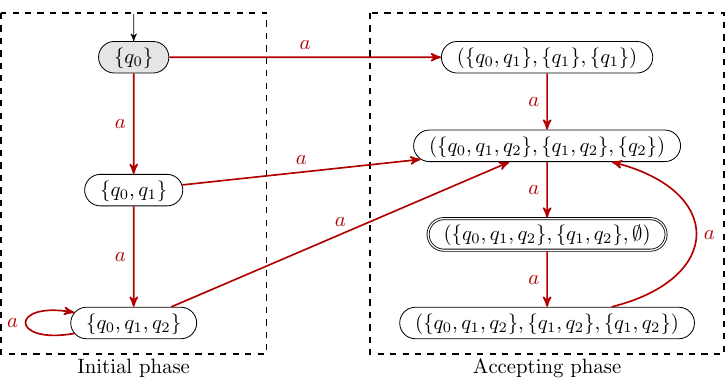}
    }
    \caption{The `$a$' fragment of $\aut[B]^{c}$ constructed by Definition~\ref{def:FANBAsliceBasedComplementation}, for the LDBA~$\aut[B]$ from Figure~\ref{fig:num-of-omega-branches}.} 
    \label{fig:FANBAsliceBasedComplementationTransCompare}
\end{figure}
\begin{markedremark}
\label{rem:transCompare}
    As we have seen in Remark~\ref{ex:reduceTransition}, by using $\trans$ and $\cotrans$ we get different sets of successors for the subsets of the currently reached states, i.e., the components $C$ and $B$ in Definition~\ref{def:FANBAsliceBasedComplementation}. 
    We now show, by an example, that in general $\cotrans$ cannot be replaced by $\trans$ in Definition~\ref{def:FANBAsliceBasedComplementation}: the automaton shown in Figure~\ref{fig:FANBAsliceBasedComplementationTransCompare} is the $a$-fragment of the complementary NBA~$\aut[B]^{c}$ of the LDBA~$\aut[B]$ depicted in Figure~\ref{fig:num-of-omega-branches}, constructed by Definition~\ref{def:FANBAsliceBasedComplementation}.
    By using the reduced transition $\cotrans$, since we have $\cotrans(\setnocond{q_{2}}, a) = \emptyset$, by reading the letter $a$ from the macrostate $(\setnocond{q_{0}, q_{1}, q_{2}}, \setnocond{q_{1}, q_{2}}, \setnocond{q_{2}})$, we reach the successor macrostate $(\setnocond{q_{0}, q_{1}, q_{2}}, \setnocond{q_{1}, q_{2}}, \emptyset)$, that is accepting since the $B$ component is empty.
    As we can see, the word $a^{\omega}$ is accepted by the resulting complement automaton~$\aut[B]^{c}_{\cotrans}$.
    
    If we would use the original transition $\trans$ instead of $\cotrans$ to construct the complement automaton $\aut[B]^{c}_{\trans}$, then the $a$-successor of $(\setnocond{q_{0}, q_{1}, q_{2}}, \setnocond{q_{1}, q_{2}}, \setnocond{q_{2}})$ would be $(\setnocond{q_{0}, q_{1}, q_{2}}, \setnocond{q_{1}, q_{2}}, \setnocond{q_{2}})$ itself, which becomes a non-accepting sink state.
    This implies that the $a$-fragment of the resulting  automaton $\aut[B]^{c}_{\trans}$ would have no accepting states, thus $a^{\omega}$ cannot be accepted when using $\trans$ instead of $\cotrans$.
    This means that $\aut[B]^{c}_{\trans}$ is not the complementary automaton of $\aut[B]$, since $a^{\omega}$ is not accepted by $\aut[B]$ but it is also not accepted by~$\aut[B]^{c}_{\trans}$.
\end{markedremark}

\begin{markedremark}
\label{rem:slcGivesLDBAs}
    It is worthwhile to note that the NBA~$\aut^{c}$ constructed according to Definition~\ref{def:FANBAsliceBasedComplementation} is \emph{limit deterministic}. 
    In fact, the set of macrostates $\states^{c}$ of $\aut^{c}$ can be partitioned into two disjoint sets $\nstates^{c} \subseteq 2^{\states}$ and $\dstates^{c} \subseteq 2^{\states} \times 2^{\states} \times 2^{\states}$ such that $\acc^{c} \subseteq \dstates^{c}$ and for each macrostate $q \in \dstates^{c}$ and $a \in \alphabet$, we have that $\size{\trans^{c}(q, a)} = 1$ and $\trans^{c}(q, a) \subseteq \dstates^{c}$.
\end{markedremark}

The NBA~$\aut^{c}$ is indeed a complementary automaton of the given FANBA~$\aut$, so the use of the superscript $c$ is correct, as formalized by the following theorem.
\begin{restatable}[The Language and Size of $\aut^{c}$ for FANBAs]{theorem}{FANBAlanguageSizeSliceBasedComplement}
\label{thm:FANBAlanguageSizeSliceBasedComplement}
    Let $\aut$ be an FANBA with $n$ states and $\aut^{c}$ be the NBA constructed according to Definition~\ref{def:FANBAsliceBasedComplementation}.
    Then $\lang{\aut^{c}} = \infwords \setminus \lang{\aut}$ and  $\aut^{c}$ has $2^{n} + 4^{n} \in \bigO(4^{n})$ macrostates. 
\end{restatable}
The complete proof is available in~\ref{app:FANBAlanguageSizeSliceBasedComplement}.
The proof of the direction $w \in \lang{\aut} \implies w \notin \lang{\aut^{c}}$ is based on the fact that the finitely many accepting runs of $\aut$ over $w \in \lang{\aut}$ eventually get trapped in the component $B$, so $\aut^{c}$ is not able to visit macrostates with $B = \emptyset$ infinitely often.
For the other direction $w \notin \lang{\aut} \implies w \in \lang{\aut^{c}}$, we use the fact that eventually no accepting state is visited anymore, thus they disappear from the components $B$ and $C$, hence $B$ becomes definitely empty making the corresponding run of $\aut^{c}$ accepting.

Based on Definition~\ref{def:FANBAsliceBasedComplementation}, we can establish a subsumption relation between the macrostates of $\aut^{c}$.
Recall that, given an NBA~$\aut[B] = (\states, \initialStates, \trans, \acc)$ and a set of states $S \subseteq \states$, we denote by $\autNewInitial{\aut[B]}{S}$ be NBA~$\autNewInitial{\aut[B]}{S} = (\states, S, \trans, \acc)$;
we may just write $\autNewInitial{\aut[B]}{q}$ instead of $\autNewInitial{\aut[B]}{\setnocond{q}}$ for $q \in \states$.
\begin{restatable}[Subsumption Relation between Macrostates]{proposition}{sumbsumptionRelationBetweenMacrostatesSLCforFANBAs}
\label{prop:sumbsumptionRelationBetweenMacrostatesSLCforFANBAs}
    Given an FANBA~$\aut$, let $\aut^{c}$ be the complementary NBA of $\aut$ constructed according to Definition~\ref{def:FANBAsliceBasedComplementation}. 
    For each pair of macrostates $m = (N, C, B)$ and $m' = (N', C', B')$ of $\aut^{c}$ such that $N = N'$ and $C \subseteq C'$, it holds that $\lang{\autNewInitial{\aut^{c}}{m'}} \subseteq \lang{\autNewInitial{\aut^{c}}{m}}$;
    that is, $m$ subsumes $m'$.
\end{restatable}
The proof for this proposition is given in~\ref{app:otherProofs}.
The idea underlying this result is that  $\autNewInitial{\aut}{N}$ and $\autNewInitial{\aut}{N'}$ are the same automaton, so they have the same codeterministic DAG and the same stable level. 
Since in $\autNewInitial{\aut^{c}}{m'}$ the set $B'$ becomes empty infinitely often and $B'$ is reset to $C'$, then also $B$ becomes empty infinitely often since it is reset to $C \subseteq C'$.

The subsumption relation defined in Proposition~\ref{prop:sumbsumptionRelationBetweenMacrostatesSLCforFANBAs} can help in reducing the number of macrostates of $\aut^{c}$ we need to explore when constructing $\aut^{c}$ itself:
before generating the successors of the current macrostate $m'$, we can look for an already generated macrostate $m$ subsuming $m'$ such that $\lang{\autNewInitial{\aut^{c}}{m}} = \emptyset$.
If we can find such a macrostate $m$, then we already know that $\lang{\autNewInitial{\aut^{c}}{m'}} = \emptyset$ so we can avoid to generate all possible successors of $m'$;
this may help in reducing the number of macrostates in $\aut^{c}$ from a practical perspective.
Similarly, we can look for such macrostates $m$ while checking language inclusion between an NBA~$\aut[B]$ and the FANBA~$\aut$, since we can avoid to explore a macrostate $m'$ subsumed by some macrostate $m$ when we already know that the language from such a macrostate will be empty.
In fact, checking whether $\lang{\aut[B]} \subseteq \lang{\aut}$ reduces to check whether $\lang{\aut[B]} \cap \lang{\aut^{c}} = \emptyset$; 
this is done by constructing the product automaton $\aut[B] \times \aut^{c}$ and checking whether its language is empty (see, e.g.,~\cite{DBLP:reference/mc/2018} for details).
During the construction of $\aut[B] \times \aut^{c}$, every time we need to explore the product states reachable from a product state $(q,m')$ of $\aut[B] \times \aut^{c}$, we can first try to find a macrostate $m$ such that $m$ subsumes $m'$ and $\lang{\autNewInitial{\aut^{c}}{m}} = \emptyset$ or, more generally, a product state $(q,m)$ such that $m$ subsumes $m'$ and $\lang{\autNewInitial{\aut[B] \times \aut^{c}}{(q,m)}} = \emptyset$.
If this is the case, then we do not have to explore the macrostates reachable from $m'$ since by Proposition~\ref{prop:sumbsumptionRelationBetweenMacrostatesSLCforFANBAs} we know that $\lang{\autNewInitial{\aut^{c}}{m'}} \subseteq \lang{\autNewInitial{\aut^{c}}{m}} = \emptyset$ (and similarly $\lang{\autNewInitial{\aut[B] \times \aut^{c}}{(q,m')}} = \emptyset$ if $\lang{\autNewInitial{\aut[B] \times \aut^{c}}{(q,m)}} = \emptyset$), thus also the language of the product automaton will be empty from the product state $(q, m')$.
We refer to~\cite{DBLP:journals/corr/abs-0902-3958,DBLP:conf/cav/AbdullaCCHHMV10,DBLP:conf/concur/AbdullaCCHHMV11} for an efficient discovery method for such macrostates $m$.

\section{Application to Limit Deterministic \buchi Automata}
\label{sec:applicationToLDBAs}

In the previous two sections, we have shown how to use codeterministic DAGs for the complementation of general NBAs and how it can be optimized for FANBAs. 
The same idea can be applied also in complementing limit deterministic \buchi automata (LDBAs).
In Subsection~\ref{ssec:LDBAcodeterministicDAGconstruction}, we describe the codeterministic DAGs for LDBAs, which later will be used to produce a complementary NBA~$\aut^{c}$ of an LDBA~$\aut$.
Then, in Subsection~\ref{ssec:LDBAcomplementationConstruction}, we show the complementation construction for LDBAs.

\subsection{Codeterministic DAGs for LDBAs}
\label{ssec:LDBAcodeterministicDAGconstruction}

In this section, we consider limit deterministic \buchi automata. 
Given a \buchi automaton, it is limit deterministic if it behaves deterministically after the first visit of an accepting state. 
Recall, as implied by Definition~\ref{def:typesOfBAsbyTransitions}, that an NBA~$\aut = (\states, \initialStates, \trans, \acc)$ is a \emph{limit deterministic \buchi automaton} if, for each $q_{f} \in \acc$, the reachable fragment of the automaton $(\states, \setnocond{q_{f}}, \trans, \acc)$ is deterministic. 
Each LDBA can be divided into two parts: 
one is the deterministic part, which consists of states that are reachable from the accepting states; 
the other is the nondeterministic part that consists of all other states. 
As in Definition~\ref{def:typesOfBAsbyTransitions}, we denote the set of states in the deterministic part by $\dstates$ and the set of states in the nondeterministic part by $\nstates$. 
We can also partition the transition function into three parts: $\ntrans$ relative to states in $\nstates$ only, $\dtrans$ relative to states in $\dstates$ only, and $\jtrans$ connecting states in $\nstates$ to states in $\dstates$.
For uniformity of presentation and in order to simplify the notation, we might just consider $\dtrans$ as a function mapping pairs of states and letters to singletons, instead of states.

We now describe the LDBA specific construction of the codeterministic DAG~$\lcodagAW{\aut}{w} = \graph{\vertices}{\edges}$ for a given LDBA~$\aut$ and $\omega$-word $w$. 
Similarly to the partition of the states of the LDBA~$\aut$, we  split the vertices of $\lcodagAW{\aut}{w}$ into two parts $\vertices = \nvertices \cup \dvertices$: 
the nondeterministic part $\nvertices$ contains all vertices $\vertex{S}{\ell} \in \vertices$ such that $S \subseteq \nstates$ is the set of nondeterministic states reached by $\aut$ after reading the first $\ell$ letters of $w$; 
in the deterministic part $\dvertices$ there are all vertices $\vertex{q}{\ell}$ such that $q \in \dstates$ is a deterministic state reached by $\aut$ after jumping from $\nstates$ to $\dstates$ at some point while reading the first $\ell$ letters of $w$.
This means that at level $0$, we have the following vertices in $\lcodagAW{\aut}{w}$: 
a single vertex $\vertex{\initialStates \cap \nstates}{0} \in \nvertices$ if $\initialStates \cap \nstates \neq \emptyset$ and a vertex $\vertex{q}{0} \in \dvertices$ for each $q \in \initialStates \cap \dstates$. 
Of course, there will be no $\dvertices$-vertices if $\initialStates \cap \dstates = \emptyset$.

Regarding the edges of $\lcodagAW{\aut}{w}$, that is, how to construct the vertices on level $i+1$, we adopt different approaches depending on the type of vertices we start from and we are going to produce.
Given the vertex $\vertex{S_{i}}{i} \in \nvertices$, we generate only one successor vertex $\vertex{S_{i+1}}{i+1} \in \nvertices$ by the standard subset construction on $\ntrans$, that is, $S_{i+1} = \ntrans(S_{i}, \wordletter{w}{i})$.
(Note that the definition of $\ntrans$ ensures that $S_{i+1} \subseteq \nstates$.)
We also generate one successor vertex $\vertex{q}{i+1}$ for each $q \in \jtrans(S_{i}, \wordletter{w}{i}) \subseteq \dstates$.
Given a vertex $\vertex{q}{i} \in \dvertices$, we generate the single successor vertex $\vertex{\dtrans(q, \wordletter{w}{i})}{i+1} \in \dvertices$.

It is worthwhile to note that the deterministic nature of $\dtrans$ ensures that $\dvertices$ is closed under reachability, that is, only vertices $\vertex{q'}{\ell'} \in \dvertices$ can be reached from vertices $\vertex{q}{\ell} \in \dvertices$.
Symmetrically, the subset nature of the construction of nondeterministic vertices ensures that each vertex $\vertex{S'}{\ell'} \in \nvertices$ is only reachable from vertices $\vertex{S}{\ell} \in \nvertices$;
moreover, $\vertex{S'}{\ell'}$ has only one predecessor, namely $\vertex{S}{\ell' - 1} \in \nvertices$ with $S' = \ntrans(S, \wordletter{w}{\ell' - 1})$, provided that $\ell' > 0$.
The way we produce vertices at level $i+1$ does not ensure that the resulting DAG~$\lcodagAW{\aut}{w}$ is codeterministic:
there may be duplicate deterministic states in the successors of states from level $i$, for instance because $\dtrans(q, \wordletter{w}{i}) \in \jtrans(S_{i}, \wordletter{w}{i})$ or because $\dtrans(q, \wordletter{w}{i}) = \dtrans(q', \wordletter{w}{i})$ for vertices $\vertex{q}{i}, \vertex{q'}{i} \in \dvertices$. 
This means that we have to remove all these edges but one. 
We do this by assigning priorities to every candidate edge resulting from the construction of the vertices at level $i+1$; we then keep only the candidate edge with the lowest priority. 

In practice, we use a priority function $\priority \colon \vertices \to \naturals \setminus \setnocond{0}$ to assign priorities to the vertices.
The main idea is that deterministic vertices have priority over nondeterministic vertices and that vertices inherit the minimum priority of their predecessors.
This means that a nondeterministic vertex can only inherit its priority from another nondeterministic vertex while a deterministic vertex gets its priority from its least deterministic predecessor, provided there is one. 
If there is no deterministic predecessor, then its priority depends on the one of its unique nondeterministic predecessor and its deterministic siblings.

As usual, let $\enumerationnocond{q_{1}, q_{2}, \dotsc, q_{n}}$ be the enumeration of the states $\states$ according to some total order $\stateOrder$ over states such that $d \stateOrder n$ whenever $d \in \dstates$ and $n \in \nstates$, that is, deterministic states ``come earlier than'' nondeterministic states in $\stateOrder$.
Recall that at level $0$ we have a deterministic vertex $\vertex{\initial_{j}}{0}$ for each $\initial_{j} \in \initialStates \cap \dstates$ and the nondeterministic vertex $\vertex{\initialStates \cap \nstates}{0}$, provided that $\initialStates \cap \nstates \neq \emptyset$.
Let $\enumerationnocond{\initial_{1}, \cdots, \initial_{k}}$ be the enumeration of $\initialStates \cap \dstates$ under $\stateOrder$.
We define the priority of the vertices at level $0$ as $\priority(\vertex{\initial_{j}}{0}) = j$ for each $\initial_{j} \in \enumerationnocond{\initial_{1}, \cdots, \initial_{k}}$ and $\priority(\vertex{\initialStates \cap \nstates}{0}) = 1 + \size{\initialStates \cap \dstates}$.

We now show how to assign priorities to the vertices at level $\ell + 1$ provided that we already have the priority of the vertices at level $\ell$.
Recall that we generate the single vertex $\vertex{S_{\ell + 1}}{\ell + 1} \in \nvertices$ from the unique $\vertex{S_{\ell}}{\ell} \in \nvertices$, where $S_{\ell + 1} = \ntrans(S_{\ell}, \wordletter{w}{\ell})$, as well as the vertices $\vertex{q}{\ell + 1} \in \dvertices$ for each $q \in \jtrans(S_{\ell}, \wordletter{w}{\ell})$.
Lastly, from each $\vertex{q}{\ell} \in \dvertices$, we generate the vertex $\vertex{\dtrans(q, \wordletter{w}{\ell})}{\ell + 1} \in \dvertices$.
Let 
\begin{itemize}
\item 
    $P_{\ell} = \enumerationcond{p_{k} \in \dstates}{\vertex{p_{k}}{\ell} \in \dvertices}$ be the enumeration under $\stateOrder$ of the deterministic states occurring in the deterministic vertices at level $\ell$;
\item
    $D_{\ell + 1} = \enumerationcond{d_{k} \in \dstates}{d_{k} \in \dtrans(P_{\ell}, \wordletter{w}{\ell})}$ be the enumeration under $\stateOrder$ of the $\wordletter{w}{\ell}$-successor states of $P_{\ell}$ under $\dtrans$; 
    and
\item
    $J_{\ell + 1} = \enumerationcond{j_{k} \in \dstates \setminus D_{\ell + 1}}{j_{k} \in \jtrans(S_{\ell}, \wordletter{w}{\ell})}$ be the enumeration under $\stateOrder$ of the deterministic states not in $D_{\ell + 1}$ that are $\wordletter{w}{\ell}$-successor states of $S_{\ell}$ under $\jtrans$.
\end{itemize}
Recall that the index $k$ of the enumerations $P_{\ell}$, $D_{\ell + 1}$, and $J_{\ell + 1}$ starts from $1$. 
We define the priority function $\priority$ on the vertices at level $\ell + 1$ as 
\begin{itemize}
\item 
    $\priority(\vertex{d}{\ell + 1}) = \min \setcond{\priority(\vertex{p}{\ell})}{p \in P_{\ell} \land d = \dtrans(p, \wordletter{w}{\ell})}$ for each $d \in D_{\ell + 1}$, that is, we assign to each vertex $\vertex{d}{\ell + 1}$ the minimum priority of its deterministic predecessors; 
\item
    $\priority(\vertex{j_{k}}{\ell + 1}) = \priority(\vertex{S_{\ell}}{\ell}) + k - 1$ for each $j_{k} \in J_{\ell + 1}$, that is, we assign to each deterministic vertex that is a successor only of $\vertex{S_{\ell}}{\ell}$ the same priority of $\vertex{S_{\ell}}{\ell}$ increased by $k - 1$ to keep track of the position of $j_{k}$ in $J_{\ell + 1}$;
    and
\item
    $\priority(\vertex{S_{\ell + 1}}{\ell + 1}) = \priority(\vertex{S_{\ell}}{\ell}) + \size{J_{\ell + 1}}$, that is we assign to $\vertex{S_{\ell + 1}}{\ell + 1}$ the priority of its only predecessor $\vertex{S_{\ell}}{\ell}$ increased by $\size{J_{\ell + 1}}$ to ensure that the deterministic vertices whose states are in $J_{\ell + 1}$ get a priority lower than $\priority(\vertex{S_{\ell + 1}}{\ell + 1})$.
\end{itemize}

The above construction of the priority function $\priority$ can be formalized as follows.
\begin{definition}[Codeterministic DAG for LDBAs]
\label{def:LDBAcodeterministicDAGandPriorityFunction}
    Given an LDBA~$\aut = (\states, \initialStates, \trans, \acc)$ with partition of states $\nstates$ and $\dstates$ and corresponding transition relation $\trans = \ntrans \cup \jtrans \cup \dtrans$, an $\omega$-word $w \in \infwords$, and a total order $\stateOrder$ over states such that $d \stateOrder n$ whenever $d \in \dstates$ and $n \in \nstates$, let $\vertices = \nvertices \cup \dvertices$ where $\nvertices \subseteq 2^{\nstates} \times \naturals$ and $\dvertices \subseteq \dstates \times \naturals$ are the smallest sets such that 
    \[
        \arraycolsep=2mm
        \begin{array}{ll}
             \vertex{\initialStates \cap \nstates}{0} \in \nvertices & \text{if $\initialStates \cap \nstates \neq \emptyset$;} \\ 
             \vertex{\ntrans(S_{\ell}, \wordletter{w}{\ell})}{\ell + 1} \in \nvertices & \text{if $\vertex{S_{\ell}}{\ell} \in \nvertices$ and $\ntrans(S_{\ell}, \wordletter{w}{\ell}) \neq \emptyset$;} \\
             \vertex{q}{0} \in \dvertices & \text{for each $q \in \initialStates \cap \dstates$;} \\
             \vertex{q}{\ell + 1} \in \dvertices & \text{for each $q \in \jtrans(S_{\ell}, \wordletter{w}{\ell})$ where $\vertex{S_{\ell}}{\ell} \in \nvertices$;} \\
             \vertex{\dtrans(q, \wordletter{w}{\ell})}{\ell + 1} & \text{for each $\vertex{q}{\ell} \in \dvertices$.}
        \end{array}
    \]
    We call $\priority \colon \vertices \to \naturals \setminus \setnocond{0}$ a \emph{priority function} if $\priority$ satisfies the following constraints:
    \[
        \arraycolsep=2mm
        \begin{array}{ll}
             \priority(\vertex{\initial_{k}}{0}) = k & \text{for each $\initial_{k} \in D_{0}$;} \\
             \priority(\vertex{\initialStates \cap \nstates}{0}) = 1 + \size{\initialStates \cap \dstates} & \text{if $\initialStates \cap \nstates \neq \emptyset$;} \\
             \priority(\vertex{d}{\ell + 1}) = \min \setcond{\priority(\vertex{p}{\ell})}{p \in P_{\ell} \land d = \dtrans(p, \wordletter{w}{\ell})} & \text{for each $d \in D_{\ell + 1}$;}\\
             \priority(\vertex{j_{k}}{\ell + 1}) = \priority(\vertex{S_{\ell}}{\ell}) + k - 1 & \text{for each $j_{k} \in J_{\ell + 1}$;} \\
             \priority(\vertex{S_{\ell + 1}}{\ell + 1}) = \priority(\vertex{S_{\ell}}{\ell}) + \size{J_{\ell + 1}} & \text{if $\vertex{S_{\ell}}{\ell} \in \nvertices$,}
        \end{array}
    \]
    where $\ell \in \naturals$ and 
    \begin{align*}
        D_{0} & = \enumerationcond{\initial_{k} \in \dstates}{\vertex{\initial_{k}}{0} \in \dvertices} \\
        P_{\ell} & = \enumerationcond{p_{k} \in \dstates}{\vertex{p_{k}}{\ell} \in \dvertices} \\
        D_{\ell + 1} & = \enumerationcond{d_{k} \in \dstates}{d_{k} \in \dtrans(P_{\ell}, \wordletter{w}{\ell})} \\
        J_{\ell + 1} & = \enumerationcond{j_{k} \in \dstates \setminus D_{\ell + 1}}{j_{k} \in \jtrans(S_{\ell}, \wordletter{w}{\ell})}
    \end{align*}
    
    The \emph{codeterministic DAG} $\lcodagAW{\aut}{w} = \graph{\vertices}{\edges}$ of $\aut$ over $w$ has the set $\vertices$ defined above as vertices, and the set of edges $\edges \subseteq (\nvertices^{2}) \cup (\nvertices \times \dvertices) \cup (\dvertices^{2})$ is the smallest set such that
    \[
        \arraycolsep=2mm
        \begin{array}{ll}
            (\vertex{S_{\ell}}{\ell}, \vertex{S_{\ell +1}}{\ell + 1}) \in \edges & \text{if $\vertex{S_{\ell}}{\ell}, \vertex{S_{\ell + 1}}{\ell + 1} \in \nvertices$ and $S_{\ell + 1} = \ntrans(S_{\ell}, \wordletter{w}{\ell})$;} \\
            (\vertex{S_{\ell}}{\ell}, \vertex{q}{\ell + 1}) \in \edges & \text{if $\vertex{S_{\ell}}{\ell} \in \nvertices$, $\vertex{q}{\ell + 1} \in \dvertices$, and $q \in J_{\ell + 1}$;} \\
            (\vertex{p}{\ell}, \vertex{d}{\ell + 1}) \in \edges & \text{if $\vertex{p}{\ell}, \vertex{d}{\ell + 1} \in \dvertices$, $p \in P_{\ell}$, $d = \dtrans(p, \wordletter{w}{\ell})$, and} \\ & \text{\hphantom{if} $\priority(\vertex{d}{\ell + 1}) = \priority(\vertex{p}{\ell})$.}
        \end{array}
    \]
\end{definition}

\begin{markedremark}
\label{rem:LDBAcodeterministicDAGisIndeedCodeterministic}
    Note that in the above definition, we say that the run DAG~$\lcodagAW{\aut}{w}$ is codeterministic.
    This is indeed the case because the only situation where a vertex might have two or more predecessors is when it is a deterministic vertex $\vertex{d}{\ell+1}$ with $d$ being in the image under $\jtrans$ or $\dtrans$ of at least two states of $\aut$.
    By construction, it must be the case that at least one of such states is deterministic, since nondeterministic states contribute only one candidate edge given that they are grouped in a single set.
    By the definition of the priority $\priority$, nondeterministic vertices get a priority always larger than deterministic ones, so they will never be used for defining an edge to $\vertex{d}{\ell+1}$, since there is at least one candidate edge from a deterministic vertex.
    Since all deterministic vertices in the level $\ell+1$ get a priority that is increasing according to their enumeration, there is only one vertex $\vertex{m}{\ell}$ that has the minimal priority and has $\vertex{d}{\ell+1}$ as successor.
    This means that only $\vertex{m}{\ell}$ fulfills the condition to have the edge $(\vertex{m}{\ell},\vertex{d}{\ell+1})$.
\end{markedremark}

\begin{figure}
    \centering
    \includegraphics{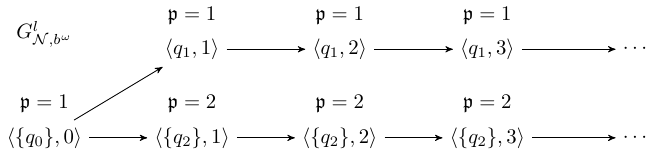}
    \caption{The codeterministic run DAG~$\lcodagAW{\aut[N]}{b^{\omega}}$ of the LDBA~$\aut[N]$ depicted in Figure~\ref{fig:exampleBA}.} 
    \label{fig:runDAGlDBA}
\end{figure}
\begin{markedexample}
\label{ex:runDAGlDBA}
    As an example of priority function, we consider again the NBA~$\aut[N]$ depicted in Figure~\ref{fig:exampleBA} and the word $b^{\omega}$. 
    As we have seen in Example~\ref{ex:typesOfBAs}, $\aut[N]$ is also an LDBA, with $\nstates = \setnocond{q_{0}, q_{2}}$ and $\dstates = \setnocond{q_{1}, q_{3}}$ as partition of the states.
    Regarding the vertices of $\lcodagAW{\aut}{w}$, we have the following vertices and priorities. 
    \begin{description}
    \item[Level $0$: ]
        there is only the nondeterministic vertex $\vertex{\setnocond{q_{0}}}{0} \in \nvertices$, since $\initialStates = \setnocond{q_{0}} \subseteq \nstates$ and $\initialStates \cap \dstates = \emptyset$.
        This means that its priority is $\priority(\vertex{\setnocond{q_{0}}}{0}) = 1 + \size{\initialStates \cap \dstates} = 1$.
    \item[Level $1$:]
        for this level, as vertices, we have the nondeterministic vertex $\vertex{\setnocond{q_{2}}}{1} \in \nvertices$ and the deterministic vertex $\vertex{q_{1}}{1} \in \dvertices$ coming from $\jtrans$. 
        So we get $S_{0} = \setnocond{q_{0}}$, $S_{1} = \setnocond{q_{2}}$, $P_{0} = D_{1} = \emptyset$ and $J_{1} = \enumerationnocond{q_{1}}$.
        
        As priority, we have $\priority(\vertex{q_{1}}{1}) = \priority(\vertex{\setnocond{q_{0}}}{0}) + 1 - 1 = 1$ for the deterministic vertices, and priority $\priority(\vertex{\setnocond{q_{2}}}{1}) = \priority(\vertex{\setnocond{q_{0}}}{0}) + \size{J_{1}} = 1 + 1 = 2$ for the nondeterministic vertices.
    \item[Level $2$:]
        for this level, we have the nondeterministic vertex $\vertex{\setnocond{q_{2}}}{2} \in \nvertices$ since $\setnocond{q_{2}} = \ntrans(S_{1}, \wordletter{w}{2}) = \ntrans(\setnocond{q_{2}}, b)$;
        we also have the deterministic vertex $\vertex{q_{1}}{2}$.
        Since $q_{1} \in D_{2}$ as a consequence of $q_{1} \in \dtrans(P_{1}, \wordletter{w}{2})$, we have $S_{1} = S_{2} = \setnocond{q_{2}}$, $P_{1} = D_{2} = \enumerationnocond{q_{1}}$, and $J_{2} = \emptyset$.
        Note that $J_{2} = \emptyset$ because $q_{1} \in \jtrans(S_{1}, b)$ but $q_{1} \in \dtrans(q_{1}, b)$ too, so $q_{1} \in D_{2}$ which prevents $q_{1}$ to belong to $J_{2}$.
        
        As priority, we just have $\priority(\vertex{q_{1}}{2}) = \min \setcond{\priority(\vertex{p}{1})}{p \in P_{1} \land d = \dtrans(p, \wordletter{w}{1})} = \priority(\vertex{q_{1}}{1}) = 1$ for the deterministic vertices, and priority $\priority(\vertex{\setnocond{q_{2}}}{2}) = \priority(\vertex{\setnocond{q_{2}}}{1}) + \size{J_{2}} = 2 + 0 = 2$ for the nondeterministic vertices.
    \item[Level $\ell \geq 3$:]
        for level $3$, there is just the nondeterministic vertex $\vertex{\setnocond{q_{2}}}{3}$, since $S_{2} = \setnocond{q_{2}}$; 
        we get that all successors $S_{\ell}$ are $\setnocond{q_{2}}$ as well, so as nondeterministic vertices only $\vertex{\setnocond{q_{2}}}{\ell}$ will be at level $\ell$.
        About the deterministic vertices, we only get $\vertex{q_{1}}{3}$ from $q_{1} = \dtrans(q_{1}, b)$; it follows that $P_{\ell - 1} = D_{\ell} = \enumerationnocond{q_{1}}$ and $J_{\ell} = \emptyset$, so as deterministic vertices only $\vertex{q_{1}}{\ell}$ will be at level $\ell$.
        
        As priority, we have $\priority(\vertex{q_{1}}{3}) = \min \setcond{\priority(\vertex{p}{2})}{p \in P_{2} \land d = \dtrans(p, \wordletter{w}{2})} = \priority(\vertex{q_{1}}{2}) = 1$, and similarly for all other deterministic vertices $\vertex{q_{1}}{\ell}$.
        Also $\priority(\vertex{\setnocond{q_{2}}}{3}) = \priority(\vertex{\setnocond{q_{2}}}{2}) + \size{J_{3}} = 2 + 0 = 2$, and similarly for all other nondeterministic vertices $\vertex{\setnocond{q_{2}}}{\ell}$.
    \end{description}
    The resulting codeterministic DAG~$\lcodagAW{\aut[N]}{b^{\omega}}$ is shown in Figure~\ref{fig:runDAGlDBA}.
    From this codeterministic DAG, we can see how priorities are propagated through the branches: 
    the priority of a deterministic vertex is equal to the one of its predecessor; 
    the priority of a nondeterministic vertex depends on its predecessor as well as its deterministic siblings.
\end{markedexample}

\begin{markedexample}
    Consider now the FANBA~$\aut$ shown in Figure~\ref{fig:example2ANBA} and again the $\omega$-word $b^{\omega}$.
    By Example~\ref{ex:typesOfBAs}, we know that $\aut$ is also an LDBA, with the same partition $\nstates = \setnocond{q_{0}, q_{2}}$ and $\dstates = \setnocond{q_{1}, q_{3}}$ as $\aut[N]$.
    The codeterministic DAG~$\lcodagAW{\aut}{b^{\omega}}$ is similar to $\lcodagAW{\aut[N]}{b^{\omega}}$ constructed in Example~\ref{ex:runDAGlDBA} and depicted in Figure~\ref{fig:runDAGlDBA};
    the only difference is that the nondeterministic vertex $\vertex{\setnocond{q_{2}}}{1}$ has no successor, so all nondeterministic vertices $\vertex{\setnocond{q_{2}}}{\ell}$ with $\ell > 1$ are not present.
\end{markedexample}

Codeterministic DAGs for LDBAs enjoy the same properties about acceptance, the number of $\omega$-branches, and stable level as those for FANBAs stated in Lemmas~\ref{lem:FANBAacceptanceCodeterministicDAG},~\ref{lem:FANBAfiniteOmegaBranchesInCodeterministicDAG}, and~\ref{lem:FANBAwordRejectedStableLevelInCodeterministicDAG}.
\begin{restatable}[Properties of Codeterministic DAGs for LDBAs]{lemma}{LDBAPropertiesOfCodeterministicDAG}
\label{lem:LDBAPropertiesOfCodeterministicDAG}
    Given an LDBA~$\aut$ and $w \in \infwords$, let $\lcodagAW{\aut}{w}$ be the codeterministic DAG for $\aut$ over $w$ as defined in Definition~\ref{def:LDBAcodeterministicDAGandPriorityFunction}.
    Then the following properties hold:
    \begin{enumerate}[\bf Property~1:]
    \item 
    \label{lem:LDBAPropertiesOfCodeterministicDAG:acceptance}
        $w$ is accepted by $\aut$ if and only if $\lcodagAW{\aut}{w}$ is accepting;
    \item
    \label{lem:LDBAPropertiesOfCodeterministicDAG:numberOmegaBranches}
        the number of (accepting) $\omega$-branches in $\lcodagAW{\aut}{w}$ is at most the number of states in $\aut$;
    \item 
    \label{lem:LDBAPropertiesOfCodeterministicDAG:stableLevel}
        there exists a stable level $\stableLevel \geq 1$ in $\lcodagAW{\aut}{w}$ such that all $\acc$-vertices after level $\stableLevel$ are finite if and only if $w \notin \lang{\aut}$.
    \end{enumerate}
\end{restatable}
We present the proof for this lemma in~\ref{app:LDBAPropertiesOfCodeterministicDAG}.
The idea underlying the proof of Property~\ref{lem:LDBAPropertiesOfCodeterministicDAG:acceptance} is to connect each run of $\aut$ over $w$ with its representative in $\lcodagAW{\aut}{w}$ and showing that the acceptance of one is equivalent to the acceptance of the other;
Property~\ref{lem:LDBAPropertiesOfCodeterministicDAG:numberOmegaBranches} is analogous to Lemma~\ref{lem:FANBAfiniteOmegaBranchesInCodeterministicDAG} and they share similar proofs;
lastly, the proof of Property~\ref{lem:LDBAPropertiesOfCodeterministicDAG:stableLevel} uses the other two properties to relate the visits to accepting states of a run of $\aut$ over $w$ with the level in the corresponding branch in $\lcodagAW{\aut}{w}$.

\subsection{Complementation Construction for LDBAs}
\label{ssec:LDBAcomplementationConstruction}

In this section, we first describe the complementation construction for a given LDBA, based on a classification of the states visited in a run into four sets $N$, $S$, $B$, and $C$;
then we formalize the algorithm in Definition~\ref{def:LDBAnsbcComplementation}.

We first provide some intuition underlying the construction. 
To encode the exact transition relation between the states of two consecutive levels in $\lcodagAW{\aut}{w}$, we would need to store the priority function $\priority$, which results in the complementation complexity $2^{\bigO(n \log n)}$, much higher than $\bigO(4^{n})$ obtained in~\cite{Blahoudek16}.
However, if we are able to make nondeterministic guesses, then we can avoid to store the exact priorities for the runs.
In fact, according to Lemma~\ref{lem:LDBAPropertiesOfCodeterministicDAG}, given a word $w \notin \lang{\aut}$, there exists a stable level $\stableLevel$ in the codeterministic DAG~$\lcodagAW{\aut}{w}$ such that each $\acc$-vertex at a level after $\stableLevel$ is finite. 
Recall that after reaching the stable level, runs on $w \notin \lang{\aut}$ entering the deterministic part $\dstates$ are either finite or safe to be used to accept $w$ in $\aut^{c}$.
Since all runs in $\dstates$ behave deterministically, the number of safe runs eventually stabilizes and all runs newly entering $\dstates$ must be finite.
Otherwise, if a run that can enter $\dstates$ is infinite (thus not merged and cut off later), we can guess a successive level to make it enter $\dstates$, that is, we postpone its entrance to the accepting phase to a later moment.
The number of infinite runs in $\dstates$ is at most $\size{\dstates}$ since $\aut$ behaves deterministically in $\dstates$.
Therefore, we only postpone the entrance of runs in the accepting phase for a finite number of times.
After the time point where the number of safe runs is fixed, we can easily put those runs into a safe set $S$.
For the runs keeping jumping from $\nstates$, we use a collector set $C$ as a buffer before they are inspected in the breakpoint $B$.
Note that we only need to keep the relative priorities between the runs in these sets.
That is, we give priority to $S$ (without visiting accepting states), then $B$, $C$, and $N$ come in order.
This means that, similarly to the slice-based algorithm for FANBAs, in order to construct $\aut^{c}$ we can also nondeterministically guess the stable level $\stableLevel$ and then use a breakpoint construction to verify that our guess was correct. 
This means that during the construction of the complementary NBA~$\aut^{c}$, we still use the initial and accepting phases.

Analogously to the \slc algorithm, the initial phase follows a pure subset construction on all states $\states$ to trace the reachable states of each level of the codeterministic DAG~$\lcodagAW{\aut}{w}$ over $w$.
From each of these macrostates we can choose whether to remain in the initial phase, or to jump to the accepting phase.
If we decide to jump, this means that we guess that the macrorun of $\aut^{c}$, which consists of multiple runs of $\aut$, has reached the stable level $\stableLevel$. 
Thus in the accepting phase, we adopt a breakpoint construction to verify that we guessed correctly, i.e., that all $\acc$-vertices after level $\stableLevel$ are finite.

Differently from the slice-based algorithm, in the accepting phase we use quadruplets $(N, S, B, C)$ as macrostates to organize the set of vertices on a level after $\stableLevel$ in the codeterministic DAG~$\lcodagAW{\aut}{w}$, where the four sets are used for the following purposes.
\begin{itemize}
\item 
    The set $N \subseteq \nstates$ just continues to follow the \emph{nondeterministic} states, similarly to the initial phase. 
    This allows us to manage jumps from the nondeterministic states that we have not left yet.
\item 
    The set $S \subseteq \dstates$ is used to keep track of the runs that will not visit accepting states anymore after the jump. 
    If a run remains in $S$ forever, then such a run of $\aut$ is for sure not accepting over $w$, so it is \emph{safe} to be used to accept the word $w$ in $\aut^{c}$.
    The set $S$ is initialized at the moment of the jump to the accepting phase and then it is updated by means of a subset construction where all accepting states are discharged from $S$;
    such discharged accepting states are collected in the other components $B$ and $C$ presented below for further analysis.
    This means that the runs that visit an accepting state while being tracked by $S$ are going to be truncated at the moment of the visit, since they are not safe to be used for accepting $w$ in $\aut^{c}$;
    the runs that survive are indeed safe, in particular when the jump happens after the stable level $\stableLevel$:
    in this case, all infinite runs are for sure safe since they will never visit accepting states anymore.
\item 
    The set $B \subseteq \dstates$, that follows a \emph{breakpoint} construction, is used to verify whether the guess we made to enter the accepting phase is correct; 
    moreover, the macrostates of $\aut^{c}$ having $B = \emptyset$ are accepting.
    
    The set $B$ is initialized, at the moment of the jump as well as every time it becomes empty, with the current states of the runs that need to be analyzed.
    Step by step, $B$ is purged of all runs that are safely tracked in $S$;
    if a run results to be not safe, it will be collected in the set $C$ below and its analysis postponed to the next reset of $B$.
    
    If a run of $\aut$ is accepting, it will eventually definitively leave $S$ to enter $B$ and stay in $B$ forever, thus making the run of $\aut^{c}$ not accepting.
    On the other hand, if an infinite run of $\aut$ is not accepting, it will eventually be trapped in $S$, so it will eventually stay out of $B$ forever.
    This means that if all infinite runs of $\aut$ over $w$ are not accepting, then $B$ becomes empty again and again, thus $\aut^{c}$ accepts $w$;
    if at least one run of $\aut$ over $w$ is accepting, then $B$ eventually tracks it forever, thus $\aut^{c}$ rejects $w$.
\item 
    The set $C \subseteq \dstates$ keeps all states that still need to be checked, like the ones just visited by a jump from $N$, the accepting ones discharged from $S$, or the successors of those that are currently to be checked, unless they are already managed by $S$ or $B$. 
    In practice, $C$ is a \emph{collector} for all states visited by runs we are not already analyzing.
    As soon as $B$ becomes empty, all these states in $C$ are transferred to $B$ to be analyzed and $C$ is reset, so to be ready to collect the new runs that need to be analyzed later. 
\end{itemize} 

The specialized complementation algorithm for LDBAs is formalized below.
\begin{definition}
\label{def:LDBAnsbcComplementation}
    Given an LDBA~$\aut = (\states, \initialStates, \trans, \acc)$, let $\nstates$ and $\dstates$ form a partition of $\states$ according to Definition~\ref{def:typesOfBAsbyTransitions}; 
    let $\ntrans$, $\jtrans$, and $\dtrans$ be the corresponding partition of $\trans$.
    We define the complement automaton $\aut^{c}= (\states^{c}, \initialStates^{c}, \trans^{c}, \acc^{c})$ as follows.
    \begin{itemize}
    \item 
        $\states^{c} \subseteq 2^{\states} \cup (2^{\nstates} \times 2^{\dstates} \times 2^{\dstates} \times 2^{\dstates})$ is the smallest set such that for $\initialStates^{c}$ and $\trans^{c}$ as defined below, we have $\initialStates^{c} \subseteq \states^{c}$ and $\trans^{c}(\states^{c}, a) \subseteq \states^{c}$ for each $a \in \alphabet$;
    \item 
        $\initialStates^{c} = \setnocond{\initialStates}$;
    \item 
        $\trans^{c} = \ntrans^{c} \cup \jtrans^{c} \cup \dtrans^{c}$ where
        \begin{itemize}
        \item 
            $\ntrans^{c} \colon 2^{\states} \times \alphabet \to 2^{2^{\states}}$ is such that for each $R \subseteq \states$ and $a \in \alphabet$, $\ntrans^{c}(R, a) = \setnocond{\trans(R, a)}$;
        \item 
            $\dtrans^{c} \colon (2^{\nstates} \times 2^{\dstates} \times 2^{\dstates} \times 2^{\dstates}) \times \alphabet \to (2^{\nstates} \times 2^{\dstates} \times 2^{\dstates} \times 2^{\dstates})$ is such that for each $N \subseteq \nstates$, $S, B, C \subseteq \dstates$, and $a \in \alphabet$, $\dtrans^{c}((N, S, B, C), a) = (N', S', B', C')$ where 
            \begin{itemize}
            \item 
                $N' = \ntrans(N, a)$,
            \item 
                $S' = \dtrans(S, a) \setminus \acc$,
            \item 
                if $B \neq \emptyset$, then 
                    $B' = \dtrans(B, a) \setminus S'$ and
                    $C' = ((\dtrans(C, a) \cup \jtrans(N, a) \cup (\dtrans(S, a) \cap \acc)) \setminus S') \setminus B'$;
            \item 

                if $B = \emptyset$, then 
                    $B' = (\dtrans(C, a) \cup \jtrans(N, a) \cup (\dtrans(S, a) \cap \acc)) \setminus S'$ and
                    $C' = \emptyset$;
            \end{itemize}
        \item 
            $\jtrans^{c} \colon 2^{\states} \times \alphabet \to 2^{2^{\nstates} \times 2^{\dstates} \times 2^{\dstates} \times 2^{\dstates}}$ is such that for each $R \subseteq \states$ and $a \in \alphabet$, $\jtrans^{c}(R, a) = \setnocond{\dtrans^{c}((R \cap \nstates, (R \cap \dstates) \setminus \acc, R \cap \acc, \emptyset), a)}$;
        \end{itemize}
    \item 
        $\acc^{c} = \setnocond{(N, S, \emptyset, C) \in \states^{c}}$.
    \end{itemize}
\end{definition}
Similarly to Definition~\ref{def:FANBAsliceBasedComplementation}, we might simplify the notation of $\ntrans^{c}$ and $\jtrans^{c}$ by dropping the curly brackets from the successor singleton sets.

\begin{figure}
    \centering
    \includegraphics{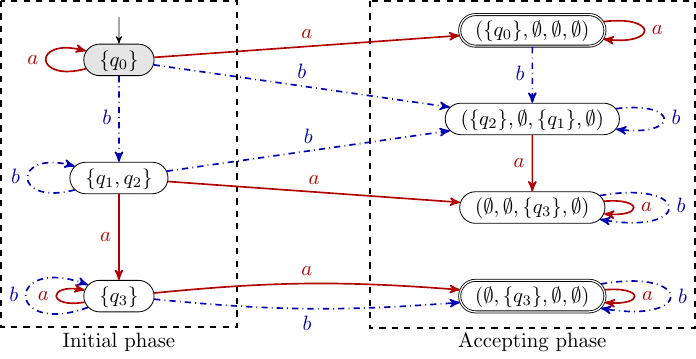}
    \caption{The complementary NBA~$\aut[N]^{c}$ of $\aut[N]$ depicted in Figure~\ref{fig:exampleBA} constructed by Definition~\ref{def:LDBAnsbcComplementation}.} 
    \label{fig:LDBAnsbcComplementation}
\end{figure}
\begin{markedexample}
\label{ex:LDBAcomplementation}
    As an example of the LDBA complementation given in Definition~\ref{def:LDBAnsbcComplementation}, consider again the LDBA~$\aut[N]$ depicted in Figure~\ref{fig:exampleBA};
    recall from Example~\ref{ex:typesOfBAs} that $\nstates = \setnocond{q_{0}, q_{2}}$ and $\dstates = \setnocond{q_{1}, q_{3}}$ form a partition of the states and consider the corresponding partition $\ntrans$, $\jtrans$, and $\dtrans$ of $\trans$.
    
    By Definition~\ref{def:LDBAnsbcComplementation}, $\aut[N]^{c}$ has $\setnocond{\setnocond{q_{0}}}$ as set of initial macrostates.
    From the macrostate $\setnocond{q_{0}} \in 2^{\states}$, $\ntrans$ is defined as:
    \begin{align*}
        \ntrans(\setnocond{q_{0}}, a) & = \setnocond{q_{0}} & 
        \ntrans(\setnocond{q_{0}}, b) & = \setnocond{q_{1}, q_{2}} \\ 
        \ntrans(\setnocond{q_{1}, q_{2}}, a) & = \setnocond{q_{3}} & 
        \ntrans(\setnocond{q_{1}, q_{2}}, b) & = \setnocond{q_{1}, q_{2}} \\
         \ntrans(\setnocond{q_{3}}, a) & = \setnocond{q_{3}} & \ntrans(\setnocond{q_{3}}, b) & = \setnocond{q_{3}}
    \end{align*}
    
    From these macrostates in the initial phase, for $c \in \alphabet$, $\jtrans$ jumps to the macrostates in the accepting phase by:
    \begin{align*}
        \jtrans(\setnocond{q_{0}}, c) & = \dtrans((\setnocond{q_{0}}, \emptyset, \emptyset, \emptyset), c), \\
        \jtrans(\setnocond{q_{1}, q_{2}}, c) & = \dtrans((\setnocond{q_{2}}, \emptyset, \setnocond{q_{1}}, \emptyset), c), \\
        \jtrans(\setnocond{q_{3}}, c) & = \dtrans((\emptyset, \setnocond{q_{3}}, \emptyset, \emptyset), c).
    \end{align*}
    
    Lastly, we move between macrostates in the accepting phase through $\dtrans$ as follows:
    \begin{align*}
        \dtrans((\setnocond{q_{0}}, \emptyset, \emptyset, \emptyset), a) & = (\setnocond{q_{0}}, \emptyset, \emptyset, \emptyset) & \dtrans((\setnocond{q_{0}}, \emptyset, \emptyset, \emptyset), b) & = (\setnocond{q_{2}}, \emptyset, \setnocond{q_{1}}, \emptyset) \\
        \dtrans((\setnocond{q_{2}}, \emptyset, \setnocond{q_{1}}, \emptyset), a) & = (\emptyset, \emptyset, \setnocond{q_{3}}, \emptyset) & \dtrans((\setnocond{q_{2}}, \emptyset, \setnocond{q_{1}}, \emptyset), b) & = (\setnocond{q_{2}}, \emptyset, \setnocond{q_{1}}, \emptyset) \\
        \dtrans((\emptyset, \setnocond{q_{3}}, \emptyset, \emptyset), a) & = (\emptyset, \setnocond{q_{3}}, \emptyset, \emptyset) & \dtrans((\emptyset, \setnocond{q_{3}}, \emptyset, \emptyset), b) & = (\emptyset, \setnocond{q_{3}}, \emptyset, \emptyset) \\
        \dtrans((\emptyset, \emptyset, \setnocond{q_{3}}, \emptyset), a) & = (\emptyset, \emptyset, \setnocond{q_{3}}, \emptyset) & \dtrans((\emptyset, \emptyset, \setnocond{q_{3}}, \emptyset), b) & = (\emptyset, \emptyset, \setnocond{q_{3}}, \emptyset)
    \end{align*}

    The resulting complementary NBA~$\aut[N]^{c}$ is shown in Figure~\ref{fig:LDBAnsbcComplementation}.
\end{markedexample}

\begin{restatable}[The Language and Size of $\aut^{c}$ for LDBA]{theorem}{LDBAlanguageSizeComplementNBA}
\label{thm:LDBAlanguageSizeComplementNBA}
    Given an LDBA~$\aut = (\states, \initialStates, \trans, \acc)$, let $\nstates$ and $\dstates$ form a partition of $\states$ such that $\acc \subseteq \dstates$; 
    let $\aut^{c}$ be the NBA constructed according to Definition~\ref{def:LDBAnsbcComplementation}.
    Then $\lang{\aut^{c}} = \infwords \setminus \lang{\aut}$ and $\aut^{c}$ has at most $2^{\size{\states}} + 2^{\size{\nstates}} \times 3^{\size{\acc}} \times 4^{\size{\dstates \setminus \acc}} \in \bigO(4^{\size{{\states}}})$ states. 
\end{restatable}
Before presenting the intuition of the proof for this theorem, we want to remark that there are other specialized complementation algorithms for LDBAs~\cite{Blahoudek16,DBLP:conf/pldi/ChenHLLTTZ18};
these algorithms differ from ours in that their constructed complementary NBAs are nondeterministic but \emph{not} limit deterministic, as our construction does.
In fact, the algorithms in~\cite{Blahoudek16,DBLP:conf/pldi/ChenHLLTTZ18} can be seen as optimized rank-based algorithms tailored for LDBAs, while ours is an optimized slice-based algorithm specialized for LDBAs.
Since the proof is rather long and involved, we provide it in~\ref{app:LDBAlanguageSizeComplementNBA};
here we give the main ideas it is based on.

The intuition behind the proof for $w \in \lang{\aut} \implies w \notin \lang{\aut^{c}}$ is as follows.
Given $w \in \lang{\aut}$, the macrorun of $\aut^{c}$ over $w$ either remains in the initial phase visiting only macrostates in $2^{\states}$, and thus is trivially non-accepting, or eventually jumps to the accepting phase where it visits only macrostates of the form $(N, S, B, C)$.
Since $w \in \lang{\aut}$, one run $\run$ of $\aut$ over $w$ is accepting; 
the states visited by $\run$ eventually leave $S$ and get trapped into the $B$ component, since $\run$ visits $\acc$ infinitely often, so the $B$ component becomes empty only finitely many times. 
Thus the macrorun of $\aut^{c}$ over $w$ visits only finitely many accepting macrostates, hence $w \notin \lang{\aut^{c}}$.

The intuition behind the proof for $w \notin \lang{\aut} \implies w \in \lang{\aut^{c}}$ is as follows.
Since $w \notin \lang{\aut}$, all runs of $\aut$ over $w$ are not accepting.
So the runs entering $\dstates$ will either eventually become safe or die out.
It is easy to see that the number of runs in $S$ or in $B$ (when nonempty) cannot increase.
Thus there will be some point after the stable level, where we have put all safe runs in $S$ since the other runs that merge with runs in $S$ will be cut off.
Therefore, all deterministic runs entering $C$ or $B$ have to be finite.
It follows that $B$ will become empty infinitely often, thus $\aut^{c}$ has an accepting macrorun over $w$, hence $w \in \lang{\aut^{c}}$.

\section{Related Work}
\label{sec:relatedWork}

Run DAGs were introduced in~\cite{kupferman2001weak} and codeterministic run DAGs were first used in~\cite{DBLP:journals/corr/abs-1110-6183} without an explicit name.
In~\cite{DBLP:journals/corr/abs-1110-6183}, Fogarty and Vardi exploited codeterministic run DAGs to complement \emph{reverse deterministic} \buchi automata with \rkc and the Ramsey-based algorithm, while we consider \rkc and \slc in this work.
Recently, codeterministic run DAGs have been also applied~\cite{DBLP:conf/fm/LiTTVZ21} to the Ramsey-based complementation construction, that is, however, not the focus of this work.
In a reverse deterministic \buchi automaton, each state has only one predecessor for each letter, for which all run DAGs are already codeterministic, as explained in Section~\ref{ssec:FANBArankBasedComplementation}, while the run DAGs of FANBAs may not be codeterministic without our construction described in Section~\ref{sec:reduced-run-dag}.

Later, codeterministic run DAGs were constructed in~\cite{rabinovich18} under the name of \emph{narrow forest} for complementing FANBAs with the \slc construction only. 
Here we present it as codeterministic run DAGs to serve as a unified tool for explaining concepts in both \rkc and \slc constructions. 
A subtle difference between the construction of codeterministic run DAGs in~\cite{rabinovich18} and ours is the following. 
To construct a codeterministic run DAG over $w \in \infwords$, Rabinovich~\cite{rabinovich18} makes use of a transducer $\mathcal{T}$ that chooses one predecessor for each vertex at the current level, while our construction uses a transition function to make the sets of successors of each pair of vertices at the current level disjoint with each other, as given in Definition~\ref{def:edge-relation-e}.

More significantly, for complementation, we applied codeterministic run DAGs to \emph{both} \rkc~\cite{kupferman2001weak} and \slc as presented in~\cite{DBLP:conf/birthday/VardiW08}.
(The complementation construction proposed in~\cite{rabinovich18} is a variant of \slc as introduced in~\cite{kahler2008complementation}.)
The comparison of the construction in~\cite{rabinovich18} and our improvement over \slc is as follows.
First, the complementary NBA constructed in~\cite{rabinovich18} is a UNBA with at most $\bigO(5^{n})$ states;
this complementary NBA is the product automaton of the transducer $\mathcal{T}$, a \buchi automaton $\mathcal{C}$ for expressing unambiguity and a \buchi automaton $\aut[D]$ for accepting all possible ways to construct codeterministic DAGs over $w \notin \lang{\aut}$.
Our complementary NBA is not required to be a UNBA, since we are interested in complementation for containment checking.
Thus, the bound of $\bigO(5^{n})$ given in~\cite{rabinovich18} is exponentially larger than the bound of $\bigO(4^{n})$ we achieve in this work.
Indeed, the product automaton of $\aut[T]$ and $\aut[D]$ in~\cite{rabinovich18} does yield a complementary NBA with $\bigO(4^{n})$ states, but this construction and complexity were not explicitly given in~\cite{rabinovich18}.

Second, the construction in~\cite{rabinovich18} and our \slc-based construction are both based on reduced DAGs in which each vertex has at most one predecessor. 
These two constructions, however, are technically different and have different emphases. 
The one developed in~\cite{rabinovich18} aims at building a complementary NBA~$\aut^{c}$ that is unambiguous, based on building the product of three automata, in which each automaton fulfills part of the desired functionality for $\aut^{c}$. 
For instance, $\mathcal{C}$ takes care of unambiguity while $\mathcal{D}$ obtains the complementary language. 
Instead, our focus is on a complementation construction for containment checking. 
In contrast to building the product automata, the construction we present in Section~\ref{ssec:FANBAsliceBasedComplementation} takes a tuple of sets of states of $\aut$ as a macrostate in the complementary automaton $\aut^{c}$ and works directly on those tuples for computing successors on-the-fly, following the idea of the NCSB complementation for limit deterministic \buchi automata used in~\cite{Blahoudek16}, where various subsumption relations have been proposed for this representation of macrostates in the NCSB complementation;
they help to reduce the number of macrostates in $\aut^{c}$, even improving termination analysis of programs. 
Inspired by~\cite{DBLP:conf/pldi/ChenHLLTTZ18}, we can also define a subsumption relation between macrostates in $\aut^{c}$ (see Proposition~\ref{prop:sumbsumptionRelationBetweenMacrostatesSLCforFANBAs}) by our construction, which can be used to improve the containment checking between an NBA and an (FA)NBA and to reduce the number of macrostates in $\aut^{c}$.
More optimizations for the NCSB algorithm for complementing LDBAs can be found in~\cite{DBLP:conf/cav/HavlenaLS22}.

Specialized complementation algorithms for LDBAs have been proposed~\cite{DBLP:conf/tacas/EsparzaKRS17,DBLP:conf/cav/LiTFVZ22}, based on the determinization of LDBAs;
however, the complexity of these methods is $\bigO(n!)$, much higher than $\bigO(4^{n})$ obtained by this work.
It is shown in \cite{DBLP:conf/concur/HahnLST015,DBLP:conf/cav/SickertEJK16} that LDBAs can be used for quantitative model checking of probabilistic systems, which is, however, not the scope of this work.

We believe that recent optimizations for \rkc constructions, including the tight level rankings~\cite{DBLP:journals/ijfcs/FriedgutKV06,Schewe09} and super-tight runs~\cite{DBLP:conf/concur/HavlenaL21}, are compatible with our specialized \rkc algorithm for FANBAs since these optimizations are all in line with the level rankings given in Definitions~\ref{def:levelRankingFunction} and~\ref{def:coverage-level-rnk}.
Note that we only show how to organize the runs over a word so that their maximal rank becomes $2$, without altering those definitions.
With tight level rankings (thus also super-tight runs), we can omit macrostates with states assigned rank $2$ since the maximal rank of a tight level ranking must be odd;
so, the upper bound of the specialized \rkc construction for FANBAs can further be improved to $\bigO(4^{n})$ since nonaccepting states can either (1) be not reached at the moment, (2) have rank $1$, (3) have rank $0$ and be not present in the breakpoint or (4) have rank $0$ and be present in the breakpoint.
(The situation for accepting states are similar except that they cannot have rank $1$.)
This construction, like our algorithm given in Definition~\ref{def:FANBAsliceBasedComplementation}, needs to guess when to jump from the initial phase to the accepting phase where the level rankings are tight.
From a practical point of view, one may also improve our construction with the optimizations proposed in~\cite{DBLP:conf/tacas/HavlenaLS22} by handling with specific methods the different types of strongly connected components.
However, we believe such optimizations cannot further improve the theoretical upper bound.

In literature there are also many works concerning about the restriction of nondeterminism in the automata.
For instance, both~\cite{DBLP:conf/dlt/MichalewskiS16} and~\cite{DBLP:journals/dagstuhlreports/ColcombetQS21} are excellent works showing how unambiguity plays a role in automata that accept infinite trees;
the latter also considers automata over finite words.
Both, however, do not consider automata on infinite words, the focus of our paper.
When restricting the nondeterminism in automata, one can also identify new subclasses of automata, namely, \emph{Good-For-Game} (GFG) automata~\cite{DBLP:conf/csl/HenzingerP06} and \emph{Good-For-MDP} (GFM) automata~\cite{DBLP:conf/tacas/HahnPSS0W20}.
GFG and GFM automata are the classes of nondeterministic automata that can be used in the context of games and the verification of Markov decision processes, respectively.
GFG \buchi automata have the same expressiveness as deterministic \buchi automata~\cite{DBLP:journals/apal/KupfermanSV06,DBLP:conf/stacs/NiwinskiW98};
this means that one needs more general conditions, such as Rabin and parity, for GFG automata to recognize the full class of $\omega$-regular languages~\cite{DBLP:journals/apal/KupfermanSV06,DBLP:conf/stacs/NiwinskiW98}.
Here we only focus on the \buchi condition.
It is proved in~\cite{DBLP:conf/icalp/KuperbergS15} that one can construct an equivalent deterministic \buchi automaton with only a quadratic blowup of states for a given GFG \buchi automaton.
(The complexity of the construction itself is shown to be in NP.)
It follows that given a GFG \buchi automaton, we can obtain a complementary \buchi automaton with only quadratic blowup of states since deterministic \buchi automata can be complemented in linear time.
The state complexity of the complementary automata is much lower than that we obtained for FANBAs.
Nonetheless, FANBAs can represent the whole class of $\omega$-regular languages, which is beyond the expressive capacity of GFG and deterministic \buchi automata. 
In~\cite{DBLP:conf/tacas/HahnPSS0W20}, the authors showed how to construct from a given \buchi automaton a GFM \buchi automaton with a branching degree of $2$ that may neither be limit deterministic nor GFG.
Based on personal communications with the authors of~\cite{DBLP:journals/corr/abs-2202-07629}, we are convinced that FANBAs and GFM \buchi automata are incomparable, that is, we can find \buchi automata in each part of the symmetric difference between the classes of FANBAs and of GFM \buchi automata.
Therefore, the specialized complementation algorithms for GFG \buchi automata, LDBAs, or FANBAs may not always be suitable to GFM \buchi automata.
To the best of our knowledge, we are not aware of any specialized complementation algorithm for GFM \buchi automata; 
we believe this is an interesting future work.

\section{Conclusion and Future Work}
\label{sec:conclusion}

This work exploits codeterministic run DAGs over infinite words as a unified tool to optimize both \rkc and \slc constructions.
Consequently, we have improved the complexity of the classical \rkc and \slc constructions for FANBAs, respectively, to ${\bigO(6^n)}$ from $2^{\bigO(n \log n)}$ and to $\bigO(4^{n})$ from $\bigO((3n)^{n})$, based on codeterministic DAGs.
As a further contribution, we view the \slc algorithm explicitly as the construction of codeterministic DAGs and a specialized complementation algorithm for FANBAs. 
We then provide a subsumption relation between states in the complementary NBAs of FANBAs in hope of improving the containment checking between an NBA and an (FA)NBA.
Our work proposes the construction of codeterministic DAGs as a way to obtain finite ambiguity, the key ingredient of \buchi complementation.
As an application, we apply codeterministic DAGs to the complementation of LDBAs, demonstrating the generality of the power of finite ambiguity. 

As future work, we plan to study whether $\bigO(4^{n})$ is also the lower bound for the complementation of FANBAs.
We also plan to evaluate empirically the different complementation algorithms to check how much their theoretical complexity is reflected in practice; 
the implementations however depend also on software engineering aspects and heuristics (like, automata representation, data structures, order of state exploration) which can affect their actual performance.
One may improve the practical performance of existing complementation constructions by first applying minimization techniques~\cite{DBLP:journals/lmcs/ClementeM19,DBLP:conf/popl/MayrC13} and then complementing the minimized automata.
We plan to study how our specialized construction for FANBAs compares to this approach in practice.
Also an empirical evaluation on how the subsumption relation between macrostates proposed in Proposition~\ref{prop:sumbsumptionRelationBetweenMacrostatesSLCforFANBAs} will benefit the containment checking problem is worthy of exploring.
Another line of future work is to study determinization constructions for FANBAs.
Finally, it is possible to use our work to improve the program-termination checking framework proposed in~\cite{DBLP:conf/cav/HeizmannHP14} if one generalizes a terminating path to an FANBA.

\subsubsection*{Acknowledgment}
We thank the anonymous reviewers for their useful remarks that
helped us improve the quality of the paper and Qiyi Tang for sharing insights about GFG automata.
Work supported in part by 
the National Natural Science Foundation of China (grants no.\@ 62102407 and 61836005);
the Strategic Priority Research Program of the Chinese Academy of Sciences (grant no.\@ XDA0320000);
the CAS Project for Young Scientists in Basic Research (grant no.\@ YSBR-040);
the Engineering and Physical Sciences Research Council (grant no.\@ EP/X021513/1);
the Guangdong Science and Technology Department (Grant No. 2018B010107004);
NSF grants IIS-1527668, CCF-1704883, and IIS-1830549; 
and 
an award from the Maryland Procurement Office.

\noindent
\protect\includegraphics[height=8pt]{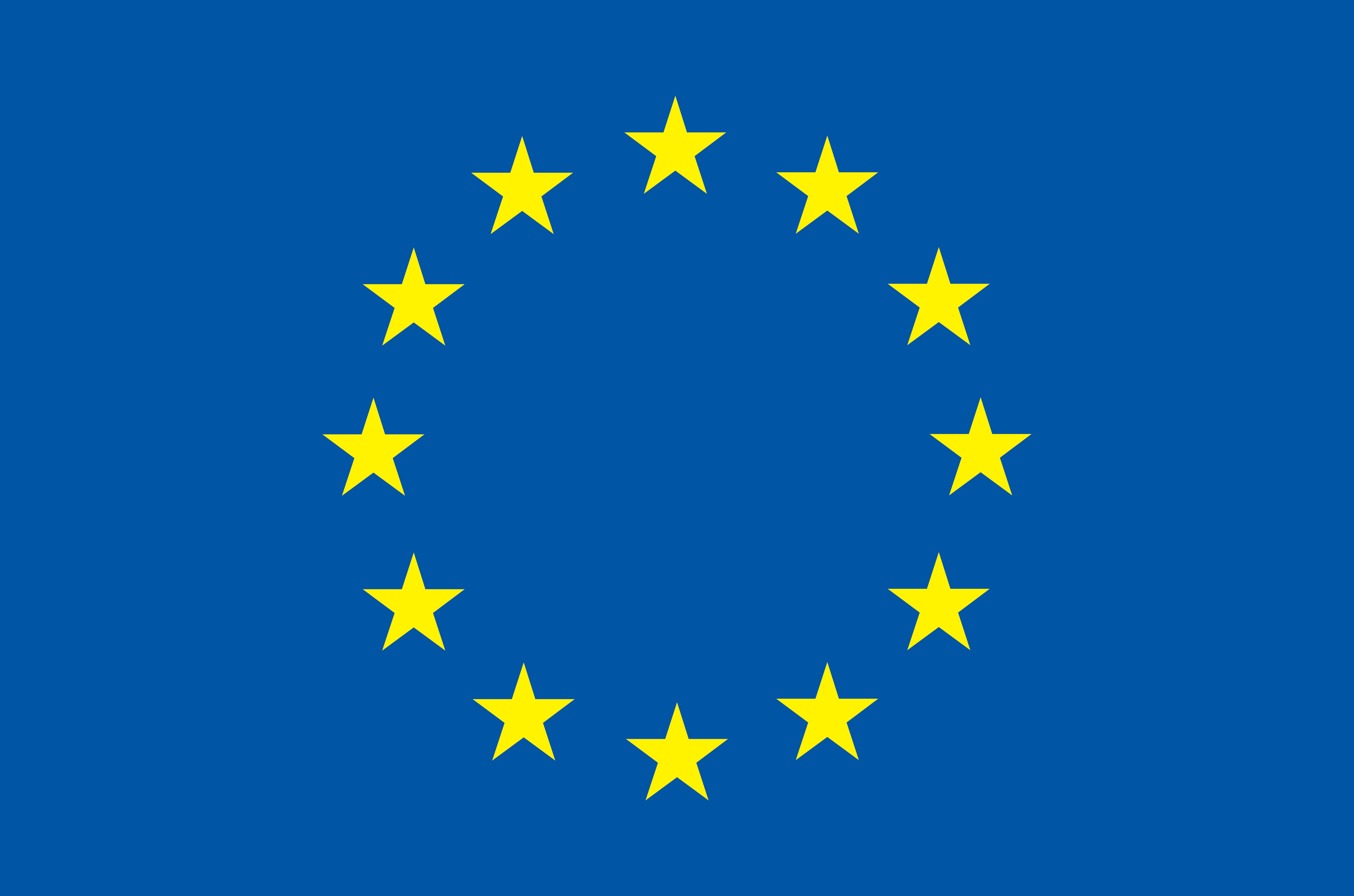} This work is part of the European Union’s Horizon 2020 research and innovation programme under the Marie Sk\l{}odowska-Curie grant no.\@ 101008233.

\bibliographystyle{elsarticle-num}
\bibliography{generic}

\newpage
\appendix

\section{Proofs of Several Results Given in the Main Part}
\label{app:otherProofs}

\separatingLevelDAG*
\begin{proof}
    Since $\aut$ is an FANBA, there are only finitely many accepting $\omega$-branches in $\dagAW{\aut}{w}$.
    Therefore, an accepting $\omega$-branch in $\dagAW{\aut}{w}$ only merges with other (accepting) $\omega$-branches for finitely many times. 
    It follows that given an accepting $\omega$-branch $\hat{\run}$ in $\dagAW{\aut}{w}$, there must exist a separating level $h \geq 1$ such that each vertex $\wordletter{\hat{\run}}{i}$ with $i \geq h$ has exactly one predecessor. 
    Otherwise, there will be infinitely many accepting branches, contradicting the assumption that $\aut$ is an FANBA. 
    Assume that there are $m < \infty$ accepting $\omega$-branches in $\dagAW{\aut}{w}$. 
    Then we can set the separating level $\separatingLevel$ of $\dagAW{\aut}{w}$ to $\max\setcond{h_{i}}{1 \leq i \leq m}$ where $h_{i}$ is the separating level index of $i$-th accepting $\omega$-branch. 
\end{proof}

\FANBAacceptanceCodeterministicDAG*
\begin{proof}
    Instead of $\reduceddagAW{\aut}{w}$, consider the DAG~$\dagAW{\aut}{w}$: 
    the proof is trivial when $\dagAW{\aut}{w}$ is non-accepting, since by definition there are only finitely many accepting vertices in $\dagAW{\aut}{w}$ occurring in each branch; 
    after reducing the edges, there still can only be finitely many accepting vertices occurring in each branch, so $\reduceddagAW{\aut}{w}$ is also non-accepting.
    
    Assume now that $\dagAW{\aut}{w}$ is accepting.
    Let $\hat{\run}$ be an accepting $\omega$-branch and $\separatingLevel \in \naturals$ be the separating level given in Lemma~\ref{lem:separatingLevelDAG}.
    Since $\separatingLevel$ is finite and there are infinitely many accepting vertices in $\hat{\run}$, the $\omega$-branch from $\wordletter{\hat{\run}}{\separatingLevel + 1}$ must be accepting.
    Moreover, $\wordletter{\hat{\run}}{\separatingLevel + 1}$ is reachable from some initial vertex $\vertex{q}{0}$ with $q \in \initialStates$.
    Then there must exist an accepting $\omega$-branch in $\reduceddagAW{\aut}{w}$: 
    by reducing the edges, $\wordletter{\hat{\run}}{\separatingLevel + 1}$ still remains reachable from some (possibly different) initial vertex $\vertex{q'}{0}$ with $q' \in \initialStates$;
    the $\omega$-branch from $\wordletter{\hat{\run}}{\separatingLevel + 1}$ remains unchanged since by definition of separating level, each vertex in such an $\omega$-branch has only one predecessor.
    This allows us to conclude that $w$ is accepted by $\aut$ if and only if $\reduceddagAW{\aut}{w}$ is accepting.
\end{proof}

\FANBAindependentReducedTrans*
\begin{proof}
    By the lemma assumptions, we know that the two DAGs $\reduceddagAW{\aut}{w_{1}}$ and $\reduceddagAW{\aut}{w_{2}}$ are based on the same NBA~$\aut$. 
    We also know that $\reduceddagAW{\aut}{w_{1}}$ and $\reduceddagAW{\aut}{w_{2}}$ share the same set of states $S$ at level $\ell_{1}$ and at level $\ell_{2}$, respectively.
    Similarly, we know that $\wordletter{w_{1}}{\ell_{1}} = \wordletter{w_{2}}{\ell_{2}} = b$.
    By Definition~\ref{def:edge-relation-e}, it follows that the set of successors $S'$ is the same for both DAGs, and so is $S_{\mathit{min}}$.
    It is now immediate to see that for each set of states $S'' \subseteq S$, we have $\cotrans_{w_{1}, \ell_{1}}(S'', \wordletter{w_{1}}{\ell_{1}}) = \trans(S'' \cap S_{\mathit{min}}, \wordletter{w_{1}}{\ell_{1}}) = \trans(S'' \cap S_{\mathit{min}}, b) = \trans(S'' \cap S_{\mathit{min}}, \wordletter{w_{2}}{\ell_{2}}) = \cotrans_{w_{2}, \ell_{2}}(S'', \wordletter{w_{2}}{\ell_{2}})$, that is, $\cotrans_{w_{1}, \ell_{1}} = \cotrans_{w_{2}, \ell_{2}}$, as required.
\end{proof}

\FANBAfiniteOmegaBranchesInCodeterministicDAG*
\begin{proof}
    Let $m_{i}$ with $i \geq 0$ be the number of vertices at level $i$ which occur in some $\omega$-branch.
    (For instance, for the DAG~$\dagAW{\aut}{b^{\omega}}$ shown in Figure~\ref{fig:example2ANBA}, once reduced to $\reduceddagAW{\aut}{b^{\omega}}$ by removing the dashed edges, we have $m_{i} = 1$ for each $i \in \naturals$:
    for all levels in $\naturals \setminus \setnocond{1}$ this is obvious; 
    for level $1$, we have the two vertices $\vertex{q_{1}}{1}$ and $\vertex{q_{2}}{1}$ but only $\vertex{q_{1}}{1}$ occurs in some $\omega$-branch.)
    Since each vertex in $\reduceddagAW{\aut}{w}$ has only one predecessor but it can have several successors, we have that $m_{0} \leq m_{1} \leq m_{2} \leq \cdots$, i.e., the number of vertices in $\omega$-branches on each level does not decrease over the levels.
    If it would decrease, then this means that there are at least two $\omega$-branches $\branch_{1}$ and $\branch_{2}$ such that $\wordletter{\branch_{1}}{\ell + 1} = \wordletter{\branch_{2}}{\ell + 1}$, that is, they share the vertex $\vertex{q}{\ell + 1}$ for some $q \in \states$ and $\ell \in \naturals$, but $\wordletter{\branch_{1}}{\ell} \neq \wordletter{\branch_{2}}{\ell}$. 
    This means that $\vertex{q}{\ell + 1}$ has two different predecessors, namely $\wordletter{\branch_{1}}{\ell}$ and  $\wordletter{\branch_{2}}{\ell}$, but this contradicts the fact that $\reduceddagAW{\aut}{w}$ is codeterministic.
    Moreover, since there are at most $\size{\states}$ states on each level, the number $b_{\ell}$ of $\omega$-branches at each level $\ell$ cannot exceed $\size{\states}$: 
    in order to cross this threshold, there must be a level $k$ such that $b_{k} \leq \size{\states}$ and $b_{k + 1} > \size{\states}$.
    By the Pigeonhole principle, this implies that two different branches must share a common vertex $v$ at level $k + 1$, since there are $b_{k + 1} > \size{\states}$ $\omega$-branches and at most $\size{\states}$ vertices.
    This means that $v$ has two predecessors, which contradicts the fact that $\reduceddagAW{\aut}{w}$ is codeterministic.
\end{proof}

\FANBAwordRejectedStableLevelInCodeterministicDAG*
\begin{proof}
    We prove the two directions independently.
    Assume that $w \notin \lang{\aut}$.
    By Lemma~\ref{lem:FANBAfiniteOmegaBranchesInCodeterministicDAG}, let $m \leq \size{\states}$ be the number of $\omega$-branches in $\reduceddagAW{\aut}{w}$.
    Since $w \notin \lang{\aut}$ by hypothesis, each $\omega$-branch in $\reduceddagAW{\aut}{w}$ is not accepting.
    Therefore, for the $i$-th $\omega$-branch $\hat{\run}_{i}$, there is a vertex $\vertex{q}{\stableLevel_{i}}$ such that every vertex of $\hat{\run}_{i}$ reachable from $\vertex{q}{\stableLevel_{i}}$ is not an $\acc$-vertex.  
    It follows that we can set $\stableLevel = \max\setcond{\stableLevel_{i}}{i \in \irange{m}}$ and thus all the $\acc$-vertices on a level $l \geq \stableLevel$ are finite and not on $\omega$-branches.

    For the other direction, we prove the contrapositive of the claim, that is, if $w \in \lang{\aut}$, then there does not exists a stable level $\stableLevel > 0$ in $\reduceddagAW{\aut}{w}$, so assume that $w \in \lang{\aut}$.
    Since $\aut$ is an FANBA, Lemma~\ref{lem:FANBAacceptanceCodeterministicDAG} implies that there exists an accepting $\omega$-branch $\branch$ in $\reduceddagAW{\aut}{w}$.
    By definition, this implies that there are infinitely many $\acc$-vertices in $\branch$; 
    by the definition of finite $\acc$-vertex, it follows that there does not exist a stable level $\stableLevel$ in $\reduceddagAW{\aut}{w}$ such that each $\acc$-vertex after $\stableLevel$ is finite, as required.
\end{proof}

\maximumRankOfCodeterministicDAGs*
\begin{proof}
    Given $w \notin \lang{\aut}$, our goal is to prove that starting from $\ireduceddagAW{0}{\aut}{w} = \reduceddagAW{\aut}{w}$, we get an empty $\ireduceddagAW{3}{\aut}{w}$.
    By Lemma~\ref{lem:FANBAwordRejectedStableLevelInCodeterministicDAG}, there exists a stable level $\stableLevel > 1$ such that on each level $l \geq \stableLevel$, the $\acc$-vertices are finite.
    Therefore, $\ireduceddagAW{1}{\aut}{w}$ contains only non-$\acc$-vertices after level $\stableLevel$.
    It follows that $\ireduceddagAW{2}{\aut}{w}$ removes all vertices after level $\stableLevel$.
    Thus if $\ireduceddagAW{2}{\aut}{w}$ is not already empty, it contains only finite vertices, which are going to be removed in $\ireduceddagAW{3}{\aut}{w}$.
    We then conclude that $\ireduceddagAW{3}{\aut}{w}$ is empty, as required.
    
    Suppose now that $\ireduceddagAW{3}{\aut}{w}$ is empty: 
    by construction, it means that after removing the original finite vertices, the resulting $\acc$-free vertices, and the newly become finite vertices, there are no vertices left.
    This implies that in $\ireduceddagAW{0}{\aut}{w} = \reduceddagAW{\aut}{w}$ there was no accepting $\omega$-branch, otherwise the vertices on such an $\omega$-branch would not have been removed by the pruning.
    Since $\ireduceddagAW{0}{\aut}{w} = \reduceddagAW{\aut}{w}$ is not accepting, by Lemma~\ref{lem:FANBAacceptanceCodeterministicDAG} we have that $w \notin \lang{\aut}$, as required.
\end{proof}

\identificationOfNonacceptingRunDAGsForFANBAs*
\begin{proof}
    Assume that $w$ is not accepted by $\aut$.
    If $\aut$ rejects the word $w$, all $\omega$-branches have suffixes that are $\acc$-free.
    By Lemma~\ref{lem:FANBAacceptanceCodeterministicDAG} and Definition~\ref{def:rankingFunctionForreducedDAGrejectedWord}, the $\omega$-branches of $\reduceddagAW{\aut}{w}$ eventually get trapped in rank $1$, which is an odd rank.
    
    Assume that all $\omega$-branches of $\reduceddagAW{\aut}{w}$ eventually get trapped in odd ranks.
    By Definition~\ref{def:rankingFunctionForreducedDAGrejectedWord}, it follows that $\ireduceddagAW{3}{\aut}{w}$ must be empty, otherwise the vertices in $\ireduceddagAW{3}{\aut}{w}$ are finite and should have been removed from $\ireduceddagAW{2}{\aut}{w}$.
    Therefore, by Lemma~\ref{lem:FANBAacceptanceCodeterministicDAG}, $\aut$ rejects $w$.
\end{proof}

\sumbsumptionRelationBetweenMacrostatesSLCforFANBAs*
\begin{proof}
    Let $w \in \infwords$ be an $\omega$-word and $\run$, $\run'$ be the macroruns of $\autNewInitial{\aut^{c}}{m}$ and $\autNewInitial{\aut^{c}}{m'}$ over $w$, respectively.
    In particular, we have that $\run = (N_{0} = N, C_{0} = C, B_{0} = B) (N_{1}, C_{1}, B_{1}) \cdots (N_{k}, C_{k}, B_{k}) \cdots$ and, similarly, that $\run' = (N'_{0} = N', C'_{0} = C', B'_{0} = B') (N'_{1}, C'_{1}, B'_{1}) \cdots (N'_{k}, C'_{k}, B'_{k}) \cdots$.
    Recall that $\aut^{c}$ is limit deterministic (cf.\@ Remark~\ref{rem:slcGivesLDBAs}), with the macrostates $(N, C, B)$ belonging to the deterministic part of $\aut^{c}$;
    this means that the macroruns $\run$ and $\run'$ are unique.
    
    Assume that $w \in \lang{\autNewInitial{\aut^{c}}{m'}}$; 
    this implies that $\run'$ is accepting, that is, $\run'$ visits $\acc^{c}$ infinitely often.
    By Definition~\ref{def:FANBAsliceBasedComplementation}, this happens if there are infinitely many empty $B'$-sets in $\run'$.
    This means that the level $0$ in the codeterministic DAG~$\reduceddagAW{\autNewInitial{\aut}{N'}}{w}$ of $\autNewInitial{\aut}{N'}$ over $w$ is a stable level, i.e., each $\acc$-vertex in $\reduceddagAW{\autNewInitial{\aut}{N'}}{w}$ is finite.
    The motivation for this is that by Definition~\ref{def:FANBAsliceBasedComplementation}, each branch from an $\acc$-vertex in $\reduceddagAW{\autNewInitial{\aut}{N'}}{w}$ will eventually be put in a $B'$-set; 
    if one of such branches is not finite, then the $B'$-set will become empty only finitely many times, contradicting  the assumption that $w \in \lang{\autNewInitial{\aut^{c}}{m'}}$.
    By the construction of codeterministic DAGs given in Section~\ref{sec:reduced-run-dag}, the codeterministic DAG~$\reduceddagAW{\autNewInitial{\aut}{N}}{w}$ of $\autNewInitial{\aut}{N}$ over $w$ is identical to $\reduceddagAW{\aut{}{}v[N']}{w}$ since $N = N'$.
    Consequently, level $0$ is also a stable level in $\reduceddagAW{\autNewInitial{\aut}{N}}{w}$, that is, each $\acc$-vertex in $\reduceddagAW{\autNewInitial{\aut}{N}}{w}$ is also finite. 
    Since the $B'$-sets in $\run'$ become empty and are reset to $C'$ infinitely many times, we have that all branches from $C'$ are finite; 
    this implies that also all branches from $B \subseteq C$ are finite, given that $C \subseteq C'$ by hypothesis.
    From this it follows that there exists a least $j \in \naturals$ such that $B_{j} = \emptyset$.
    Since all branches in the $C$-set (including new branches coming from the $N$-set) are finite, there are infinitely many $k \geq j$ such that $B_{k} = \emptyset$ in $\run$.
    This means that $\run$ is also accepting, thus $w \in \lang{\autNewInitial{\aut^{c}}{m}}$; 
    this concludes the proof that $\lang{\autNewInitial{\aut^{c}}{m'}} \subseteq \lang{\autNewInitial{\aut^{c}}{m}} $.
\end{proof}

\section{Proof of Theorem~\ref{thm:FANBAlanguageSizeRankBasedComplement}}
\label{app:FANBAlanguageSizeRankBasedComplement}

\FANBAlanguageSizeRankBasedComplement*

We first prove the result about the language of $\aut^{c}$.
Consider the inclusion $\lang{\aut^{c}} \subseteq \infwords \setminus \lang{\aut}$ and let $w \in \lang{\aut}$, that is, $w \notin \infwords \setminus \lang{\aut}$;
we are going to prove that $w \notin \lang{\aut^{c}}$.
By Lemma~\ref{lem:FANBAacceptanceCodeterministicDAG}, it follows that the run DAG~$\reduceddagAW{\aut}{w}$, based on the unique reduced transition function $\cotrans$ (cf.\@ Corollary~\ref{cor:FANBAuniqueReducedTransitionFunctionDAG}), is codeterministic and accepting.
Therefore, the maximum rank required for identifying whether $\reduceddagAW{\aut}{w}$ is accepting is $2$, according to Lemmas~\ref{lem:maximumRankOfCodeterministicDAGs} and~\ref{lem:identificationOfNonacceptingRunDAGsForFANBAs}.
By Definition~\ref{def:coverage-level-rnk}, the coverage relation between two consecutive level ranking functions $\levelRanking$ and $\levelRanking'$ in Definition~\ref{def:FANBArankBasedComplementation} induces a valid ranking function for $\reduceddagAW{\aut}{w}$, where $\acc$-states only get even ranks.
Since $\reduceddagAW{\aut}{w}$ is accepting, there is an $\omega$-run $\run$ of $\aut$ over $w$ that has infinitely many states with even ranks.
In fact, for every valid ranking function guessed for $\reduceddagAW{\aut}{w}$, this run will eventually get trapped in even ranks: 
if the rank of a state in $\run$ would be changed to an odd rank, then the rank will decrease at least by $1$ to become an even rank once $\run$ visits the next accepting state, according to Definition~\ref{def:DAGlevelRanking}.
Since the rank is bounded and there are infinitely many accepting states in $\run$, the rank of $\run$ will finally get trapped in an even rank.
Moreover, we have that the states of $\run$ will eventually be put in the set $O$ as soon as $O$ becomes empty.
If $O$ will never become empty, then $w$ is not accepted by $\aut^{c}$ by definition;
otherwise, the run $\run$ goes into the $O$ set and stays there forever by Definition~\ref{def:FANBArankBasedComplementation}, which also makes $O$ nonempty forever.
Thus $w$ is not accepted by $\aut^{c}$ when $w \in \lang{\aut}$.
It follows that $\lang{\aut^{c}} \subseteq \infwords \setminus \lang{\aut}$.

Consider now the other inclusion, namely $\infwords \setminus \lang{\aut} \subseteq \lang{\aut^{c}}$;
let $w \in \infwords \setminus \lang{\aut}$, i.e., $w \notin \lang{\aut}$.
According to Definition~\ref{def:rankingFunctionForreducedDAGrejectedWord} and Lemma~\ref{lem:identificationOfNonacceptingRunDAGsForFANBAs}, we can construct a unique classical ranking function for each rejecting codeterministic DAG~$\reduceddagAW{\aut}{w}$ of $\aut$ over $w$.
The maximum rank of such classical ranking functions is at most $2$ (cf.\@ Definition~\ref{def:rankingFunctionForreducedDAGrejectedWord}).
\rkc will nondeterministically guess rankings of $\reduceddagAW{\aut}{w}$, which is reflected in the definition of $\trans^{c}$ in Definition~\ref{def:FANBArankBasedComplementation} where given a state $(\levelRanking, O) \in \states^{c}$, all possible $(\levelRanking', O')$ with $\levelRanking' \coveredBy{\cotrans}{a} \levelRanking$ are successors of $(\levelRanking, O)$.
Thus there must be a guess of such a unique classical ranking function.
This means that all $\omega$-branches in $\reduceddagAW{\aut}{w}$ will eventually get trapped in odd ranks, i.e., all $\omega$-branches will get trapped in odd ranks and the descendants of states with even ranks will be finite after the stable level.
So the breakpoint set $O$ with states in even ranks will become empty infinitely often since the states in $O$ will all disappear eventually.
It follows that $\reduceddagAW{\aut}{w}$ must be accepting in $\aut^{c}$, i.e., $\infwords \setminus \lang{\aut} \subseteq \lang{\aut^{c}}$. 
Thus it holds that $\lang{\aut^{c}} = \infwords \setminus \lang{\aut}$, as required.

Regarding the result relative to the size of $\aut^{c}$, according to Definition~\ref{def:levelRankingFunction} and~\ref{def:FANBArankBasedComplementation} we have that some values cannot be used for all states in $\states$ and some state is excluded from belonging to $O$; 
this means that we need to analyze together $\levelRanking$ and $O$.
For each state $q \in \states$, we have the following cases:
\begin{enumerate}
\item
    $\levelRanking(q) = 0$ and $q \in O$;
\item
    $\levelRanking(q) = 0$ and $q \notin O$;
\item
    $\levelRanking(q) = 2$ and $q \in O$;
\item
    $\levelRanking(q) = 2$ and $q \notin O$;
\item
    $\levelRanking(q) = 1$; or
\item
    $\levelRanking(q) = \bot$.
\end{enumerate}
For the last two cases, by definition we have that $q \notin O$,
thus the number of states is at most $\bigO(6^{n})$.

\section{Proof of Theorem~\ref{thm:FANBAlanguageSizeSliceBasedComplement}}
\label{app:FANBAlanguageSizeSliceBasedComplement}

\FANBAlanguageSizeSliceBasedComplement*

Theorem~\ref{thm:FANBAlanguageSizeSliceBasedComplement} is a direct consequence of the proofs we present in the following subsections, namely \ref{app:FANBAlanguageSizeSliceBasedComplement:languageWinFANBAnotInComplement} that shows that whenever $w \in \lang{\aut}$, then we have $w \notin \lang{\aut^{c}}$, i.e., $w \in \infwords \setminus \lang{\aut^{c}}$;
\ref{app:FANBAlanguageSizeSliceBasedComplement:languageWnotinFANBAinComplement} for the dual property $w \notin \lang{\aut} \implies w \in \lang{\aut^{c}}$, that is, $w \notin \lang{\aut} \implies w \notin \infwords \setminus \lang{\aut^{c}}$; 
and \ref{app:FANBAlanguageSizeSliceBasedComplement:size} taking care of establishing the size of $\aut^{c}$.

\subsection{Proof of \texorpdfstring{$w \in \lang{\aut} \implies w \notin \lang{\aut^{c}}$}{w accepted by A implies w not accepted by the complement of A}}
\label{app:FANBAlanguageSizeSliceBasedComplement:languageWinFANBAnotInComplement}

We first show that $w \in \lang{\aut}$ implies $w \notin \lang{\aut^{c}}$. 
Given $w \in \lang{\aut}$, let $\run$ be one of the finitely many accepting runs of $\aut$ over $w$ and $\run'$ be a macrorun of $\aut^{c}$ over $w$.
We are going to show that $\run'$ is not accepting; 
thus, given the arbitrary choice of $\run'$, we have that each macrorun of $\aut^{c}$ over $w$ is not accepting, hence $w \notin \lang{\aut^{c}}$.

We have the following properties about the macrorun $\run'$:
\begin{enumerate}
\item
    assume that $\run'$ only visits macrostates of the form $S \in 2^{\states}$. 
    This implies that $\run'$ is not accepting in $\aut^{c}$ since it never visits any accepting macrostate $(N, C, \emptyset) \in \acc^{c}$, so it clearly does not visit them infinitely often;
\item
    assume that $\run'$ is a macrorun of the form 
    \[
        S_{0}, \cdots S_{k-1}, (N_{k}, C_{k}, B_{k}), (N_{k+1}, C_{k+1}, B_{k+1}) \cdots
    \] 
    By assumption, we know that $\run$ visits some accepting state, say $q_{f} \in \acc$, infinitely often.
    This means that at some point, say in state $(N_{j}, C_{j}, B_{j})$, we have that $q_{f} \in B_{j}$ or that $q_{f} \in C_{j}$.
    If $q_{f} \in B_{j}$, then for every $l \geq j$, we have that $B_{l} \neq \emptyset$ according to Lemma~\ref{lem:separatingLevelDAG}:
    we can let $j$ be larger than the separating level $\separatingLevel$ of $\reduceddagAW{\aut}{w}$, which, according to Lemma~\ref{lem:separatingLevelDAG}, indicates that the accepting run $\run$ will not join with other runs after state $(N_{j}, C_{j}, B_{j})$.
    It follows that the suffix of $\run$ will not be cut off and will stay in $B_{j} $ forever.
    So we have $B_{l} \neq \emptyset$ for all $l \geq j$.
    If $q_{f} \in C_{j}$, then either at some point, say $i > j$, $q_{f}$ will be moved to $B_{i}$ when $B_{i-1} = \emptyset$, or $q_{f} \in C_{i}$ for each $i \geq j$, which indicates that $B_{i} \neq \emptyset$ for $i \geq j$.
    This is the case because when the set $B$ of $(N, C, B)$ is empty, each successor macrostate $(N', C', B')$ has $B'$ equal to $C'$ (cf.\@ Definition~\ref{def:FANBAsliceBasedComplementation}).
\end{enumerate}
In both cases, we have that no macrorun of $\aut^{c}$ over $w$ is accepting, so $w$ is not accepted by $\aut^{c}$, as required.  

\subsection{Proof of \texorpdfstring{$w \notin \lang{\aut} \implies w \in \lang{\aut^{c}}$}{w not accepted by A implies w accepted by the complement of A}}
\label{app:FANBAlanguageSizeSliceBasedComplement:languageWnotinFANBAinComplement}

Assume now that $w \notin \lang{\aut}$; 
our goal is to show that there exists an accepting macrorun $\run'$ of $\aut^{c}$ over $w$.
The proof idea is to analyze the codeterministic DAG~$\reduceddagAW{\aut}{w}$ of $\aut$ over $w$.
According to Lemma~\ref{lem:FANBAwordRejectedStableLevelInCodeterministicDAG}, there exists a stable level $\stableLevel \geq 1$ such that every $\acc$-vertex on a level after $\stableLevel$ of $\reduceddagAW{\aut}{w}$ is finite.
This means that after reading $\stableLevel$ letters from $w$, $\aut^{c}$ can choose to jump to the accepting phase.
By the definition of $\jtrans^{c}$, the transition relation for the jumping from the initial phase to the accepting phase (cf.\@ Definition~\ref{def:FANBAsliceBasedComplementation}), we get that the successors of the current $\acc$-vertices are all collected in the component $B'$ of the successor macrostate $(N', C', B')$. 
Similarly, in $C'$ we collect $B'$ as well as the accepting states present in $N'$.
Since $\acc$-vertices are finite, all branches from $\acc$-vertices will eventually disappear;
this means that the corresponding states disappear from $C$ and $B$ as well.
Once this happens, the set $B$ becomes empty again; 
the set $C$ may still contain some state since from the jump, some accepting state may have been visited and collected in the $C$ component.
However, since $w \notin \lang{\aut}$, there is $K \in \naturals$ such that no $\acc$-vertex is reachable after level $K$, thus both components $C$ and $B$ will eventually become empty and stay empty forever. 
Therefore, the set $B$ on $\run'$ will become empty infinitely often, so $w$ is accepted by $\aut^{c}$.
This completes the proof for the claim about the language of $\aut^{c}$.

\subsection{Proof of the size of \texorpdfstring{$\aut^{c}$}{the complement of A}}
\label{app:FANBAlanguageSizeSliceBasedComplement:size}

Consider now the size of $\aut^{c}$:
by Definition~\ref{def:FANBAsliceBasedComplementation}, the number of possible macrostates of the form $S \in 2^{\states}$ is $2^{n}$.
For each macrostate $(N, C, B) \in \states^{c}$, we have that $B \subseteq C \subseteq N$ holds: 
recall that Definition~\ref{def:FANBAsliceBasedComplementation} requires $\states^{c}$ to be the smallest set such that $\trans^{c}(\states^{c}, a) \subseteq \states^{c}$ for each $a \in \alphabet$, so to prove that $B \subseteq C \subseteq N$ holds for each macrostate $(N, C, B) \in \states^{c}$, it suffices to show that $B \subseteq C \subseteq N$ holds for each macrostate $(N, C, B)$ in the image of $\trans^{c}$.
By Definition~\ref{def:FANBAsliceBasedComplementation} such macrostates are only those introduced by $\jtrans$ and their successors as generated by $\dtrans$.
Regarding the macrostates in the image of $\jtrans$, given $S \in \states^{c}$ and $b \in \alphabet$, we have that $\jtrans(S, b) = (N', C', B')$ where $B' = \cotrans(S \cap \acc, b) \subseteq \cotrans(S \cap \acc, b) \cup (N' \cap \acc) = C' \subseteq \cotrans(S, b) = N'$.
Regarding the macrostates in the image of $\dtrans$, given $(N, C, B) \in \states^{c}$ such that $B \subseteq C \subseteq N$ holds and $b \in \alphabet$, we have two cases.
If $B \neq \emptyset$, then $B' = \cotrans(B, b) \subseteq \cotrans(C, b) \cup (N' \cap \acc) = C' \subseteq \cotrans(N, b) = N'$; 
the first inclusion follows by $B \subseteq C$ while the second by $C \subseteq N$.
Similarly, if $B = \emptyset$, then $B' = C' = \cotrans(C, b) \cup (N' \cap \acc) \subseteq \cotrans(N, b) = N'$.

Given that $B \subseteq C \subseteq N$ holds for each $(N, C, B) \in \states^{c}$, for a state $q \in \states$, we have these four possibilities:
\begin{enumerate}
\item 
    $q \in B \subseteq C \subseteq N$;
\item 
    $q \in C \subseteq N$ but $q \notin B$;
\item 
    $q \in N$ but $q \notin C \supseteq B$;
\item
    $q \notin N \supseteq C \supseteq B$.
\end{enumerate}
Since for each state $q \in \states$ we have four choices, we have at most $4^{n}$ possible macrostates $(N, C, B)$.
Therefore $\aut^{c}$ has at most $2^{n} + 4^{n} \in \bigO(4^{n})$ states.

\section{Proof of Lemma~\ref{lem:LDBAPropertiesOfCodeterministicDAG}}
\label{app:LDBAPropertiesOfCodeterministicDAG}

\LDBAPropertiesOfCodeterministicDAG*
\noindent
We prove Properties~\ref{lem:LDBAPropertiesOfCodeterministicDAG:acceptance} and~\ref{lem:LDBAPropertiesOfCodeterministicDAG:numberOmegaBranches} independently in~\ref{app:LDBAPropertiesOfCodeterministicDAG:acceptance} and~\ref{app:LDBAPropertiesOfCodeterministicDAG:numberOmegaBranches}, respectively, and then use them in the proof of Property~\ref{lem:LDBAPropertiesOfCodeterministicDAG:stableLevel} in~\ref{app:LDBAPropertiesOfCodeterministicDAG:stableLevel}.

\subsection{Proof of Property~\ref{lem:LDBAPropertiesOfCodeterministicDAG:acceptance}}
\label{app:LDBAPropertiesOfCodeterministicDAG:acceptance}

This property requires us to prove that
$w$ is accepted by $\aut$ if and only if $\lcodagAW{\aut}{w}$ is accepting.
Suppose that $w \notin \lang{\aut}$;
we now show that $\lcodagAW{\aut}{w}$ is not accepting.
By construction, we have that each $\omega$-branch of $\lcodagAW{\aut}{w}$ represents (possibly several) runs of $\aut$: 
it is easy to see that e.g.\@ the $\omega$-branch $\vertex{N_{0}}{0} \cdots \vertex{N_{k}}{k} \vertex{d_{k + 1}}{k + 1} \vertex{d_{k + 2}}{k + 2} \cdots$ where $N_{0} = \initialStates \cap \nstates$, $N_{i + 1} = \ntrans(N_{i}, \wordletter{w}{i})$ for each $i < k$, $d_{k+1} \in \jtrans(N_{k}, \wordletter{w}{k})$ and $d_{j + 1} = \dtrans(d_{j}, \wordletter{w}{j})$ for each $j \geq k$ represents the run $\run = n_{0} n_{1} \cdots n_{k} d_{k + 1} d_{k + 2} \cdots$ with $n_{i} \in N_{i}$ for each $i \leq k$ and $d_{j} \in \dstates$ for each $j > k$. 
The assumption $w \notin \lang{\aut}$ implies that each run $\run$ of $\aut$ over $w$ is not accepting, that is, there are only finitely many accepting states in $\run$;
this means that also each $\omega$-branch of $\lcodagAW{\aut}{w}$ is not accepting.
If this would not be the case, then at least one of the runs represented by such an accepting $\omega$-branch would be accepting, so $w \in \lang{\aut}$ which contradicts the assumption $w \notin \lang{\aut}$.
Since each $\omega$-branch of $\lcodagAW{\aut}{w}$ is not accepting, it follows that $\lcodagAW{\aut}{w}$ is not accepting, as required.

Suppose now that $w \in \lang{\aut}$;
among all accepting runs of $\aut$ over $w$ there is one, say $\run_{m}$, that is minimal in the following sense:
for each accepting run $\run'$ of $\aut$ over $w$ we have that 
\begin{enumerate}
\item
    $\ell_{\run_{m}} \leq \ell_{\run'}$ where $\ell_{\run} = \min\setcond{i \in \naturals}{\wordletter{\run}{i} \in \dstates}$ and
\item
    $\wordletter{\run_{m}}{\ell_{\run_{m}}} \stateOrder \wordletter{\run'}{\ell_{\run'}}$.
\end{enumerate}
Intuitively, $\ell_{\run}$ represents at what step $\run$ jumped to the deterministic part of $\aut$;
note that we can have $\ell_{\run} = 0$, which means that $\wordletter{\run}{0} \in \initialStates \cap \dstates$.
The first condition ensures that $\run_{m}$ is one of the (possibly several) accepting runs that jump to the deterministic states as soon as possible; 
the second condition ensures that when $\run_{m}$ jumps to the deterministic states, it lands on the smallest possible state that is on an accepting run.
Note that there may be several accepting runs $\run'_{m}$ that are minimal: 
the only differences between each of them and $\run_{m}$ are in their initial part involving nondeterministic states:
since $\run_{m}$ and $\run'_{m}$ are both minimal, we have that $\ell_{\run_{m}} = \ell_{\run'_{m}}$ and $\wordletter{\run_{m}}{\ell_{\run_{m}}} \stateOrder \wordletter{\run'_{m}}{\ell_{\run'_{m}}}$.
Moreover, we have that for each $0 \leq i < \ell_{\run_{m}}$, $\wordletter{\run_{m}}{i} \in \nstates$ as well as $\wordletter{\run'_{m}}{i} \in \nstates$.
Since $\ell_{\run_{m}}$ is finite and $\nstates$ is also finite, there are only finitely many such runs $\run'_{m}$ and all of them merge at latest on $\wordletter{\run_{m}}{\ell_{\run_{m}}}$.
After state $\wordletter{\run_{m}}{\ell_{\run_{m}}}$, $\run_{m}$ is deterministic;
it is still possible that other runs merge with it, but the minimality of $\run_{m}$ ensures that either such runs jumped to $\dstates$ after $\ell_{\run_{m}}$, or that they jumped at $\ell_{\run_{m}}$ but landed on states that ``come later'' according to $\stateOrder$.

If $\ell_{\run_{m}} = 0$, then we have that $\wordletter{\run_{m}}{0} \in \initialStates \cap \dstates$ and that $\run_{m}$ is unique.
This implies that in $\lcodagAW{\aut}{w}$ we have the $\omega$-branch $\hat{\run}_{m} = \vertex{\wordletter{\run_{m}}{0}}{0} \vertex{\wordletter{\run_{m}}{1}}{1} \cdots$. 
This is the case because, by Definition~\ref{def:LDBAcodeterministicDAGandPriorityFunction}, we have that $\vertex{\wordletter{\run_{m}}{0}}{0} \in \dvertices$ and, for each $i \in \naturals$, that $(\vertex{\wordletter{\run_{m}}{i}}{i}, \vertex{\wordletter{\run_{m}}{i + 1}}{i + 1}) \in \edges$ and $\priority(\vertex{\wordletter{\run_{m}}{j + 1}}{j + 1}) = \priority(\vertex{\wordletter{\run_{m}}{j}}{j})$, which can be easily shown by induction over $i$ with the help of the two conditions given above about the minimality of $\run_{m}$:
for the case base, suppose for the sake of contradiction that $(\vertex{\wordletter{\run_{m}}{0}}{0}, \vertex{\wordletter{\run_{m}}{1}}{1}) \notin \edges$.
By definition of $\lcodagAW{\aut}{w}$, this can only be the case if there is a vertex $\vertex{q}{0} \in \vertices$ such that $\wordletter{\run}{1} \in \trans(q, \wordletter{w}{0})$ and $\priority(\vertex{q}{0}) < \priority(\vertex{\wordletter{\run_{m}}{0}}{0})$.
Since $\wordletter{\run_{m}}{0} \in \dstates$, this implies that also $q \in \dstates$, because the nondeterministic vertex $\vertex{\initialStates \cap \nstates}{0}$, provided that we have it, has by definition a priority strictly larger than $\priority(\vertex{\wordletter{\run_{m}}{0}}{0})$; 
moreover, $q \stateOrder \wordletter{\run_{m}}{0}$, so the run $q \wordletter{\run_{m}}{1} \wordletter{\run_{m}}{2} \cdots$ is an accepting run of $\aut$ over $w$ that is smaller than $\run_{m}$, against the hypothesis that it is minimal.
Thus we have $(\vertex{\wordletter{\run_{m}}{0}}{0}, \vertex{\wordletter{\run_{m}}{1}}{1}) \in \edges$ and, by definition of $\edges$, $\priority(\vertex{\wordletter{\run_{m}}{1}}{1}) = \priority(\vertex{\wordletter{\run_{m}}{0}}{0})$, as required.

For the inductive step, suppose that for each $0 \leq j < i$, we have the edge $(\vertex{\wordletter{\run_{m}}{j}}{j}, \vertex{\wordletter{\run_{m}}{j + 1}}{j + 1}) \in \edges$ and that $\priority(\vertex{\wordletter{\run_{m}}{j + 1}}{j + 1}) = \priority(\vertex{\wordletter{\run_{m}}{j}}{j})$.
Since $\wordletter{\run_{m}}{0} \in \initialStates \cap \dstates$ and $\aut$ is an LDBA, we also have that $\wordletter{\run_{m}}{j} \in \dstates$, hence $\vertex{\wordletter{\run_{m}}{j}}{j} \in \dvertices$ for each $0 \leq j \leq i$.
Suppose, for the sake of contradiction, that $(\vertex{\wordletter{\run_{m}}{i}}{i}, \vertex{\wordletter{\run_{m}}{i + i1}}{i + 1}) \notin \edges$;
similarly to the base case, this can only happen if there is a vertex $\vertex{q}{i} \in \vertices$ such that $\wordletter{\run}{i + 1} \in \trans(q, \wordletter{w}{i})$ and $\priority(\vertex{q}{i}) < \priority(\vertex{\wordletter{\run_{m}}{i}}{i})$.
Since $\wordletter{\run_{m}}{i} \in \dstates$, this implies that also $q \in \dstates$, because the nondeterministic vertex $\vertex{N_{i}}{i}$, provided that we have it, has by definition a priority strictly larger than $\priority(\vertex{\wordletter{\run_{m}}{i}}{i})$.
By definition of $\edges$, we have that $\vertex{q}{i} \in \vertices$ because there is a run $\run_{q}$ such that $\run_{q}$ merges with $\run_{m}$ at $\wordletter{\run}{i + 1}$, $\wordletter{\run_{q}}{i} = q$, and for each $0 \leq j < i$, $\wordletter{\run_{q}}{j} \in \dstates$ and  $\priority(\vertex{\wordletter{\run_{q}}{j + 1}}{j + 1}) = \priority(\vertex{\wordletter{\run_{q}}{j}}{j})$.
This implies that the run $\run_{q}$ is an accepting run of $\aut$ over $w$ that is smaller than $\run_{m}$, against the hypothesis that it is minimal.
Thus we have $(\vertex{\wordletter{\run_{m}}{i}}{i}, \vertex{\wordletter{\run_{m}}{i + 1}}{i + 1}) \in \edges$ and, by definition of $\edges$, $\priority(\vertex{\wordletter{\run_{m}}{i + 1}}{i + 1}) = \priority(\vertex{\wordletter{\run_{m}}{i}}{i})$, as required.

We also have that $\hat{\run}_{m}$ is accepting: 
by hypothesis we have that $\run_{m}$ is accepting, thus $\wordletter{\run_{m}}{i} \in \acc$ holds for infinitely many $i \in \naturals$, which implies that $\vertex{\wordletter{\run_{m}}{i}}{i}$ is an $\acc$-vertex, i.e., $\hat{\run}_{m}$ is accepting; 
this means that $\lcodagAW{\aut}{w}$ is accepting as well.

If $\ell_{\run_{m}} > 0$, then we have that $\wordletter{\run_{m}}{0} \in \initialStates \cap \nstates$. 
Let $\hat{\run}$ be the sequence of vertices
\[
    \hat{\run} = \vertex{N_{0}}{0} \cdots \vertex{N_{\ell_{\run_{m}} - 1}}{\ell_{\run_{m}} - 1} \vertex{\wordletter{\run_{m}}{\ell_{\run_{m}}}}{\ell_{\run_{m}}} \vertex{\wordletter{\run_{m}}{\ell_{\run_{m}} + 1}}{\ell_{\run_{m}} + 1} \cdots
\]
where $N_{0} = \initialStates \cap \nstates$ and $N_{i + 1} = \ntrans(N_{i}, \wordletter{w}{i})$ for each $0 \leq i < \ell_{\run_{m}} - 1$. 
We claim that $\hat{\run}$ is an $\omega$-branch of $\lcodagAW{\aut}{w}$.
It is easy to see that for each $0 \leq i < \ell_{\run_{m}} - 1$, $\wordletter{\run_{m}}{i} \in N_{i}$ and that $\vertex{N_{i}}{i} \in \nvertices$:
for $i = 0$ this is immediate by the fact that $\wordletter{\run_{m}}{0} \in \initialStates \cap \nstates = N_{0}$ and by definition, $\vertex{\initialStates \cap \nstates}{0} \in \nvertices$; 
if $\wordletter{\run_{m}}{i} \in N_{i}$ and $\vertex{N_{i}}{i} \in \nvertices$, then we have that $\wordletter{\run_{m}}{i + 1} \in \ntrans(\wordletter{\run_{m}}{i}, \wordletter{w}{i})$ implies that $\wordletter{\run_{m}}{i + 1} \in N_{i + 1} = \ntrans(N_{i}, \wordletter{w}{i})$, thus $\vertex{N_{i + 1}}{i + 1} \in \nvertices$.
Note that for each $0 \leq i < \ell_{\run_{m}} - 1$, we have $(\vertex{N_{i}}{i}, \vertex{N_{i + 1}}{i + 1}) \in \edges$.

Regarding the jumping point $\ell_{\run_{m}}$, from the previous analysis we know that $\wordletter{\run_{m}}{\ell_{\run_{m}} - 1} \in N_{\ell_{\run_{m}} - 1}$;
from $\run_{m}$ and the definition of $\ell_{\run_{m}}$ we also know that $\wordletter{\run_{m}}{\ell_{\run_{m}}} \in \jtrans(\wordletter{\run_{m}}{\ell_{\run_{m}} - 1}, \wordletter{w}{\ell_{\run_{m}} - 1}) \subseteq \dstates$. 
This implies that $\wordletter{\run_{m}}{\ell_{\run_{m}}} \in \jtrans(N_{\ell_{\run_{m}} - 1}, \wordletter{w}{\ell_{\run_{m}} - 1})$. 
From this we derive that $\vertex{\wordletter{\run_{m}}{\ell_{\run_{m}}}}{\ell_{\run_{m}}} \in \dvertices$.
It remains to show that $(\vertex{N_{\ell_{\run_{m}} - 1}}{\ell_{\run_{m}} - 1}, \vertex{\wordletter{\run_{m}}{\ell_{\run_{m}}}}{\ell_{\run_{m}}}) \in \edges$.
This is the case because $\wordletter{\run_{m}}{\ell_{\run_{m}}}$ is the minimum element in $J_{\ell_{\run_{m}}}$:
the minimality of $\wordletter{\run_{m}}{\ell_{\run_{m}}}$ follows trivially by the definition of minimality of $\run_{m}$; 
the fact that $\wordletter{\run_{m}}{\ell_{\run_{m}}} \in J_{\ell_{\run_{m}}}$ is due to the fact that $\wordletter{\run_{m}}{\ell_{\run_{m}}} \in \jtrans(N_{\ell_{\run_{m}} - 1}, \wordletter{w}{\ell_{\run_{m}} - 1})$ and that $\wordletter{\run_{m}}{\ell_{\run_{m}}} \notin D_{\ell_{\run_{m}}}$.
For the sake of contradiction, suppose that $\wordletter{\run_{m}}{\ell_{\run_{m}}} \in D_{\ell_{\run_{m}}}$.
This implies that $\wordletter{\run_{m}}{\ell_{\run_{m}}} \in \dtrans(P_{\ell_{\run_{m}} - 1}, \wordletter{\run_{m}}{\ell_{\run_{m}} - 1})$.
Recall that $P_{\ell_{\run_{m}} - 1} \subseteq \dstates$ and that there is some state $d \in \dstates$ such that $\vertex{d}{\ell_{\run_{m}} - 1} \in \dvertices$ has the same priority of $\wordletter{\run_{m}}{\ell_{\run_{m}}}$ and that $\vertex{d}{\ell_{\run_{m}} - 1}$ is reachable in $\lcodagAW{\aut}{w}$ from some vertex at level $0$.
This implies that there is a run $\run_{c}$ such that $\wordletter{\run_{c}}{0} \in \initialStates$, $\wordletter{\run_{c}}{\ell_{\run_{m}} - 1} = d$, $\wordletter{\run_{c}}{\ell_{\run_{m}}} = \wordletter{\run_{m}}{\ell_{\run_{m}}}$.
Thus $\run_{c}$ merges with $\run_{m}$ on $\wordletter{\run_{c}}{\ell_{\run_{m}}}$, so $\run_{c}$ is also accepting.
This implies that $\ell_{\run_{c}} \leq \ell_{\run_{m}} - 1$ (since $\wordletter{\run_{c}}{\ell_{\run_{m}} - 1} = d \in \dstates$), which contradicts the minimality of $\run_{m}$ with respect to $\ell_{\run_{m}}$.
With this, we have that $(\vertex{N_{\ell_{\run_{m}} - 1}}{\ell_{\run_{m}} - 1}, \vertex{\wordletter{\run_{m}}{\ell_{\run_{m}}}}{\ell_{\run_{m}}}) \in \edges$ as desired.

By the same reasoning for the case $\ell_{\run_{m}} = 0$, we have that $\vertex{\wordletter{\run_{m}}{\ell_{\run_{m}}}}{\ell_{\run_{m}}} \in \dvertices$ and, for each $i \geq \ell_{\run_{m}}$, that $(\vertex{\wordletter{\run_{m}}{i}}{i}, \vertex{\wordletter{\run_{m}}{i + 1}}{i + 1}) \in \edges$.
We also have that $\hat{\run}_{m}$ is accepting: 
by hypothesis we have that $\run_{m}$ is accepting, thus $\wordletter{\run_{m}}{i} \in \acc$ holds for infinitely many $i \in \naturals$, which implies that $\vertex{\wordletter{\run_{m}}{i}}{i}$ is an $\acc$-vertex, i.e., $\hat{\run}_{m}$ is accepting; 
this means that $\lcodagAW{\aut}{w}$ is accepting as well.

This completes the proof that $w$ is accepted by $\aut$ implies that $\lcodagAW{\aut}{w}$ is accepting.

\subsection{Proof of Property~\ref{lem:LDBAPropertiesOfCodeterministicDAG:numberOmegaBranches}}
\label{app:LDBAPropertiesOfCodeterministicDAG:numberOmegaBranches}

We have to show that the number of (accepting) $\omega$-branches in $\lcodagAW{\aut}{w}$ is at most the number of states in $\aut$.
Similarly to the proof of Lemma~\ref{lem:FANBAfiniteOmegaBranchesInCodeterministicDAG}, let $m_{\ell}$ be the number of vertices which are in the $\omega$-branches on level $\ell$. 
Since each vertex in $\lcodagAW{\aut}{w}$ has only one predecessor, due to the fact that $\lcodagAW{\aut}{w}$ is codeterministic, we have that $m_{0} \leq m_{1} \leq m_{2} \leq \cdots$, i.e., the number of vertices in the $\omega$-branches on each level does not decrease over the levels. 
In particular, we have that the number of $\omega$-branches can be increased only by the jump from the initial phase, that is nondeterministic, to the accepting phase, that is deterministic. 
Moreover, since there are at most $\size{\states}$ states on each level, the number $b_{\ell}$ of $\omega$-branches at each level $\ell$ cannot exceed $\size{\states}$: 
in order to cross this threshold, there must be a level $k$ such that $b_{k} \leq \size{\states}$ and $b_{k + 1} > \size{\states}$.
By the Pigeonhole principle, this implies that two different branches must share a common vertex $v$ at level $k + 1$, since there are $b_{k + 1} > \size{\states}$ $\omega$-branches and at most $\size{\states}$ vertices.
This means that $v$ has two predecessors, which contradicts the fact that $\lcodagAW{\aut}{w}$ is codeterministic.

\subsection{Proof of Property~\ref{lem:LDBAPropertiesOfCodeterministicDAG:stableLevel}}
\label{app:LDBAPropertiesOfCodeterministicDAG:stableLevel}

We need to prove there exists a stable level $\stableLevel \geq 1$ in $\lcodagAW{\aut}{w}$ such that all $\acc$-vertices after level $\stableLevel$ are finite if and only if $w \notin \lang{\aut}$.
Assume that there is a stable level $\stableLevel$ in $\lcodagAW{A}{w}$;
this implies, by definition, that every $\acc$-vertex $\vertex{f}{k}$ with $f \in \acc$ and $k \geq \stableLevel$ is finite, that is, $\vertex{f}{k}$ does not occur in an $\omega$-branch.
Therefore, each $\omega$-branch of $\lcodagAW{\aut}{w}$ is not accepting, i.e., $\lcodagAW{\aut}{w}$ is not accepting, so by Property~\ref{lem:LDBAPropertiesOfCodeterministicDAG:acceptance} we have $w \notin \lang{\aut}$.
This completes the proof that if there exists a stable level $\stableLevel \geq 1$ in $\lcodagAW{\aut}{w}$ such that all $\acc$-vertices after level $\stableLevel$ are finite, then $w \notin \lang{\aut}$.

Consider now the opposite direction and assume that $w \notin \lang{\aut}$. 
By Property~\ref{lem:LDBAPropertiesOfCodeterministicDAG:acceptance}, it follows that $\lcodagAW{\aut}{w}$ is not accepting, that is, each $\omega$-branch is not accepting, i.e., there are only finitely many $\acc$-vertices in each $\omega$-branch.
Since by Property~\ref{lem:LDBAPropertiesOfCodeterministicDAG:numberOmegaBranches} the number of $\omega$-branches is at most $\size{\states}$, i.e., it is finite, it follows that there exists a level $\stableLevel$ such that each vertex $\vertex{q}{k}$ of $\lcodagAW{\aut}{w}$ with $k \geq \stableLevel$ is not an $\acc$-vertex.
This implies that $\stableLevel$ is a stable level, since Definition~\ref{def:stableLevel} is trivially satisfied, as desired.
Moreover, since there are no $\acc$-vertices after level $\stableLevel$, it is trivial to see that each of them is finite.

This completes the proof of the fact that $w \notin \lang{\aut}$ implies that there exists a stable level $\stableLevel \geq 1$ in $\lcodagAW{\aut}{w}$ such that all $\acc$-vertices after level $\stableLevel$ are finite as well as of Property~\ref{lem:LDBAPropertiesOfCodeterministicDAG:stableLevel}.

\section{Proof of Theorem~\ref{thm:LDBAlanguageSizeComplementNBA}}
\label{app:LDBAlanguageSizeComplementNBA}

\LDBAlanguageSizeComplementNBA*

Theorem~\ref{thm:LDBAlanguageSizeComplementNBA} is a direct consequence of the proofs we present in the following subsections, namely \ref{app:LDBAlanguageSizeComplementNBA:languageWinLDBAnotInComplement} that shows that whenever $w \in \lang{\aut}$, then we have $w \notin \lang{\aut^{c}}$, i.e., $w \in \infwords \setminus \lang{\aut^{c}}$;
\ref{app:LDBAlanguageSizeComplementNBA:languageWnotinLDBAinComplement} for the dual property $w \notin \lang{\aut} \implies w \in \lang{\aut^{c}}$, that is, $w \notin \lang{\aut} \implies w \notin \infwords \setminus \lang{\aut^{c}}$; 
and 
\ref{app:LDBAlanguageSizeComplementNBA:size} taking care of establishing the size of $\aut^{c}$.

\subsection{Proof of \texorpdfstring{$w \in \lang{\aut} \implies w \notin \lang{\aut^{c}}$}{w accepted by A implies w not accepted by the complement of A}}
\label{app:LDBAlanguageSizeComplementNBA:languageWinLDBAnotInComplement}

We need to prove that for each $w \in \lang{\aut}$ it holds that $w \notin \lang{\aut^{c}}$. 
Consider an arbitrary $w \in \lang{\aut}$ and let $\run^{c}$ be a macrorun of $\aut^{c}$ over $w$. 
We are going to show that $\run^{c}$ is not accepting; 
thus, giving the arbitrary choice of $\run^{c}$, we have that each macrorun of $\aut^{c}$ over $w$ is not accepting, hence $w \notin \lang{\aut^{c}}$. 
Similar to the macrorun in Definition~\ref{def:FANBAsliceBasedComplementation}, we can have only two types of macroruns:
the ones staying in $2^{\nstates}$ and the ones jumping to $2^{\nstates} \times 2^{\dstates} \times 2^{\dstates} \times 2^{\dstates}$.

If $\run^{c}$ visits only macrostates of the form $R \in 2^{\nstates}$, then $\run^{c} = R_{0} R_{1} \cdots$, where $R_{0} = \initialStates$.
This means that $\run^{c}$ never visits any accepting macrostate $(N, S, B, C) \in \acc^{c}$, so it trivially also does not visit accepting macrostates $(N, S, \emptyset, C) \in \acc^{c}$ infinitely often, thus $\run^{c}$ is not accepting.

If $\run^{c}$ does not visit only macrostates of the form $R \in 2^{\nstates}$, then we have that $\run^{c}$ is a macrorun of the form 
\[
	\run^{c} = R_{0} \cdots R_{\macrorunJump} (N_{\macrorunJump + 1}, S_{\macrorunJump + 1}, B_{\macrorunJump + 1}, C_{\macrorunJump + 1}) (N_{\macrorunJump + 2}, S_{\macrorunJump + 2}, B_{\macrorunJump + 2}, C_{\macrorunJump + 2}) \cdots
\] 
for some $\macrorunJump \in \naturals$, where $R_{0} = \initialStates$. 
Let $\run$ be an arbitrary accepting run of $\aut$ over $w$;
such a run must exist since $w \in \lang{\aut}$.
Since $\run$ is accepting, it has the following properties.
\begin{enumerate}
\item
	$\wordletter{\run}{0} \in \initialStates$, due to the requirement that $\run$ has to be initial in order to be accepting;
	this implies that $\wordletter{\run}{0} \in R_{0}$.
\item
	$\infstates{\run} \cap \acc \neq \emptyset$, by definition of accepting run, implies that there exists $\runFirstInfiniteAccepting \in \naturals$ such that $\runFirstInfiniteAccepting = \min\setcond{i \in \naturals}{\wordletter{\run}{i} \in \infstates{\run} \cap \acc}$, that is, $\runFirstInfiniteAccepting$ indicates the first moment an accepting state occurring infinitely often has been visited.
\item
	There exists $\runJumped \in \naturals$ such that $\wordletter{\run}{\runJumped} \in \dstates$ and $\wordletter{\run}{i} \in \nstates$ for each $0 \leq i < \runJumped$. 
	This follows from the fact that $\acc \subseteq \dstates$ and $\infstates{\run} \cap \acc \neq \emptyset$;
	note that $\runJumped \leq \min\setcond{i \in \naturals}{\wordletter{\run}{i} \in \acc} \leq \runFirstInfiniteAccepting$.
\end{enumerate}
We also have that for each $0 < i \leq \macrorunJump$, $\wordletter{\run}{i} \in R_{i}$, since $R_{i} = \trans^{c}(R_{i-1}, \wordletter{w}{i-1}) = \trans(R_{i-1}, \wordletter{w}{i-1})$, by definition of $\trans^{c}$, and $\wordletter{\run}{i} \in \trans(\wordletter{\run}{i-1}, \wordletter{w}{i-1})$ by the fact that $\run$ is a run.

At step $\macrorunJump$, the states in $R_{\macrorunJump}$ are split into the pseudo-macrostate $(N_{\macrorunJump}, S_{\macrorunJump}, B_{\macrorunJump}, C_{\macrorunJump}) = (R_{\macrorunJump} \cap \nstates, (R_{\macrorunJump} \cap \dstates) \setminus \acc, R_{\macrorunJump} \cap \acc, \emptyset)$;
the state $\wordletter{\run}{\macrorunJump}$ belongs to only one of these components.
\begin{description}
\item[Case $\macrorunJump \geq \runJumped$:] 
	we have that $\wordletter{\run}{\macrorunJump} \in \dstates$, thus $\wordletter{\run}{\macrorunJump} \notin R_{\macrorunJump} \cap \nstates = N_{\macrorunJump}$.
	If $\wordletter{\run}{\macrorunJump} \in \dstates \setminus \acc$, then $\wordletter{\run}{\macrorunJump} \in (R_{\macrorunJump} \cap \dstates) \setminus \acc = S_{\macrorunJump}$ and $\wordletter{\run}{\macrorunJump} \notin R_{\macrorunJump} \cap \acc = B_{\macrorunJump}$; 
	otherwise, for $\wordletter{\run}{\macrorunJump} \in \acc \subseteq \dstates$, we have that $\wordletter{\run}{\macrorunJump} \notin (R_{\macrorunJump} \cap \dstates) \setminus \acc = S_{\macrorunJump}$ and $\wordletter{\run}{\macrorunJump} \in R_{\macrorunJump} \cap \acc = B_{\macrorunJump}$.
	Clearly, $\wordletter{\run}{\macrorunJump} \notin \emptyset = C_{\macrorunJump}$.
 
\item[Case $\macrorunJump < \runJumped$:]
	in this case we have that $\wordletter{\run}{\macrorunJump} \in \nstates$, so we have that $\wordletter{\run}{\macrorunJump} \in R_{\macrorunJump} \cap \nstates = N_{\macrorunJump}$, $\wordletter{\run}{\macrorunJump} \notin (R_{\macrorunJump} \cap \dstates) \setminus \acc = S_{\macrorunJump}$, and $\wordletter{\run}{\macrorunJump} \notin R_{\macrorunJump} \cap \acc = B_{\macrorunJump}$.
	Also, $\wordletter{\run}{\macrorunJump} \notin \emptyset = C_{\macrorunJump}$.
	
	By a simple proof by induction, we can show that for each $\macrorunJump < i < \runJumped$, we have that $\wordletter{\run}{i} \in N_{i}$ since $\wordletter{\run}{i} \in \nstates$ and the inductive hypothesis $\wordletter{\run}{i} \in N_{i} = \ntrans(N_{i - 1}, \wordletter{w}{i - 1})$.
	The fact that $\wordletter{\run}{i} \notin S_{i} \cup B_{i} \cup C_{i}$ is immediate by the fact that $S_{i} \cup B_{i} \cup C_{i} \subseteq \dstates$ by definition.
	
	Lastly, $\run$ jumps from $\wordletter{\run}{\runJumped - 1} \in \nstates$ to $\dstates$ on reading the letter $\wordletter{w}{\runJumped - 1}$.
	By definition of $\trans^{c}$, it follows that $\wordletter{\run}{\runJumped} \notin N_{\runJumped} \subseteq \nstates$ since $\wordletter{\run}{\runJumped} \in \jtrans(\wordletter{\run}{\runJumped - 1}, \wordletter{w}{\runJumped - 1}) \subseteq \dstates$.
	Moreover, if $B_{\runJumped - 1} \neq \emptyset$, then $\wordletter{\run}{\runJumped} \in C_{\runJumped}$ unless it is already in $S_{\runJumped}$ or already in $B_{\runJumped}$;
	otherwise, for $B_{\runJumped - 1} = \emptyset$, we have that $\wordletter{\run}{\runJumped} \in B_{\runJumped}$ unless it is already in $S_{\runJumped}$.
	Note that by definition of $\trans^{c}$, the sets $S_{\runJumped}$, $B_{\runJumped}$, and $C_{\runJumped}$ are pairwise disjoint, so $\wordletter{\run}{\runJumped}$ belongs to exactly one of them.
\end{description}
In both cases, at the step $\bothJumped = \max\setnocond{\macrorunJump, \runJumped}$, we have that $\wordletter{\run}{\bothJumped} \notin N_{\bothJumped}$ and that $\wordletter{\run}{\bothJumped}$ belongs to exactly one between $S_{\bothJumped}$, $B_{\bothJumped}$, and $C_{\bothJumped}$.

By a simple induction, we can prove that for each $\bothJumped \leq i < \runFirstInfiniteAccepting$, we have that $\wordletter{\run}{i} \notin N_{i}$ and that $\wordletter{\run}{i}$ belongs to exactly one set between $S_{i}$, $B_{i}$, and $C_{i}$: 
the base case for $i = \bothJumped$ is exactly what we have just shown; 
for the inductive case, the fact that $\wordletter{\run}{i} \notin N_{i}$ is trivial since $\wordletter{\run}{i} \in \dstates$. 
Assume that $\wordletter{\run}{i}$ belongs to exactly one set between $S_{i}$, $B_{i}$, and $C_{i}$;
if $\wordletter{\run}{i} \in S_{i}$, then either $\wordletter{\run}{i + 1} \in \dtrans(S_{i}, \wordletter{w}{i}) \setminus \acc = S_{i+1}$, or $\wordletter{\run}{i + 1} \in \dtrans(S_{i}, \wordletter{w}{i}) \cap \acc$, thus it either belongs to $C_{i + 1}$ or to $B_{i + 1}$.
If $\wordletter{\run}{i} \in B_{i}$, then $\wordletter{\run}{i} \notin S_{i}$, so  $\wordletter{\run}{i + 1} \in B_{i + 1} = \dtrans(B_{i}, \wordletter{w}{i}) \setminus S_{i+1}$ unless $\wordletter{\run}{i + 1} \in S_{i + 1}$;
the latter case might occur when $(\dtrans(S_{i}, \wordletter{w}{i}) \cap \acc) \cap \dtrans(B_{i}, \wordletter{w}{i}) \neq \emptyset$.
Clearly $\wordletter{\run}{i + 1} \notin C_{i + 1}$ by definition of $\trans^{c}$.
Lastly, if $\wordletter{\run}{i} \in C_{i}$, then $\wordletter{\run}{i} \notin S_{i}$ and $\wordletter{\run}{i} \notin B_{i}$, thus $\wordletter{\run}{i + 1} \in C_{i + 1}$ by definition of $\trans^{c}$ unless it is already in $S_{i + 1}$ or in $B_{i + 1}$.
This completes the inductive proof that for each $\bothJumped \leq i < \runFirstInfiniteAccepting$, we have $\wordletter{\run}{i} \notin N_{i}$ and that $\wordletter{\run}{i}$ belongs to exactly one set between $S_{i}$, $B_{i}$, and $C_{i}$.

Consider now the next step in the runs: 
from the fact that $\wordletter{\run}{\runFirstInfiniteAccepting - 1} \notin N_{\runFirstInfiniteAccepting - 1}$ and that $\wordletter{\run}{\runFirstInfiniteAccepting - 1}$ belongs to exactly one between $S_{\runFirstInfiniteAccepting - 1}$, $B_{\runFirstInfiniteAccepting - 1}$, and $C_{\runFirstInfiniteAccepting - 1}$, we get that $\wordletter{\run}{\runFirstInfiniteAccepting} \notin N_{\runFirstInfiniteAccepting}$ for the usual motivation and that 
$\wordletter{\run}{\runFirstInfiniteAccepting} \notin S_{\runFirstInfiniteAccepting}$ because $\wordletter{\run}{\runFirstInfiniteAccepting} \in \acc$ by definition of $\runFirstInfiniteAccepting$ and $S_{\runFirstInfiniteAccepting} = \dtrans(S_{\runFirstInfiniteAccepting - 1}, \wordletter{w}{\runFirstInfiniteAccepting - 1}) \setminus \acc$.
Since $\wordletter{\run}{\runFirstInfiniteAccepting - 1} \in S_{\runFirstInfiniteAccepting - 1} \cup B_{\runFirstInfiniteAccepting - 1} \cup C_{\runFirstInfiniteAccepting - 1}$, it follows that $\wordletter{\run}{\runFirstInfiniteAccepting} = \dtrans(\wordletter{\run}{\runFirstInfiniteAccepting - 1}, \wordletter{w}{\runFirstInfiniteAccepting - 1})$ belongs to either $B_{\runFirstInfiniteAccepting}$ or $C_{\runFirstInfiniteAccepting}$.
Let $q_{\runFirstInfiniteAccepting} = \wordletter{\run}{\runFirstInfiniteAccepting}$.

By repeating the same inductive proof we used above, we can show that for each $i \geq \runFirstInfiniteAccepting$, we have that $\wordletter{\run}{i} \notin N_{i}$ and that it belongs to either $S_{i}$, $B_{i}$, or $C_{i}$.
Due to the definition of the component $S$ in $(N, S, B, C)$, we have that $q_{\runFirstInfiniteAccepting}$ can never be in $S_{i}$, since $S_{i} \subseteq \dstates \setminus \acc$ and $q_{\runFirstInfiniteAccepting} \in \acc$, thus $q_{\runFirstInfiniteAccepting}$ occurs infinitely often only in $B_{i}$ or $C_{i}$.
We can generalize this result as follows.

Recall that, by definition of $\trans^{c}$, the set $S_{i}$ is obtained as $S_{i + 1} = \dtrans(S_{i}, \wordletter{w}{i}) \setminus \acc$.
Suppose that there is $i > \runFirstInfiniteAccepting$ such that $\wordletter{\run}{i} \in S_{i}$; 
this implies that there is the minimum $i' > i$ such that $\wordletter{\run}{i'} \in \acc$ (since $\infstates{\run} \cap \acc \neq \emptyset$), for which we have that 
$\wordletter{\run}{i''} \in S_{i''}$ for each $i \leq i'' < i'$ and $\wordletter{\run}{i'} \notin S_{i'}$.
This means that the finite fragment of $\run$ between $\wordletter{\run}{i}$ and $\wordletter{\run}{i'}$ ``dies'' in $S_{i'}$.
Since the component $S$ is a finite set that never gets expanded with new states (cf.\@ the definition of $\trans^{c}$), the component $S$ can only lose states in the step from $S_{k}$ to $S_{k + 1}$:
this happens when two states in $S_{k}$ have the same image under $\dtrans(\functionDot, \wordletter{w}{k})$, or when the $\wordletter{w}{k}$-successor of a state in $S_{k}$ is an accepting state.
This implies that only finitely many finite fragments of $\run$ can die in $S$; 
however we have that $\run$ is infinite, so $\run$ must ``leave'' the component $S$ at some point and never visit states tracked by the $S$ component anymore. 
That is, there must exist $\runNotInS \in \naturals$, with $\runNotInS \geq \runFirstInfiniteAccepting$, such that for each $k \geq \runNotInS$, $\wordletter{\run}{k} \notin S_{k}$.
This implies that for each $i \geq \runNotInS$, we have that $\wordletter{\run}{i} \notin N_{i} \cup S_{i}$ and that $\wordletter{\run}{i}$ belongs to either $B_{i}$ or $C_{i}$.

If $\wordletter{\run}{\runNotInS} \in B_{\runNotInS}$, then we have that $\wordletter{\run}{i} \in B_{i}$ for each $i \geq \runNotInS$.
This can be proved by a trivial induction over $i$ making use of the fact that $B_{i+1} = \dtrans(B_{i}, \wordletter{w}{i}) \setminus S_{i+1}$ and that $\wordletter{\run}{i + 1} \notin S_{i + 1}$.
Thus, we also have that $B_{i} \neq \emptyset$ for each $i \geq \runNotInS$.
This implies that $\run^{c}$ is not accepting, since $\run^{c}$ has visited accepting macrostates $(N, S, \emptyset, C)$ at most $\runNotInS$ times.

If $\wordletter{\run}{\runNotInS} \in C_{\runNotInS}$, then we have two cases: 
either $B_{i} \neq \emptyset$ for each $i \geq \runNotInS$, or there is the minimal $k \geq \runNotInS$ such that $B_{k} = \emptyset$.
In the former case, we have as before that $\run^{c}$ is not accepting, since $\run^{c}$ has visited accepting macrostates $(N, S, \emptyset, C)$ at most $\runNotInS$ times.
In the latter case, we have that $\wordletter{\run}{k + 1} \in B_{k + 1}$:
we already know that for each $i \geq \runNotInS$, we have that $\wordletter{\run}{i} \notin N_{i} \cup S_{i}$ and that $\wordletter{\run}{i}$ belongs to either $B_{i}$ or $C_{i}$.
Since $B_{k}$ is empty, this implies that $\wordletter{\run}{k} \in C_{k}$;
by definition of $\trans^{c}$, we have that $B_{k + 1} = (\dtrans(C_{k}, \wordletter{w}{k}) \cup \jtrans(N_{k}, \wordletter{w}{k}) \cup (\dtrans(S_{k}, \wordletter{w}{k}) \cap \acc)) \setminus S_{k + 1}$ and $C_{k + 1} = \emptyset$.
Since $\wordletter{\run}{k + 1} \notin S_{k + 1}$ and $\wordletter{\run}{k + 1} \in \dtrans(C_{k}, \wordletter{w}{k})$, we derive that $\wordletter{\run}{k + 1} \in B_{k + 1}$, as desired.
Since $\wordletter{\run}{k + 1} \in B_{k + 1}$, as before we have that $B_{i} \neq \emptyset$ for each $i \geq k + 1$.
This implies that $\run^{c}$ is not accepting, since $\run^{c}$ has visited accepting macrostates $(N, S, \emptyset, C)$ at most $k + 1$ times.

Therefore, given $w \in \lang{\aut}$, the arbitrary choice of the accepting run $\run$ and of the macrorun $\run^{c}$ implies that no macrorun of $\aut^{c}$ is accepting, that is, $w$ is not accepted by $\aut^{c}$, i.e., $w \notin \lang{\aut^{c}}$, as required.

\subsection{Proof of \texorpdfstring{$w \notin \lang{\aut} \implies w \in \lang{\aut^{c}}$}{w not accepted by A implies w accepted by the complement of A}}
\label{app:LDBAlanguageSizeComplementNBA:languageWnotinLDBAinComplement}

We need to prove that for each $w \notin \lang{\aut}$ it holds that $w \in \lang{\aut^{c}}$; 
we do this by showing that there exists an accepting macrorun $\run^{c}$ of $\aut^{c}$ over $w$, with the help of the codeterministic DAG~$\lcodagAW{\aut}{w}$ of $\aut$ over $w$.

In order to be accepting, the macrorun must be of the form:
\[
	\run^{c} = R_{0} R_{1} \cdots R_{\macrorunJump} (N_{\macrorunJump + 1}, S_{\macrorunJump + 1}, B_{\macrorunJump + 1}, C_{\macrorunJump + 1}) (N_{\macrorunJump + 2}, S_{\macrorunJump + 2}, B_{\macrorunJump + 2}, C_{\macrorunJump + 2}) \cdots
\]
for some $\macrorunJump \in \naturals$, where $R_{0} = \initialStates$. 
By Definition~\ref{def:LDBAnsbcComplementation}, once the level $\macrorunJump$ is chosen, the reached macrostate and all its successors in $\run^{c}$ are uniquely determined by the deterministic transition function $\dtrans^{c}$.

According to Lemma~\ref{lem:LDBAPropertiesOfCodeterministicDAG}, Property~\ref{lem:LDBAPropertiesOfCodeterministicDAG:stableLevel}, there exists a stable level $\stableLevel \geq 1$ such that all $\acc$-vertices after level $\stableLevel$ are finite. 
Recall that by definition, a vertex $\vertex{q}{i}$ is finite if there is no $\omega$-branch starting from $\vertex{q}{i}$, so Lemma~\ref{lem:LDBAPropertiesOfCodeterministicDAG}, Property~\ref{lem:LDBAPropertiesOfCodeterministicDAG:stableLevel} implies that all $\acc$-vertices after level $\stableLevel$ have only finitely many descendants. 
By choosing accurately $\macrorunJump \geq \stableLevel$ for the jump of $\run^{c}$, we get that the macrorun $\run^{c}$ is accepting, as we prove below.

We start by presenting few preliminary definitions and results we need in the proof.

\begin{definition}
\label{def:LDBAlanguageSizeComplementNBA:mappingRepresentation}
	Let $\runset$ be the set of runs of the LDBA~$\aut$ over $w \in \infwords$ and $\branchset$ be the set of initial branches of $\lcodagAW{\aut}{w}$. 
	We say that $\mappingRunBranch(\run) \in \branchset$ is the representative of $\run$ in $\lcodagAW{\aut}{w}$, where $\mappingRunBranch \colon \runset \to \branchset$ is a total mapping defined as follows.
	\begin{itemize}
	\item
		Let $\run \in \runset$ be of the form $\run = n_{0} n_{1} n_{2} \cdots$ where $n_{0} \in \nstates \cap \initialStates$ and for each $i \in \naturals$, we have that $n_{i + 1} \in \ntrans(n_{i}, \wordletter{w}{i})$. 
		Then we define $\mappingRunBranch(\run) \in \branchset$ as $\mappingRunBranch(\run) = \vertex{N_{0}}{0} \vertex{N_{1}}{1} \vertex{N_{2}}{2} \cdots$, where $N_{0} = \initialStates \cap \nstates$ and $N_{i + 1} = \ntrans(N_{i}, \wordletter{w}{i})$ for each $i \in \naturals$.
	\item 
		Let $\run \in \runset$ be of the form $\run = n_{0} n_{1} \cdots n_{j} d_{j + 1} d_{j + 2} \cdots$ where $n_{0} \in \nstates \cap \initialStates$ and there is $j \in \naturals$ such that for each $0 \leq i < j$, we have that  $n_{i + 1} \in \ntrans(n_{i}, \wordletter{w}{i})$,  $d_{j + 1} \in \jtrans(n_{j}, \wordletter{w}{j})$, and $d_{k + 1} = \dtrans(d_{k}, \wordletter{w}{k})$ for each $k > j$.
		Then we define the branch $\mappingRunBranch(\run) \in \branchset$ step by step as follows.
		For each $0 \leq i \leq j$, we define $\wordletter{\mappingRunBranch(\run)}{i} = \vertex{N_{i}}{i}$, where $N_{0} = \initialStates \cap \nstates$ and $N_{i + 1} = \ntrans(N_{i}, \wordletter{w}{i})$.
		
		Consider now the step $j + 1$ and the vertex $\vertex{d_{j + 1}}{j + 1}$. 
		By Definition~\ref{def:LDBAcodeterministicDAGandPriorityFunction}, the edge from $\vertex{N_{j}}{j}$ to $\vertex{d_{j + 1}}{j + 1}$ may have been removed: 
		this happens if $d_{j + 1}$ is the successor of another deterministic vertex in $\lcodagAW{\aut}{w}$.
		If $(\vertex{N_{j}}{j}, \vertex{d_{j + 1}}{j + 1}) \notin \edges$, then we stop the construction of $\mappingRunBranch(\run)$, that is, we just set $\mappingRunBranch(\run) = \vertex{N_{0}}{0} \vertex{N_{1}}{1} \cdots \vertex{N_{j}}{j}$. 
		Otherwise, if $(\vertex{N_{j}}{j}, \vertex{d_{j + 1}}{j + 1}) \in \edges$, then we just add $\wordletter{\mappingRunBranch(\run)}{j + 1} = \vertex{d_{j + 1}}{j + 1}$ as well as all successors $\wordletter{\mappingRunBranch(\run)}{k} = \vertex{d_{k}}{k}$ for $k > j + 1$, where $d_{k+1} = \dtrans(d_{k}, \wordletter{w}{k})$, as long as $(\vertex{d_{k}}{k}, \vertex{d_{k + 1}}{k + 1}) \in \edges$. 
		This results in the branch 
		\[
			\mappingRunBranch(\run) = \vertex{N_{0}}{0} \vertex{N_{1}}{1} \cdots \vertex{N_{j}}{j} \vertex{d_{j + 1}}{j + 1} \vertex{d_{j + 2}}{j + 2} \cdots \vertex{d_{k}}{k} \cdots
		\]
		that is possibly stopping in $\vertex{d_{k}}{k}$ provided that $(\vertex{d_{k}}{k}, \vertex{d_{k + 1}}{k + 1}) \notin \edges$.
	\item
		Lastly, let $\run \in \runset$ be of the form $\run = d_{0} d_{1} d_{2} \cdots$ where $d_{0} \in \dstates \cap \initialStates$ and for each $i \in \naturals$, we have that $d_{i + 1} \in \dtrans(d_{i}, \wordletter{w}{i})$. 
		Then we define $\mappingRunBranch(\run) \in \branchset$ as 
		\[
			\mappingRunBranch(\run) = \vertex{d_{0}}{0} \vertex{d_{1}}{1} \vertex{d_{2}}{2} \cdots \vertex{d_{k}}{k} \cdots
		\] 
		that is possibly stopping in $\vertex{d_{k}}{k}$ provided that $(\vertex{d_{k}}{k}, \vertex{d_{k + 1}}{k + 1}) \notin \edges$.
	\end{itemize}
\end{definition}

\begin{lemma}
\label{lem:LDBAlanguageSizeComplementNBA:merge}
	For each infinite run $\run$ of the LDBA~$\aut$ over $w \in \infwords$ such that $\mappingRunBranch(\run)$ is finite, it holds that $\run$ eventually merges with a run $\run^{*}$ of $\aut$ over $w$ such that $\mappingRunBranch(\run^{*})$ is an $\omega$-branch.
\end{lemma}
\begin{proof}
	Let $\run \in \runset$ be an infinite run of $\aut$ over $w$ such that $\mappingRunBranch(\run)$ is finite.
	In order for $\mappingRunBranch(\run)$ to be finite, by Definition~\ref{def:LDBAlanguageSizeComplementNBA:mappingRepresentation} we must have that either 
	\begin{itemize}
	\item
		$\run = n_{0} n_{1} \cdots n_{j} d_{j + 1} d_{j + 2} \cdots$ where $n_{0} \in \nstates \cap \initialStates$ and there is $j \in \naturals$ such that for each $0 \leq i < j$, we have that  $n_{i + 1} \in \ntrans(n_{i}, \wordletter{w}{i})$,  $d_{j + 1} \in \jtrans(n_{j}, \wordletter{w}{j})$, and $d_{k + 1} = \dtrans(d_{k}, \wordletter{w}{k})$ for each $k > j$, or
	\item
		$\run = d_{0} d_{1} d_{2} \cdots$ where $d_{0} \in \dstates \cap \initialStates$ and for each $i \in \naturals$, we have that $d_{i + 1} \in \dtrans(d_{i}, \wordletter{w}{i})$.
	\end{itemize}
	Let $\run$ be a run of the first type.
	If $\mappingRunBranch(\run)$ is the finite branch $\vertex{N_{0}}{0} \vertex{N_{1}}{1} \cdots \vertex{N_{j}}{j}$, then by Definition~\ref{def:LDBAlanguageSizeComplementNBA:mappingRepresentation} we have that $(\vertex{N_{j}}{j}, \vertex{d_{j + 1}}{j + 1}) \notin \edges$;
	by Definition~\ref{def:LDBAcodeterministicDAGandPriorityFunction}, this means that the edge from $\vertex{N_{j}}{j}$ to $\vertex{d_{j + 1}}{j + 1}$ has been omitted due to the fact that there is another deterministic vertex $\vertex{p_{j}}{j} \in \dvertices$ such that $p \in P_{j}$, $d_{j + 1} \in \dtrans(p, \wordletter{w}{j})$ and $\priority(\vertex{d_{j + 1}}{j + 1}) > \priority(\vertex{p_{j}}{j})$.
	
	Analogously, if $\mappingRunBranch(\run)$ is the finite branch 
	\[
	    \mappingRunBranch(\run) = \vertex{N_{0}}{0} \vertex{N_{1}}{1} \cdots \vertex{N_{j}}{j} \vertex{d_{j + 1}}{j + 1} \cdots \vertex{d_{k}}{k},
	\]
	then by Definition~\ref{def:LDBAlanguageSizeComplementNBA:mappingRepresentation} we have that $(\vertex{d_{k}}{k}, \vertex{d_{k + 1}}{k + 1}) \notin \edges$;
	as before, by Definition~\ref{def:LDBAcodeterministicDAGandPriorityFunction}, this means that the edge from $\vertex{d_{k}}{k}$ to $\vertex{d_{k + 1}}{k + 1}$ has been omitted due to the fact that there is another deterministic vertex $\vertex{p_{k}}{k} \in \dvertices$ such that $p \in P_{k}$, $d_{k + 1} \in \dtrans(p, \wordletter{w}{k})$, and $\priority(\vertex{d_{k + 1}}{k + 1}) > \priority(\vertex{p_{k}}{k})$.
	These two cases can be analyzed together, with index $k$ playing the role of $j$ for the former case.
	
	Since $p_{k} \neq d_{k}$ (otherwise we would have the edge $(\vertex{d_{k}}{k}, \vertex{d_{k + 1}}{k + 1})$, so $\mappingRunBranch(\run)$ would not have been blocked), by a backward analysis of the construction of $\lcodagAW{\aut}{w}$, we can identify another run $\run'$ such that $\wordletter{\run'}{k} = p_{k}$, $\wordletter{\run'}{k + 1} = d_{k + 1}$, and for each $j > k$, $\wordletter{\run'}{j} = \wordletter{\run}{j}$, that is, $\run'$ and $\run$ merge.
	The fact that $\run$ and $\run'$ agree from position $k + 1$ is an immediate consequence of the fact that $d_{k + 1} \in \dstates$ and that $\wordletter{\run'}{k + 1} = d_{k + 1} = \wordletter{\run}{k + 1}$ by hypothesis.
	Since $\lcodagAW{\aut}{w}$ is codeterministic, $\vertex{p_{k}}{k}$ has exactly one predecessor vertex: 
	either $\vertex{N_{k - 1}}{k - 1}$ with $\emptyset \neq N_{k - 1} \subseteq 2^{\nstates}$, or $\vertex{p_{k - 1}}{k - 1}$ for some $p_{k - 1} \in \dstates$ such that $p_{k} = \dtrans(p_{k - 1}, \wordletter{w}{k - 1})$.
	In the latter case, we require $\wordletter{\run'}{k - 1} = p_{k - 1}$.
	By going backward step by step, either we arrive at $p_{0} \in \dstates \cap \initialStates$ and we have that $\run' = p_{0} p_{1} \cdots p_{k} d_{k + 1} d_{k + 2} \cdots$, or at some step $i$ the former case must occur, that is, the only predecessor of $\vertex{p_{i}}{i}$ is $\vertex{N_{i - 1}}{i - 1}$ with $\emptyset \neq N_{i - 1} \subseteq 2^{\nstates}$.
	By Definition~\ref{def:LDBAcodeterministicDAGandPriorityFunction}, it follows that there is $n_{i - 1} \in N_{i - 1}$ such that $p_{i} \in \jtrans(n_{i - 1}, \wordletter{w}{i - 1})$;
	still going backward, we can choose $n_{i - 2} \in N_{i - 2}$, $n_{i - 3} \in N_{i - 3}$ until $n_{0} \in N_{0} = \nstates \cap \initialStates$ such that $n_{j + 1} \in \ntrans(n_{j}, \wordletter{w}{j})$ for each $0 \leq j < i - 1$.
	The corresponding run $\run'$ is $\run' = n_{0} n_{1} \cdots n_{i - 1} p_{i} p_{i+1} \cdots p_{k} d_{k + 1} d_{k + 2}$.
	Clearly, in both cases, we have that $\run$ and $\run'$ merge at $d_{k + 1}$.
	Moreover, we have that $\mappingRunBranch(\run') \neq \mappingRunBranch(\run)$ since $p_{k} \neq N_{k}$ and $p_{k} \neq d_{k}$ depending on whether the missing edge that blocked $\mappingRunBranch(\run)$ is $(\vertex{N_{k}}{k}, \vertex{d_{k + 1}}{k + 1})$ or $(\vertex{d_{k}}{k}, \vertex{d_{k + 1}}{k + 1})$, respectively.
	
	If $\mappingRunBranch(\run')$ is an $\omega$-branch, the proof is complete.
	Otherwise, for $\mappingRunBranch(\run')$ also being a finite branch, we just repeat the above analysis and get another run $\run''$ that merges with $\run'$; 
	this also means that $\run$ merges with $\run''$.
	Note that the step at which $\run''$ and $\run'$ merge is after the step at which $\run'$ and $\run$ have merged;
	this is a direct consequence of the fact that $\mappingRunBranch(\run)$ was blocked at level $k$, just before reaching $\vertex{d_{k + 1}}{k + 1}$, while $\mappingRunBranch(\run)$ was able to reach $\vertex{d_{k + 1}}{k + 1}$, so the level at which $\mappingRunBranch(\run')$ is blocked must be at least $k + 1$.
	Since the number of merges we can perform is bounded by the priority of the vertex $\vertex{d_{k + 1}}{k + 1}$, we must eventually merge with a run $\run^{*}$ such that $\mappingRunBranch(\run^{*})$ is an $\omega$-branch.
	
	By Definition~\ref{def:LDBAcodeterministicDAGandPriorityFunction}, we have that the priority of deterministic vertices is preserved by the edges between them;
	as we have seen above, the omission of an edge corresponds to a merge of the runs due to the fact that the edge should connect a vertex $v_{s}$ with a vertex $v_{t}$ with $\priority(v_{t}) < \priority(v_{s})$, with the latter due to the presence of a vertex $p_{s} \in \dvertices$ such that $(p_{s}, v_{t}) \in \edges$, thus $\priority(v_{t}) = \priority(p_{s})$.
	Since the priority is a function with codomain $\setnocond{1, 2, \cdots, \size{\dstates} + 1}$ and $\dstates$ is a finite set, we can only merge at most $\size{\dstates} + 1$ many times. 
	After all these merges, no other merge is possible otherwise the corresponding priority would be at most $0$, which is impossible.
	Let $\run^{*}$ be the last run with which $\run$ merges; 
        $\run^{*}$ ensures that $\mappingRunBranch(\run^{*})$ is an $\omega$-branch, as required by the statement of the lemma.
	
	Suppose, for the sake of contradiction, that $\mappingRunBranch(\run^{*})$ is still finite.
	This means that the last vertex of $\mappingRunBranch(\run^{*})$ is a deterministic vertex, because the first merge of $\run$ and $\run'$ happened at $d_{k + 1} \in \dstates$, so $\mappingRunBranch(\run')$ blocked in a deterministic vertex, and each of the following blocks happened at levels higher and higher, which can only be deterministic as well.
	Since the last vertex $\vertex{d_{z}}{z}$ of $\mappingRunBranch(\run^{*})$ is deterministic, the only motivation for this block is the presence of another vertex $\vertex{p_{z}}{z}$ such that $\priority(\vertex{p_{z}}{z}) < \priority(\vertex{d_{z}}{z})$ and $\dtrans(d_{z}, \wordletter{w}{z}) = \dtrans(p_{z}, \wordletter{w}{z})$.
	By the same reasoning as above, we can identify a run $\run_{z}$ of $\aut$ over $w$ such that $\wordletter{\run_{z}}{z} = p_{z}$ and $\wordletter{\run_{z}}{z + 1} = \dtrans(p_{z}, \wordletter{w}{z}) = \dtrans(d_{z}, \wordletter{w}{z}) = \wordletter{\run^{*}}{z + 1}$, contradicting the fact that $\run^{*}$ is the last possible run $\run$ can merge with.
\end{proof}

By Lemma~\ref{lem:LDBAPropertiesOfCodeterministicDAG}, Property~\ref{lem:LDBAPropertiesOfCodeterministicDAG:numberOmegaBranches}, there are at most $\size{\states}$ $\omega$-branches in $\lcodagAW{\aut}{w}$; 
let $\hat{\run}_{1}, \cdots, \hat{\run}_{n}$ be such $\omega$-branches and, for each of them, $j_{m} \in \naturals$ be the level at which $\hat{\run}_{m}$ enters $\dvertices$, that is, the branch $\hat{\run}_{m}$ is as follows:
\[
    \hat{\run}_{m} = \vertex{N_{0}}{0} \cdots \vertex{N_{j_{m} - 1}}{j_{m} - 1} \vertex{d_{j_{m}}}{j_{m}} \vertex{d_{j_{m} + 1}}{j_{m} + 1} \cdots
\] 
where we have $N_{0} = \initialStates \cap \nstates$ and $N_{i + 1} = \ntrans(N_{i}, \wordletter{w}{i})$ for each $i < j_{m} - 1$, as well as $d_{j_{m}} \in \jtrans(N_{j_{m} - 1}, \wordletter{w}{j_{m} - 1})$ and $d_{i + 1} = \dtrans(d_{i}, \wordletter{w}{i})$ for each $i \geq j_{m}$. 
Then $\hat{\run}_{m}$ is the representative of the run $\run = n_{0} n_{1} \cdots n_{j_{m} - 1} d_{j_{m}} d_{j_{m} + 1} \cdots$ where $n_{i} \in N_{i}$ for each $0 \leq i < j_{m}$, $n_{j + 1} \in \ntrans(n_{j}, \wordletter{w}{j})$ for each $0 \leq j < j_{m} - 1$, $d_{j_{m}} \in \jtrans(N_{j_{m} - 1}, \wordletter{w}{j_{m} - 1})$, and $d_{i + 1} = \dtrans(d_{i}, \wordletter{w}{i})$ for each $i \geq j_{m}$. 
Note that $j_{m}$ can also be $0$; 
in this case, it means that $\hat{\run}_{m}$ is just $\hat{\run}_{m} = \vertex{d_{0}}{0} \vertex{d_{1}}{1} \cdots$, thus $d_{0} \in \initialStates \cap \dstates$ since $\hat{\run}_{m}$ is an initial branch.

Recall that $\stableLevel \geq 1$ is the stable level of $\lcodagAW{\aut}{w}$ such that all $\acc$-vertices after level $\stableLevel$ are finite (cf.\@ Lemma~\ref{lem:LDBAPropertiesOfCodeterministicDAG}, Property~\ref{lem:LDBAPropertiesOfCodeterministicDAG:stableLevel}). 

\begin{lemma}
\label{lem:LDBAlanguageSizeComplementNBA:saferun}
	Given an LDBA~$\aut$, an $\omega$-word $w \in \infwords$, and the NBA~$\aut^{c}$ constructed according to Definition~\ref{def:LDBAnsbcComplementation}, let 
	\[
	    \run^{c} = R_{0} R_{1} \cdots R_{\macrorunJump} (N_{\macrorunJump + 1}, S_{\macrorunJump + 1}, B_{\macrorunJump + 1}, C_{\macrorunJump + 1}) (N_{\macrorunJump + 2}, S_{\macrorunJump + 2}, B_{\macrorunJump + 2}, C_{\macrorunJump + 2}) \cdots
	\] 
	be the macrorun of $\aut^{c}$ over $w$ jumping from the initial phase to the accepting phase exactly at step $\macrorunJump = \max \setnocond{\stableLevel + 1, j_{1}, j_{2}, \cdots, j_{n}}$.
	Then, for each run $\run$ of $\aut$ over $w$ such that $\wordletter{\run}{k} \in \dstates$ for some $k \in \naturals$ and $\mappingRunBranch(\run)$ is an $\omega$-branch, it holds that $\wordletter{\run}{i} \in S_{i}$ for each $i > \macrorunJump$.
\end{lemma}
\begin{proof}
	By Definition~\ref{def:LDBAnsbcComplementation}, at the jump from $R_{\macrorunJump}$ to $(N_{\macrorunJump + 1}, S_{\macrorunJump + 1}, B_{\macrorunJump + 1}, C_{\macrorunJump + 1})$, the states in $R_{\macrorunJump}$ are split into the pseudo-macrostate $(N_{\macrorunJump}, S_{\macrorunJump}, B_{\macrorunJump}, C_{\macrorunJump}) = (R_{\macrorunJump} \cap \nstates, (R_{\macrorunJump} \cap \dstates) \setminus{\acc}, R_{\macrorunJump} \cap \acc, \emptyset)$.
	
	Consider now the $\omega$-branches of $\lcodagAW{\aut}{w}$; 
	let the set of vertices at level $\macrorunJump$ in the $\omega$-branches be $\vertices_{\macrorunJump} = \setcond{\wordletter{\branch}{\macrorunJump}}{\text{$\branch$ is an $\omega$-branch of $\lcodagAW{\aut}{w}$}}$. 
	This implies that $\vertices_{\macrorunJump}$ can be partitioned into the sets of deterministic $\vertices_{\macrorunJump} \cap \dvertices = \setnocond{\vertex{s_{1}}{\macrorunJump} \vertex{s_{2}}{\macrorunJump} \cdots \vertex{s_{n}}{\macrorunJump}}$ and nondeterministic $\vertices_{\macrorunJump} \cap \nvertices = \setnocond{\vertex{N_{\macrorunJump}}{\macrorunJump}}$ vertices;
	note that at most one between these two sets can be empty.
	
	Let $\run$ be a run of $\aut$ over $w$ such that $\wordletter{\run}{k} \in \dstates$ for some $k \in \naturals$ and $\mappingRunBranch(\run)$ is an $\omega$-branch.
	This implies that $\wordletter{\run}{\macrorunJump} \in \setnocond{s_{1}, \cdots, s_{n}}$. 
	In fact, by definition of $M$ (cf.\@ Definition~\ref{def:LDBAlanguageSizeComplementNBA:mappingRepresentation}), we have that $\wordletter{\mappingRunBranch(\run)}{i} = \vertex{\wordletter{\run}{\macrorunJump}}{i}$ for all $i \geq k$; 
	since $\mappingRunBranch(\run)$ is an $\omega$-branch by hypothesis, the definition of $\macrorunJump$ implies that $k \leq \macrorunJump$, thus $\wordletter{\mappingRunBranch(\run)}{\macrorunJump} \in \vertices_{\macrorunJump} \cap \dvertices = \setnocond{\vertex{s_{1}}{\macrorunJump} \vertex{s_{2}}{\macrorunJump} \cdots \vertex{s_{n}}{\macrorunJump}}$, hence $\wordletter{\run}{\macrorunJump} \in \setnocond{s_{1}, s_{2}, \cdots, s_{n}}$.
	
	Since $\stableLevel$ is the stable level of $\lcodagAW{\aut}{w}$ and $\macrorunJump > \stableLevel$, it follows that 
	every $\omega$-branch does not visit any $\acc$-vertex after level $\stableLevel$. 
	This implies that $\setnocond{s_{1}, s_{2}, \cdots, s_{n}} \cap \acc = \emptyset$, thus $\setnocond{s_{1}, s_{2}, \cdots, s_{n}} \subseteq S_{\macrorunJump}$. 
	Moreover, by combining the previous two results, we have that $\wordletter{\run}{\macrorunJump} \in S_{\macrorunJump}$.
	In addition, we have that $\wordletter{\run}{i} \in S_{i}$ holds for each $i \geq \macrorunJump$, as it can be easily shown by induction: 
	we have just seen the base case $i = \macrorunJump$;
	for the inductive case, suppose that $\wordletter{\run}{i} \in S_{i}$ holds for $i \geq \macrorunJump$.
	The, by definition of $\run$ and $\trans$, we have that $\wordletter{\run}{i + 1} = \dtrans(\wordletter{\run}{i}, \wordletter{w}{i})$.
	Since $\wordletter{\run}{i + 1} \notin \acc$ is a direct consequence of the fact that $i \geq \macrorunJump > \stableLevel$, $\wordletter{\mappingRunBranch(\run)}{i} = \vertex{\wordletter{\run}{\macrorunJump}}{i}$, and $\mappingRunBranch(\run)$ does not visit any $\acc$-vertex after level $\stableLevel$, it follows that $\wordletter{\run}{i + 1} \in S_{i + 1}$ as required, since $S_{i + 1} = \dtrans(S_{i}, \wordletter{w}{i}) \setminus \acc$.
\end{proof}

\begin{lemma}
\label{lem:LDBAlanguageSizeComplementNBA:breakpointEventuallyEmpty}
	Given an LDBA~$\aut$, an $\omega$-word $w \in \infwords$, and the NBA~$\aut^{c}$ constructed according to Definition~\ref{def:LDBAnsbcComplementation}, let 
	\[
	    \run^{c} = R_{0} R_{1} \cdots R_{\macrorunJump} (N_{\macrorunJump + 1}, S_{\macrorunJump + 1}, B_{\macrorunJump + 1}, C_{\macrorunJump + 1}) (N_{\macrorunJump + 2}, S_{\macrorunJump + 2}, B_{\macrorunJump + 2}, C_{\macrorunJump + 2}) \cdots
	\] 
	be the macrorun of $\aut^{c}$ over $w$ jumping from the initial phase to the accepting phase exactly at step $\macrorunJump = \max \setnocond{\stableLevel + 1, j_{1}, j_{2}, \cdots, j_{n}}$.
	Then, for each run $\run$ of $\aut$ over $w$ such that $\wordletter{\run}{\runEntersB} \in B_{\runEntersB}$ for some $\runEntersB > \macrorunJump$ with $B_{\runEntersB - 1} = \emptyset$, it holds that $\mappingRunBranch(\run)$ is finite.
\end{lemma}
\begin{proof}
	Consider the usual pseudo-macrostate $(N_{\macrorunJump}, S_{\macrorunJump}, B_{\macrorunJump}, C_{\macrorunJump}) = (R_{\macrorunJump} \cap \nstates, (R_{\macrorunJump} \cap \dstates) \setminus{\acc}, R_{\macrorunJump} \cap \acc, \emptyset)$ used to split the states in $R_{\macrorunJump}$ for the jump at step $\macrorunJump$ from the initial to the accepting phase. 
	
	Let $\run$ be a run of $\aut$ over $w$ such that $\wordletter{\run}{\runEntersB} \in B_{\runEntersB}$ and $B_{\runEntersB - 1} = \emptyset$ for some $\runEntersB > \macrorunJump$.
	We can distinguish two cases, depending on whether $\run$ enters $B$ just after the jump in $\run^{c}$.
	\begin{description}
	\item[Case $\runEntersB = \macrorunJump + 1$.]
		In this case, the set $B_{\runEntersB - 1} = B_{\macrorunJump}$ is taken from the pseudo-macrostate $(N_{\macrorunJump}, S_{\macrorunJump}, B_{\macrorunJump}, C_{\macrorunJump})$ that is used in the definition of $\jtrans^{c}$.
	    Since $B_{\macrorunJump} = \emptyset$ by assumption, by Definition~\ref{def:LDBAnsbcComplementation} we have that 
	    $B_{\macrorunJump + 1} = (\dtrans(C_{\macrorunJump}, \wordletter{w}{\macrorunJump}) \cup
	    \jtrans(N_{\macrorunJump}, \wordletter{w}{\macrorunJump}) \cup
	    (\dtrans(S_{\macrorunJump}, \wordletter{w}{\macrorunJump}) \cap \acc)) \setminus{S_{\macrorunJump + 1}}$.
	    Given that by hypothesis we have that $\wordletter{\run}{\macrorunJump + 1} \in B_{\macrorunJump + 1}$, this implies that $\wordletter{\run}{\macrorunJump + 1} \notin S_{\macrorunJump + 1} = \dtrans(S_{\macrorunJump}, \wordletter{w}{\macrorunJump}) \setminus \acc$, so $\wordletter{\run}{\macrorunJump + 1}$ must belong to at least one of the sets $\dtrans(C_{\macrorunJump}, \wordletter{w}{\macrorunJump})$, $\jtrans(N_{\macrorunJump}, \wordletter{w}{\macrorunJump})$, and $\dtrans(S_{\macrorunJump}, \wordletter{w}{\macrorunJump}) \cap \acc$.
	    Since $C_{\macrorunJump} = \emptyset$, we get that $\dtrans(C_{\macrorunJump}, \wordletter{w}{\macrorunJump}) = \emptyset$, so we must have $\wordletter{\run}{\macrorunJump + 1}  \in \jtrans(N_{\macrorunJump}, \wordletter{w}{\macrorunJump})$ or $\wordletter{\run}{\macrorunJump + 1}  \in \dtrans(S_{\macrorunJump}, \wordletter{w}{\macrorunJump}) \cap \acc$.
	    We analyze the two cases independently.
	    \begin{description}
		\item[Case $\wordletter{\run}{\macrorunJump + 1} \in \jtrans(N_{\macrorunJump}, \wordletter{w}{\macrorunJump})$.] 
			Assume, for the sake of contradiction, that $\mappingRunBranch(\run)$ is an $\omega$-branch. 
			By definition of $\mappingRunBranch$ (cf.\@ Definition~\ref{def:LDBAlanguageSizeComplementNBA:mappingRepresentation}), it follows that $\mappingRunBranch(\run)$ is of the form 
			\[
				\mappingRunBranch(\run) = \vertex{N_{0}}{0} \vertex{N_{1}}{1} \cdots \vertex{N_{\macrorunJump}}{\macrorunJump} \vertex{d_{\macrorunJump + 1}}{\macrorunJump + 1} \vertex{d_{\macrorunJump + 2}}{\macrorunJump + 2} \cdots,
			\] 
			thus $j_{\mappingRunBranch(\run)} = \macrorunJump + 1$.
			This, however, contradicts the definition of $\macrorunJump$ itself, namely, that $\macrorunJump = \max \setnocond{\stableLevel + 1, j_{1}, j_{2}, \cdots, j_{n}}$ where $j_{\mappingRunBranch(\run)} \in \setnocond{j_{1}, j_{2}, \cdots, j_{n}}$ since $\setnocond{j_{1}, j_{2}, \cdots, j_{n}}$ is the set of moments $j_{m}$ at which each $\omega$-branch $\branch_{m}$ enters $\dvertices$.
			Therefore, $\mappingRunBranch(\run)$ must be finite branch.
   
		\item[Case $\wordletter{\run}{\macrorunJump + 1} \in \dtrans(S_{\macrorunJump}, \wordletter{w}{\macrorunJump}) \cap \acc$.]
	        This means that $\wordletter{\run}{\runEntersB}$ is an accepting state. 
	        Suppose, for the sake of contradiction, that $\mappingRunBranch(\run)$ is an $\omega$-branch; 
	        by definition of $\mappingRunBranch$ we have that $\wordletter{\mappingRunBranch(\run)}{\macrorunJump + 1} = \vertex{\wordletter{\run}{\macrorunJump + 1}}{\macrorunJump + 1}$, that is, $\wordletter{\mappingRunBranch(\run)}{\macrorunJump + 1}$ is an $\acc$-vertex. 
	        This however contradicts the assumption that $\macrorunJump > \stableLevel$, where $\stableLevel$ is the stable level of $\lcodagAW{\aut}{w}$ such that all $\acc$-vertices after level $\stableLevel$ (hence, also $\wordletter{\mappingRunBranch(\run)}{\macrorunJump + 1}$) are finite. 
	        Thus $\mappingRunBranch(\run)$ can only be finite branch.
	    \end{description}
	    In both cases, we have that $\mappingRunBranch(\run)$ is a finite branch, as required.
     
	\item [Case $\runEntersB > \macrorunJump + 1$.]
		Consider the state $\wordletter{\run}{\macrorunJump}$;
		we have several cases depending on how $\wordletter{\run}{\macrorunJump}$ is classified in $(N_{\macrorunJump}, S_{\macrorunJump}, B_{\macrorunJump}, C_{\macrorunJump})$.
		For sure, $\wordletter{\run}{\macrorunJump} \notin C_{\macrorunJump}$, since $C_{\macrorunJump} = \emptyset$ by Definition~\ref{def:LDBAnsbcComplementation}.
		It is possible that $\wordletter{\run}{\macrorunJump} \in S_{\macrorunJump}$;
		in this case, it can only be that $\mappingRunBranch(\run)$ is a finite branch.
		Assume, for the sake of contradiction, that $\wordletter{\run}{\macrorunJump} \in S_{\macrorunJump}$ and that $\mappingRunBranch(\run)$ is an $\omega$-branch.
		Since $\macrorunJump > \stableLevel$, we have for each $i \geq \macrorunJump$ that $\wordletter{\mappingRunBranch(\run)}{i}$ is not an $\acc$-vertex thus, by definition of $\mappingRunBranch$, $\wordletter{\run}{i} \notin \acc$.
		By Lemma~\ref{lem:LDBAlanguageSizeComplementNBA:saferun}, we also have that $\wordletter{\run}{i} \in S_{i}$ for each $i > \macrorunJump$.
		By Definition~\ref{def:LDBAnsbcComplementation}, this implies that $\wordletter{\run}{i} \notin B_{i}$ for each $i > \macrorunJump$; 
		in particular we have that $\wordletter{\run}{\runEntersB} \notin B_{\runEntersB}$ since $\runEntersB > \macrorunJump + 1$, against the hypothesis that $\wordletter{\run}{\runEntersB} \in B_{\runEntersB}$. 
		Therefore, it must be the case that if $\wordletter{\run}{\macrorunJump} \in S_{\macrorunJump}$, then $\mappingRunBranch(\run)$ is finite.
		
		Suppose that $\wordletter{\run}{\macrorunJump} \in B_{\macrorunJump}$;
		then we have that either $\wordletter{\run}{i} \in B_{i}$ for each $\macrorunJump < i < \runEntersB - 1$ or $\mappingRunBranch(\run)$ is a finite branch.
		Suppose that it is not the case that $\wordletter{\run}{i} \in B_{i}$ for each $\macrorunJump < i < \runEntersB - 1$: 
		this means that there is $\macrorunJump < i < \runEntersB - 1$ such that $\wordletter{\run}{i - 1} \in B_{i - 1}$ but $\wordletter{\run}{i} \notin B_{i}$.
		Since $\wordletter{\run}{i} = \dtrans(\wordletter{\run}{i - 1}, \wordletter{w}{i - 1})$, by Definition~\ref{def:LDBAnsbcComplementation} it follows that $\wordletter{\run}{i} \notin B_{i}$ can only be justified by the fact that $\wordletter{\run}{i} \in S_{i}$.
		Since $\macrorunJump < i < \runEntersB - 1$, by the same argument by contradiction we used before for the case $\wordletter{\run}{\macrorunJump} \in S_{\macrorunJump}$ we have that $\mappingRunBranch(\run)$ is finite, as required.
		
		Assume that $\wordletter{\run}{\macrorunJump} \in B_{\macrorunJump}$ and that $\wordletter{\run}{i} \in B_{i}$ for each $\macrorunJump < i < \runEntersB - 1$.
		By hypothesis, we have that $\wordletter{\run}{\runEntersB - 1} \notin B_{\runEntersB - 1}$; 
		since $\wordletter{\run}{\runEntersB - 2} \in B_{\runEntersB - 2}$, as before this implies that $\wordletter{\run}{\runEntersB - 1} \in S_{\runEntersB - 1}$.
		By the same argument by contradiction we used before for the case $\wordletter{\run}{\macrorunJump} \in S_{\macrorunJump}$ we have that $\mappingRunBranch(\run)$ is finite, as required.

		The last case is $\wordletter{\run}{\macrorunJump} \in N_{\macrorunJump}$.
		Since $\wordletter{\run}{\runEntersB} \in B_{\runEntersB}$ and $B_{\runEntersB - 1} = \emptyset$, there must be $ \macrorunJump < \runJumped < \runEntersB$ such that $\wordletter{\run}{\runJumped - 1} \in N_{\runJumped - 1}$ while $\wordletter{\run}{\runJumped} \notin N_{\runJumped}$, thus $\wordletter{\run}{\runJumped} \in \dstates$;
		in particular, we have that $\wordletter{\run}{\runJumped} \in \jtrans(\wordletter{\run}{\runJumped - 1}, \wordletter{w}{\runJumped - 1})$.
		Moreover, by the fact that $\aut$ is an LDBA, for each $i \geq \runJumped$ we have that $\wordletter{\run}{i} \notin N_{i}$. 
		For the same motivation as before, if $\wordletter{\run}{\runJumped} \in S_{\runJumped}$, or in general $\wordletter{\run}{i} \in S_{i}$ for some $\runJumped \leq i < \runEntersB - 1$, then $\mappingRunBranch(\run)$ is finite.
		
		Assume that $\wordletter{\run}{\runJumped} \notin S_{\runJumped}$;
		since $\wordletter{\run}{\runJumped} \notin N_{\runJumped}$, then either $\wordletter{\run}{\runJumped} \in B_{\runJumped}$ or $\wordletter{\run}{\runJumped} \in C_{\runJumped}$.
		If $\wordletter{\run}{\runJumped} \in B_{\runJumped}$, then there is $\runJumped < k < \macrorunJump - 1$ such that $\wordletter{\run}{k} \in B_{k}$ and $\wordletter{\run}{k + 1} \notin B_{k + 1}$; 
		such a $k$ must exist since by hypothesis $B_{\macrorunJump - 1} = \emptyset$.
		As in all other cases, $\wordletter{\run}{k + 1} \notin B_{k + 1}$ can only be due to  $\wordletter{\run}{k + 1} \in S_{k + 1}$, so $\mappingRunBranch(\run)$ is finite.

		If $\wordletter{\run}{\runJumped} \in C_{\runJumped}$, then, for the sake of contradiction, suppose that $\mappingRunBranch(\run)$ is an $\omega$-branch.
		Since $\mappingRunBranch(\run)$ is an $\omega$-branch, it follows that $j_{\mappingRunBranch(\run)} = \runJumped > \macrorunJump$; 
		this, however, contradicts the fact that $\macrorunJump = \max \setnocond{\stableLevel + 1, j_{1}, j_{2}, \cdots, j_{n}}$ where $j_{\mappingRunBranch(\run)} \in \setnocond{j_{1}, j_{2}, \cdots, j_{n}}$ since $\setnocond{j_{1}, j_{2}, \cdots, j_{n}}$ is the set of moments $j_{m}$ at which each $\omega$-branch $\branch_{m}$ enters $\dvertices$.
		Therefore, $\mappingRunBranch(\run)$ must be finite branch.
	\end{description}
	This completes the proof of the lemma, since we have shown that $\mappingRunBranch(\run)$ is a finite branch in all possible cases.
\end{proof}

\begin{lemma}
\label{lem:LDBAlanguageSizeComplementNBA:breakpointFirstBNotEmpty}
	Given an LDBA~$\aut$, an $\omega$-word $w \in \infwords$, and the NBA~$\aut^{c}$ constructed according to Definition~\ref{def:LDBAnsbcComplementation}, let 
	\[
	    \run^{c} = R_{0} R_{1} \cdots R_{\macrorunJump} (N_{\macrorunJump + 1}, S_{\macrorunJump + 1}, B_{\macrorunJump + 1}, C_{\macrorunJump + 1}) (N_{\macrorunJump + 2}, S_{\macrorunJump + 2}, B_{\macrorunJump + 2}, C_{\macrorunJump + 2}) \cdots
	\] 
	be the macrorun of $\aut^{c}$ over $w$ jumping from the initial phase to the accepting phase exactly at step $\macrorunJump = \max \setnocond{\stableLevel + 1, j_{1}, j_{2}, \cdots, j_{n}}$.
	Then, for each run $\run$ of $\aut$ over $w$ such that $\wordletter{\run}{\macrorunJump + 1} \in B_{\macrorunJump + 1}$, it holds that $\mappingRunBranch(\run)$ is finite.
\end{lemma}
\begin{proof}
    There are two cases, depending on whether $B_{\macrorunJump} = R_{\macrorunJump} \cap \acc$ is empty.
    If $B_{\macrorunJump} = \emptyset$, then the result follows immediately by Lemma~\ref{lem:LDBAlanguageSizeComplementNBA:breakpointEventuallyEmpty}, so suppose that $B_{\macrorunJump} \neq \emptyset$.
    
    The hypothesis $B_{\macrorunJump} \neq \emptyset$, by Definition~\ref{def:LDBAnsbcComplementation}, implies that $B_{\macrorunJump + 1} = \dtrans(B_{\macrorunJump}, \wordletter{w}{\macrorunJump}) \setminus S_{\macrorunJump + 1}$, thus $\wordletter{\run}{\macrorunJump} \in \acc$.
    Similarly to the proof of Lemma~\ref{lem:LDBAlanguageSizeComplementNBA:breakpointEventuallyEmpty}, suppose, for the sake of contradiction, that $\mappingRunBranch(\run)$ is an $\omega$-branch; 
	by definition of $\mappingRunBranch$ we have that $\wordletter{\mappingRunBranch(\run)}{\macrorunJump} = \vertex{\wordletter{\run}{\macrorunJump}}{\macrorunJump}$, that is, $\wordletter{\mappingRunBranch(\run)}{\macrorunJump}$ is an $\acc$-vertex. 
	This however contradicts the assumption that $\macrorunJump > \stableLevel$, where $\stableLevel$ is the stable level of $\lcodagAW{\aut}{w}$ such that all $\acc$-vertices after level $\stableLevel$ (hence, also $\wordletter{\mappingRunBranch(\run)}{\macrorunJump}$) are finite. 
	Thus $\mappingRunBranch(\run)$ can only be finite branch.
\end{proof}

We have now all the results needed to prove that for each $w \notin \lang{\aut}$ it holds that $w \in \lang{\aut^{c}}$.
Recall that we are considering the macrorun
\[
	\run^{c} = R_{0} R_{1} \cdots R_{\macrorunJump} (N_{\macrorunJump + 1}, S_{\macrorunJump + 1}, B_{\macrorunJump + 1}, C_{\macrorunJump + 1}) (N_{\macrorunJump + 2}, S_{\macrorunJump + 2}, B_{\macrorunJump + 2}, C_{\macrorunJump + 2}) \cdots
\]
where $R_{0} = \initialStates$ and $\macrorunJump = \max \setnocond{\stableLevel + 1, j_{1}, j_{2}, \cdots, j_{n}}$ as our candidate accepting run over $w$. 
By means of the lemmas presented above, we can prove that, after the step $\macrorunJump$, the $B$ component of $\run^{c}$ becomes empty again and again, thus $\run^{c}$ is accepting so $w \in \lang{\aut^{c}}$ as required.

If $B_{i} = \emptyset$ for each $i > \macrorunJump$, then the claim that the $B$ component of $\run^{c}$ becomes empty again and again trivially holds. 
Suppose now that $B_{i} \neq \emptyset$ for some $i > \macrorunJump$; 
let $\runEntersB = \min\setcond{i > \macrorunJump}{B_{i} \neq \emptyset}$: 
this implies that $B_{\runEntersB}$ is the first occurrence of a non-empty set for the component $B$.
By inspecting the definition of $\trans^{c}$ in Definition~\ref{def:LDBAnsbcComplementation}, we have that for each $q \in B_{\runEntersB}$ there is a run $\run_{q}$ of $\aut$ over $w$ such that $\wordletter{\run_{q}}{\runEntersB} = q$.
If $\runEntersB = \macrorunJump + 1$, then by Lemma~\ref{lem:LDBAlanguageSizeComplementNBA:breakpointFirstBNotEmpty} we know that $\mappingRunBranch(\run_{q})$ is finite.
If $\runEntersB > \macrorunJump + 1$, then we derive that $B_{\runEntersB - 1} = \emptyset$, thus  Lemma~\ref{lem:LDBAlanguageSizeComplementNBA:breakpointEventuallyEmpty} implies that $\mappingRunBranch(\run_{q})$ is finite.

Since in all cases $\mappingRunBranch(\run_{q})$ is a finite branch, Lemma~\ref{lem:LDBAlanguageSizeComplementNBA:merge} ensures that $\run_{q}$ eventually merges with a run $\run^{*}_{q}$ such that $\mappingRunBranch(\run^{*}_{q})$ is an $\omega$-branch.
Lemma~\ref{lem:LDBAlanguageSizeComplementNBA:saferun} ensures that $\run^{*}_{q}$ is such that $\wordletter{\run^{*}_{q}}{i} \in S_{i}$ for each $i \geq \macrorunJump$.
Since $\run_{q}$ and $\run^{*}_{q}$ merge, there exists $m_{q} \geq \macrorunJump$ such that $\wordletter{\run_{q}}{m_{q}} = \wordletter{\run^{*}_{q}}{m_{q}}$, thus $\wordletter{\run_{q}}{i} \in S_{i}$ for each $i \geq m_{q}$. 
By Definition~\ref{def:LDBAnsbcComplementation}, this implies that $\wordletter{\run_{q}}{i} \notin B_{i}$ for each $i \geq m_{q}$.
Since this happens for all $q \in B_{\runEntersB}$, we get that $B_{m} = \emptyset$, where $m = \max\setcond{m_{q}}{q \in B_{\runEntersB}}$;
this is the case because the component $B$ never tracks new runs until all runs $\run_{q}$ have left $B$ to reach the component $S$.
Given the arbitrary choice of $\runEntersB > \macrorunJump$, this means that the $B$ component of $\run^{c}$ becomes empty again and again, that is, $\run^{c}$ is accepting so $w \in \lang{\aut^{c}}$ as required.

\subsection{Proof of the size of \texorpdfstring{$\aut^{c}$}{the complement of A}}
\label{app:LDBAlanguageSizeComplementNBA:size}

We need to prove that the NBA~$\aut^{c}$ constructed according to Definition~\ref{def:LDBAnsbcComplementation} has at most $2^{\size{\states}} + 2^{\size{\nstates}} \times 3^{\size{\acc}} \times 4^{\size{\dstates \setminus \acc}} \in \bigO(4^{\size{\states}})$ states. 

By the construction of $\aut^{c}$ given in Definition~\ref{def:LDBAnsbcComplementation}, in the worst case we have $2^{\states}$ macrostates for the initial phase;
for the accepting phase, we have at most $2^{\size{\nstates}} \times 3^{\size{\acc}} \times 4^{\size{\dstates \setminus {\acc}}}$ macrostates of the form $(N, S, B, C)$.
This is due to the following motivations:
\begin{itemize}
\item 
	for each $n \in \nstates$, we have that either $n \in N$ or $n \notin N$, for a total of $2^{\size{\nstates}}$ possible choices for the states in $\nstates$; 
\item
	for each $f \in \acc$, we have that $f \notin S$.
	Regarding $B$ and $C$, we have that either $f \in B$, or $f \in C$, or $f \notin B \cup C$. 
	Note that by construction we have $B \cap C = \emptyset$, as well as $S \cap C = S \cap B = \emptyset$.
	This means that we have a total of $3^{\size{\acc}}$ possible choices for the states in $\acc$;
\item
	for each $d \in \dstates \setminus \acc$, we have that either $d \in S$, or $d \in C$, or $d \in B$, or $d \notin S \cup B \cup C$, for a total of $4^{\size{\dstates \setminus \acc}}$ possible choices for the states in $\dstates \setminus \acc$.
\end{itemize}
This means that the overall number of macrostates is $2^{\size{\states}} + 2^{\size{\nstates}} \times 3^{\size{\acc}} \times 4^{\size{\dstates \setminus {\acc}}} \leq 2^{\size{\states}} + 4^{\size{\nstates}} \times 4^{\size{\acc}} \times 4^{\size{\dstates \setminus {\acc}}} = 2^{\size{\states}} + 4^{\size{\nstates} + \size{\acc} + \size{\dstates \setminus {\acc}}} = 2^{\size{\states}} + 4^{\size{\states}} \in \bigO(4^{\size{\states}})$, as required.

\end{document}